\newcommand{\DRAFT}{0}
\title[Self-consistent Coulomb interactions]{Self-consistent Coulomb interactions for \\ machine learning interatomic potentials}
\author{Jack Thomas}
\author{Will Baldwin}
\author{Gabor Csanyi}
\author{Christoph Ortner}
\date{\today}
\newcommand{\theclassifications}{\texttt{65E05}; \texttt{74E15}; \texttt{81V45}; \texttt{81V70}.}
\newcommand{\thekeywords}{Coulomb interactions; machine learning; electronic structure; tight binding; interatomic potentials; locality; body-order expansion.}
\newcommand{\theaddresses}{\texttt{jack.thomas@universite-paris-saclay.fr.} Laboratoire de Mathématiques d’Orsay, Université Paris--Saclay, CNRS, 91405, Orsay, France.\\
\texttt{wjb48@cam.ac.uk}, \texttt{gc121@cam.ac.uk}. Engineering Laboratory, University of Cambridge, Trumpington Street, Cambridge, CB2 1PZ, United Kingdom  \\ 
\texttt{ortner@math.ubc.ca}. Department of Mathematics, University of British Columbia, Vancouver, Canada.}
\renewcommand{\leq}{\leqslant}
\renewcommand{\geq}{\geqslant}
\renewcommand{\above}[2]{\genfrac{}{}{0pt}{}{#1}{#2}}
\DeclareMathOperator*{\argmin}{arg\,min}
\DeclareMathOperator*{\tr}{tr}
\DeclareMathOperator*{\Tr}{Tr}
\DeclareMathOperator*{\len}{len}
\DeclareMathOperator*{\dist}{dist}
\newcommand{\ep}{\varepsilon}
\newcommand{\Ham}{\mathcal{H}}
\newcommand{\vb}{\Xi}
\DeclareMathOperator*{\conv}{conv}
\newtheorem{theorem}{Theorem}
\newtheorem{lemma}[theorem]{Lemma}
\newtheorem{prop}[theorem]{Proposition}
\newtheorem{corollary}[theorem]{Corollary}
\theoremstyle{remark}\newtheorem{rem}{Remark}\newtheorem{as}{Assumption}
\newenvironment{remark}{\begin{rem}}{\hfill $\star$ \end{rem}}
\numberwithin{equation}{section}
\newcounter{listcounter}
\newwrite\tempfile
\newcommand{\addcomment}[1]{
    \immediate\write\tempfile{\unexpanded{\stepcounter{listcounter}\texttt{\hyperlink{\arabic{listcounter}}{\arabic{listcounter}}:} #1 \newline}}
}
\newcounter{listcounter2}
\newcommand{\comment}[4][1]{\ifthenelse{\equal{#1}{1}}{\stepcounter{listcounter2}\hypertarget{\arabic{listcounter2}}{}\addcomment{#2: #4}}{}\ifthenelse{\equal{\DRAFT}{1}}{\textcolor{#3}{\texttt{\textup{[#4]}}}}{}}
\let\oldcite\cite
\renewcommand{\cite}[2][]{\oldcite{#2}\ifthenelse{\equal{\DRAFT}{1}}{\marginpar{\textcolor{blue}{\texttt{\tiny \ifthenelse{\equal{#2}{}}{\jt{\ifthenelse{\equal{#2}{}}{\href{#1}{MISSING CITATION}}{MISSING CITATION}}}{\ifthenelse{\equal{#1}{}}{#2}{\href{#1}{#2}}}}}}}{}}
\let\oldcref\cref
\renewcommand{\cref}[1]{\ifthenelse{\equal{#1}{}}{\marginpar{\texttt{\tiny{\jt{MISSING REFERENCE}}}}}{\oldcref{#1}\ifthenelse{\equal{\DRAFT}{1}}{\marginpar{\textcolor{gray}{\texttt{\tiny{#1}}}}}{}}}
\begin{document}

\setcounter{page}{1}

\immediate\openout\tempfile=lists.tex
\immediate\write\tempfile{\noindent}

\begin{abstract}
   A ubiquitous approach to obtain transferable machine learning-based models of potential energy surfaces for atomistic systems is to decompose the total energy into a sum of local atom-centred contributions. 
   However, in many systems non-negligible long-range electrostatic effects must be taken into account as well. 
   We introduce a general mathematical framework to study how such long-range effects can be included in a way that (i) allows charge equilibration and (ii) retains the locality of the learnable atom-centred contributions to ensure transferability. 
   Our results give partial explanations for the success of existing machine learned potentials  that include equilibriation and provide perspectives how to design such schemes in a systematic way. 
   To complement the rigorous theoretical results, we describe a practical scheme for fitting the energy and electron density of water clusters.
\end{abstract}

\let\thefootnote\relax\footnote{
    \theaddresses\\
    \textit{2020 Mathematics Subject Classification:} \theclassifications\\
    \textit{Keywords and phrases:}~\thekeywords
}


\maketitle

\thispagestyle{empty}

\newcommand{\ifDRAFT}[1]{\ifthenelse{\equal{\DRAFT}{1}}{#1}{}}
\newcommand{\jt}[2][1]{\comment[#1]{JT}{magenta}{#2}}
\newcommand{\jtm}[2][0]{\marginpar{\tiny{\comment[#1]{JT}{magenta}{#2}}}}
\newcommand{\jtedit}[1]{\textcolor{magenta}{#1}}

\newcommand{\co}[1]{{\color{red} #1}}
\newcommand{\cco}[1]{{\small \color{purple} \tt [CO: #1]}}

\newcommand{\wb}[1]{{\color{olive} \tt [WB: #1]}}

\section{Introduction}


\jtm[0]{background} Electronic structure models are widely used to predict optical, magnetic, and mechanical properties of materials and molecules. Today, \textit{ab initio} methods, such as density functional theory (DFT) \cite{ParrYang1994,bk:finnis,HohenbergKohn1964,KohnSham1965}, are too computationally expensive for large-scale simulations (but are still a popular choice for the simulation of systems up to a few hundred atoms, for example), whereas empirical force fields remain useful for large system sizes and long timescales. 
The introduction of machine-learning (ML) methodology into this field offers the prospect of bridging the gap between \textit{ab initio} and empirical models in order to derive models with \textit{ab initio} accuracy but at a fraction of the computational cost, enabling a systematic extension of the predictive first principles approach beyond the electronic structure length-scale to which it has been limited up until recently \cite{Bartok2017,Musil2021,Butler2018,Drautz2019-er,Batzner2022:nequip,MACE2022}.

This is often motivated by invoking the nearsightedness principle of electronic matter (NEM) \cite{Prodan_nearsightedness}. NEM concerns an electron density which is the ground state of an external potential $v$, given a fixed chemical potential. The statement is that if $v$ is changed in some region $\Omega$, then the response of the electron density at point $x$ decays towards zero as $x$ moves away from $\Omega$. NEM therefore suggests that local machine learning models are effective so long as a change in geometry does not induce changes in \textit{potential} at some distant point. This is not the case when, for instance, a reorientation of a polar molecule leads to a change in external potential even at distant points. 
In order to account for electrostatic effects, a long-range pairwise term can be added to the standard machine learning energy contribution that maps local geometry to energy \cite{Bartok2010GaussianElectrons}. However, changes in the chemical environment may induce changes in the charge distribution at long-range (even if the local geometry is unchanged). It is therefore important in many systems to include electronic information directly into the machine learning framework. In this paper, we derive a mathematically rigorous scheme for including enhanced electronic information into machine learned interatomic potentials. In doing so, we go some way to justifying the ML charge equilibriation schemes which have been proposed in the literature, including the Fourth generation neural network potential \cite{Ko2021}, the Becke population neural network (BpopNN) \cite{Xie2020} and the self consistent field neural network (SCFNN) \cite{Gao2022}. 



\normalsize

\jtm[0]{locality in ML schemes}
To efficiently parameterise the complex many-body Born--Oppenheimer potential energy surface (PES) using machine-learning methodology, one must decompose it into lower dimensional components. In previous works, we established two groups of results of this kind, both for simplified models that explicitly replaced Coulomb with an exponentially localized Yukawa interaction: The total potential energy can be decomposed into a sum of atom-centred site energy contributions that may be chosen (with controllable error) to \textit{(i)} depend only on atoms within a finite cut-off radius $r_{\mathrm{cut}}$\cite{ChenOrtnerThomas2019:locality,Thomas2020:scTB}, and (\textit{ii}) have finite \textit{correlation order} $N$ \cite{ThomasChenOrtner2022:body-order}. The error committed in this approximation is exponentially small in both $r_{\mathrm{cut}}$ and $N$. 
That is, for an atomic system
$\bm r = \{\bm r_\ell\}_{\ell=1}^M \subset \mathbb R^d$, 
we may approximate the total energy $E(\bm r) = \sum_{\ell} E_\ell(\bm r)$ by: 
\begin{align}
    %
    %
    %
    %
    \label{eq:body-order-intro}
    E_{\ell}(\bm r) &\approx 
    V_0 + 
    \sum_{\above{k\not=\ell:}{r_{\ell k} < r_{\mathrm{cut}}}} V_1(\bm r_{\ell k}) + 
    \sum_{
    \above
        {k_1,k_2\not=\ell\colon}
        {r_{\ell k_i}<r_{\mathrm{cut}} \forall i}
    } 
    V_2(\bm r_{\ell k_1}, \bm r_{\ell k_2}) 
    + \cdots + 
    \sum_{
    \above
        {k_1,\dots,k_N \not=\ell\colon}
        {r_{\ell k_i}<r_\mathrm{cut} \forall i}
    } 
    V_N(\bm r_{\ell k_1},\dots, {\bm r}_{\ell k_N}), 
    %
    %
\end{align}
where $\bm r_{\ell k} \coloneqq \bm r_k - \bm r_\ell$
, $r_{\ell k} \coloneqq |\bm r_{\ell k}|$, and $V_n(\bm r_{\ell k_1},\dots,\bm r_{\ell k_n})$ is an $(n+1)$-\textit{body} (or $n$-\textit{correlation}) \textit{potential} modelling the interaction of a central atom $\ell$ and the $n$ neighbouring atoms $k_1,\dots,k_n$. 

Exponential convergence of \cref{eq:body-order-intro} in terms of both cut-off radius and correlation order~\cite{ChenOrtnerThomas2019:locality,Thomas2020:scTB, ThomasChenOrtner2022:body-order}
lends theoretical support to 
the empirical success of body-ordered approximations in popular machine learning schemes \cite{Drautz2019-er,Shapeev2016}. 
%
However, an underlying assumption in all of these results is the total screening of Coulomb interactions. While an appropriate simplification for a variety of systems, this assumption is a significant limitation of the results, and also considerably simplifies the mathematical analysis. 

\jtm[0]{ML + explicit electronic information}
In response, some machine learning schemes began to include an additional long-range electrostatic contribution to the total energy of the system. 
For example, in \cite{Artrith2011,Yao2018-ChemSci9}, the authors incorporate long-range interactions by training additional neural networks to produce atomic charges that either predict the \textit{ab initio} point charges \cite{Artrith2011} directly, or do so by learning the dipole moments \cite{Yao2018-ChemSci9}. 
However, these methods learn the long-range electrostatics from machine learning schemes with local descriptors and therefore cannot be expected to correctly adapt to changes in the chemical environment that result in non-local charge equilibration. 

\jtm[0]{LODE} 
In a different direction, the long-distance equivariant (LODE) framework \cite{Grisafi2019:LODE} builds on SOAP descriptors \cite{Bartok2013:SOAP} by incorporating the electrostatic potential corresponding to proxy densities to provide enhanced electronic information directly into the descriptors. These additional non-local descriptors are fixed, and are again not equilibriated in a way resembling the underlying physics.  

\jtm[0]{charge equilibriation } An alternative perspective is to revisit the classical electronegativity equalisation method (EEM) \cite{Mortier1985,Mortier1986} (or, charge equilibriation (QEq) \cite{Rappe1991-PhysChem95-QEq,Rick1994}), as well as improvements based on the same idea \cite{Chen2007-ChemPhysLett438-QTPIE, Wilmer2012-PhysChemLett3-EQeq}. One postulates an extended potential energy surface (PES) as a function of partial charges $q_\ell \coloneqq Z_\ell - p_\ell$ where $Z_\ell$ is the atomic species and $p_\ell$ is an electron population on atom $\ell$. This extended PES is minimised with respect to the charges, together with constraints on the total charge in the system, to produce a charge-equilibrated PES: 
\begin{align}
    \label{eq:QEq}
    E(\bm r) = \min_q E(\bm r, q)
    \quad \text{where} \quad 
    E(\bm r, q) &\coloneqq 
    \sum_{\ell} 
    \big[
        \ep_\ell + \chi_\ell q_\ell + U_\ell q_\ell^2
    \big]
    + \sum_{\ell<k} J_{\ell k} q_\ell q_k.
\end{align}
The first term of $E(\bm r, q)$ is the intra-atomic energy resulting from a Taylor series expansion of the individual atomic energies of the constituent elements with $\chi$ the electronegativity and $U$ the atomic hardness. 
These quantities may be written in terms of atomic ionisation potentials (IP) and electron affinities (EA), which traditionally were obtained from experimental data. 
The charges at different atomic positions interact through the Coulomb potential $J$. Since $E(\bm r, q)$ is a quadratic function of $q$,  minimising with respect to the charges, leads to a linear system of equations which may be solved subject to a constraint on the total charge of the system.
%

\jtm[0]{QEq + ML}
In reality, the parameters in \cref{eq:QEq} should be environment-dependent: 
$\chi_\ell = \chi_\ell(\bm r)$ and $U_{\ell} = U_\ell(\bm r)$. 
Exploiting the fact that the atomic electronegativities are local functions of the environment \cite{Ghasemi2015}, a natural approach is to parameterise them using a machine learning framework, leading to the charge equilibration via neural network technique (CENT) \cite{Ghasemi2015,Faraji2017}. The basic idea in this method can be represented in the following schematic:
\begin{align}
    \label{eq:Ghasemi}
    \{\bm r_{\ell k}\}_{k\not= \ell} 
    \mapsto_{\text{local, ML}} \chi_\ell(\bm r)
    \qquad \text{and} \qquad
    \chi \mapsto_{\textrm{linear}} q
    \mapsto_{\textrm{explicit}} E, \nabla E.
\end{align}
The first mapping in \cref{eq:Ghasemi} is approximated through a neural network architecture, whereas the simple quadratic form of \cref{eq:QEq} means the mapping from electronegativities to energies $E$ and forces $\nabla E$ can be efficiently and cheaply evaluated through solving a linear system of equations and evaluating explicit functions. This approach therefore builds on the simplicity of QEq schemes with a machine learning approach, leading to improved accuracy and transferrability. However, by its very nature, a quadratic-in-$q$ ansatz is too simple to fully model the underlying physics. For example, this scheme can only describe systems where the ground state electron density (and potential energy) is a smooth function of the geometry. This is not the case for systems which exhibit conical intersections - wherein the crossing in energies between two diabatic states leads to a discontinuous electron density as a function of geometry. Such conical intersections already occur in simple molecules\cite{Vieuxmaire_conical_intersection, Perun_CI_thymine}.


\jtm[0]{XPS}
In the Becke population neural network (BpopNN) \cite{Xie2020}, the electronegativity and atomic hardness are fixed, and an additional neural network correction is added to the total energy from \cref{eq:QEq} that depends on both the atomic positions and charges,  
\begin{align}
    \label{eq:XPS}    
    E(\bm r, q) 
    = \sum_{\ell} f_{Z_\ell}(\bm r, q)
    + \sum_{\ell} \big[
        \ep_\ell + \chi_{Z_\ell} q_\ell + U_{Z_\ell} q_\ell^2
    \big]
    + \sum_{\ell<k} J_{\ell k} q_\ell q_k
\end{align}
where $f_{Z_\ell}$ is an atomic neural network for the atomic species $Z_\ell$. Then, in contrast to \cref{eq:QEq}, the entire extended PES is approximated by targeting an \textit{ab initio} PES of the form 
$\min_{\Psi \to q} \Braket{\Psi | \Ham | \Psi}$ 
where $\Psi$ is an admissible wavefunction in this minimisation problem if the corresponding electron density $\rho_{\Psi}$ satisfies 
$q_\ell = Z_\ell - \int \rho_\Psi(x) \omega(x - \bm r_\ell) \mathrm{d}x$ 
for some chosen weight function $\omega$ concentrated around the origin. In practice, this is done using constrained DFT implemented in Q-Chem \cite{Shao2015:QChem}. Minimising the extended PES with respect to charges mimics the charge-equilibration in DFT models, and thus the electronic contributions are able to self-consistently adapt to the global environment. 

Thus, the overarching idea is to consider an extended PES as a function of the atomic configuration together with additional atom-centred features which provide a low-dimensional representation of the electron density (in BpopNN, the additional features are Becke populations). This model energy may then be minimised with respect to the additional features to provide a self-consistent PES. 
\jtm[0]{What we actually prove}
In this paper, we rigorously justify this approach by introducing atom-centred features $\widehat{v}$ describing the effective potential: 
%
\begin{align}
    \label{eq:extended-PES}
    E(\bm r, \widehat{v}) &= 
    \sum_{\ell} E_\ell(\bm r, \widehat{v}) +
    E_\mathrm{el}\big[\rho(\bm r, \widehat{v})\big] 
    \qquad \text{where} \qquad
    \rho(\bm r, \widehat{v}) = \sum_\ell \rho_\ell(\bm r, \widehat{v}) 
\end{align}
and $E_\mathrm{el}$ is an explicit long-range function of the electron density. 
We then show that the quantities 
$E_\ell, \rho_\ell$ 
are local functions of the extended variables 
$\{\bm r_{k}, \widehat{v}_k\}_k$
and can thus be approximated using a machine learning framework. In particular, we are able to approximate \cref{eq:extended-PES} with a body-ordered PES 
$E_N(\bm r, \widehat{v})$ 
analogous to \cref{eq:body-order-intro}, and show that  minimisers $\widehat{v}^\star$ to \cref{eq:extended-PES} (and corresponding energies) may be approximated with minimisers $\widehat{v}^\star_N$ to $E_N$ (and the corresponding approximate energies). The scheme is systematically improvable by choosing higher body-ordered approximations.


While this approach is convenient to obtain theoretical results that partially justify the BpopNN approach \cite{Xie2020}, we also present numerical experiments for a closely related but computationally more practical fixed point scheme. We build a machine learning based surrogate model describing $E_\ell$ and $\rho_\ell$ as functions of the extended variables $\{\bm r_k, \widehat{v}_k\}_k$. Then, instead of minimising the approximate energy as in the theoretical results, in practice, it is more convenient to approach the critical points of the energy by iteratively solving the corresponding Euler--Lagrange equation directly. We show convergence plots for water clusters as the number of features on each atomic site is increased.

The paper is organised as follows. In \cref{sec:el-struct}, we discuss popular Kohn--Sham models (\cref{sec:KS}) from which we derive corresponding tight-binding (\cref{sec:TB}) and machine learning  schemes (\cref{sec:ML}). In \cref{sec:results} we state that these resulting machine learning models can be decomposed as in \cref{eq:extended-PES}. Numerical experiments are presented in \cref{sec:numerics}. Proofs of the main results are collected together in \cref{sec:proofs}. A brief summary of the main notation used throughout this paper is contained in \Cref{sec:notation}.

\section{Electronic Structure Models}
\label{sec:el-struct}
In this section, we consider a sequence of approximations starting with Kohn--Sham density functional theory and resulting in machine learning schemes that provide a convenient framework for the theoretical results of this paper. When discussing these electronic structure models, we follow the majority of numerical studies and consider the \textit{canonical ensemble} for the electrons; the number of particles in the system, the volume, and Fermi-temperature are all fixed. 
However, since the rigorous results of this paper are stated and proved in the \textit{grand-canonical ensemble} (where the  chemical potential, volume, and Fermi-temperature are fixed), in Remarks~\ref{rem:GCE1}, \ref{rem:finite-temp-GCE-TB} and \ref{rem:GCE} we explain how these models are adapted to this setting. In \S\ref{sec:results} (Remark~\ref{rem:CE}) we explain how, in principle, one may extend the analysis of this paper to the canonical ensemble; a possible direction for future work.

\subsection{Kohn--Sham models} \label{sec:KS}
For simplicity of notation, we shall consider spin unpolarised systems with an even number of electrons which allows us to omit the spin variable in the following. We consider a system of $M$ atoms at positions 
$\bm r_1, \dots, \bm r_M \in \mathbb R^3$ 
with corresponding atomic charges $Z_1,\dots,Z_M$, and $N_{\mathrm{el}}$ electron pairs.
The corresponding \textit{one-particle density operator} of this system is a self-adjoint projector $\gamma$ on $L^2(\mathbb R^3)$ with $\mathrm{Tr}\, \gamma = N_{\mathrm{el}}$.
Since $\gamma$ is compact (as it is trace-class), $\gamma$ may be diagonalised in an orthonormal basis $\varphi_i = \ket{i}$, leading to the following expressions for $\gamma$ and the corresponding electron density $\rho_\gamma$: 
\[
    \gamma = \sum_{i=1}^{N_{\mathrm{el}}} \ket{i}\bra{i}, \qquad \rho_\gamma(x) \coloneqq 2\sum_{i=1}^{N_{\mathrm{el}}} |\varphi_i(x)|^2
\]
where $\bra{i}\in L^2(\mathbb R^3)^\star$ with $\bra{i} \phi \coloneqq (\varphi_i, \phi)_{L^2(\mathbb R^3)}$ and $\braket{i|j} \coloneqq (\varphi_i,\varphi_j)_{L^2(\mathbb R^3)} = \delta_{ij}$. The density clearly satisfies $\rho_\gamma \geq 0$ and $\int \rho_\gamma = 2N_{\mathrm{el}}$. 

The total energy of the system, in Kohn--Sham models, takes the form \cite{Anantharaman2009}
\begin{align}
    \label{eq:ab-initio}
    \mathcal E(\gamma) = 
    \mathrm{Tr} \big[ -\tfrac{1}{2}\Delta  \gamma\big]
    + \int \rho_\gamma V^\mathrm{nuc} 
    + \frac{1}{2}\iint \frac{\rho_\gamma(x)\rho_\gamma(y)}{|x-y|}\mathrm{d}x\mathrm{d}y  
    + E_\mathrm{xc}[\gamma]
\end{align}
where 
$V^\mathrm{nuc}(x) \coloneqq - \sum_i \frac{Z_i}{|x - \bm r_i|}$ 
is the potential generated by the nuclei and the kinetic energy is given by 
\[
    \mathrm{Tr} \big[ -\tfrac{1}{2}\Delta \gamma \big] \coloneqq 
    \frac{1}{2} \sum_{i=1}^{N_{\mathrm{el}}} \|\nabla \varphi_i\|_{L^2(\mathbb R^3)}^2 \in [0,\infty].
\]
Therefore, if $\gamma$ has finite kinetic energy, then 
$\sqrt{\rho_\gamma} \in H^1(\mathbb R^3)$ 
and the Sobolev embedding \cite{Aubin1982} implies
$\rho_\gamma \in L^1 \cap L^3(\mathbb R^3) \subset L^{6/5}(\mathbb{R}^3)$. 
The latter fact also ensures the Coulomb energy 
$\iint \frac
    {\rho_\gamma(x)\rho_\gamma(y)}
    {|x-y|}
\mathrm{d}x\mathrm{d}y$ 
is finite since $\iint \frac{f(x) g(y)}{|x-y|} \mathrm{d}x \mathrm{d}y\lesssim \|f\|_{L^{6/5}(\mathbb R^3)} \|g\|_{L^{6/5}(\mathbb R^3)}$ (which follows from the Hardy--Littlewood--Sobolev inequality  \cite{Aubin1982}).

\begin{remark}[Smeared nuclei]
    \label{rem:smeared-nuclei}
    If we instead consider smeared nuclei (which is done implicitly in CENT \cite{Ghasemi2015,Faraji2017}) with distribution 
    $\nu(x) = \sum_{k=1}^M Z_k m_{k}(x - \bm r_k)$ 
    where $m_1,\dots,m_M$ are smooth non-negative functions with $\int m_k = 1$, then 
    $V^\mathrm{nuc}(x) = -\int \frac{\nu(y)}{|x-y|}\mathrm{d}y$. 
    Therefore, up to a constant representing the nuclei-nuclei interaction, we have
    \begin{align}
        \label{eq:smeared-nuclei}
        \mathcal E(\gamma) 
        = \mathrm{Tr}\big[-\tfrac{1}{2}\Delta \gamma\big] 
        + E_{\mathrm{xc}}[\gamma]
        + \frac{1}{2}
        \iint 
            \frac
                {\big(\rho_\gamma(x)- \nu(x)\big)
                \big(\rho_\gamma(y)-\nu(y)\big)}
                {|x-y|} 
        \mathrm{d}x\mathrm{d}y.
    \end{align}
    The Coulomb term from \cite{Ghasemi2015} is $J_{\ell k} = \mathrm{erf}(\tau_{\ell k} r_{\ell k})$ and results from the final term in \cref{eq:smeared-nuclei} by assuming a Gaussian distribution for both the nuclei and the electron density.
\end{remark}

In theory, there is a universal exact exchange-correlation functional \cite{KohnSham1965, HohenbergKohn1964}. However, this function is complicated and unknown. In practice, $E_\mathrm{xc}$ is replaced with an approximation:
\begin{itemize}
    \item $E_\mathrm{xc} = 0$ : reduced Hartree-Fock (also known as the Hartree model \cite{Hartree1928}), 
    
    \item $E_\mathrm{xc}[\gamma] = - \frac{1}{2} \iint \frac{|\gamma(x,y)|^2}{|x-y|}\mathrm{d}x\mathrm{d}y$ : Hartree--Fock. This model results from a variational formulation restricted to the set of finite energy Slater determinants, 
    
    \item $E_\mathrm{xc}[\gamma] = \int \ep_\mathrm{xc}\big(\rho_\gamma(x)\big) \mathrm{d}x$ : local density approximation (LDA) \cite{KohnSham1965}. In this formulation, $\rho^{-1} \ep_\mathrm{xc}(\rho)$ is the exchange-correlation density for a uniform electron gas with density $\rho$. 
    The simplest such approximation is given by $\ep_\mathrm{xc}(\rho) \coloneqq - c_0 \rho^{\frac{4}{3}}$ where $c_0$ is a positive constant, 
    
    \item $E_\mathrm{xc}[\gamma] = 
    \int 
        \ep_\mathrm{xc}\big(
            \rho_\gamma(x), 
           |\nabla  \textstyle{\sqrt{\rho_\gamma(x)}}|^2
        \big) 
    \mathrm{d}x$ : the generalised gradient approximation (GGA) is the next simplest functional form that allows for the inhomogeneity of the electron density.
\end{itemize}
After choosing a level of theory, \cref{eq:ab-initio} is minimised over the space of finite energy (one particle) density operators as described above:
\begin{align}
    \mathcal P_{N_\mathrm{el}} \coloneqq 
    \left\{ 
        \gamma \in \mathcal S\big( L^2(\mathbb R^3) \big) \colon 
        \gamma^2 = \gamma, \,\, 
        \mathrm{Tr}\,\gamma = N_{\mathrm{el}}, \,\, 
        \mathrm{Tr} \big[ - \tfrac{1}{2} \Delta \gamma \big] < \infty 
    \right\}
    \label{eq:P}
\end{align}
where $\mathcal S\big( L^2(\mathbb R^3) \big)$ is the space of self-adjoint operators on $L^2(\mathbb R^3)$. In extended Kohn-Sham models, one minimises \cref{eq:ab-initio} over the convex hull of $\mathcal P_{N_\mathrm{el}}$, denoted 
\begin{align}
    \mathcal K_{N_\mathrm{el}} \coloneqq \mathrm{conv}(\mathcal P_{N_\mathrm{el}})
    = \left\{ 
        \gamma \in \mathcal S\big( L^2(\mathbb R^3) \big) \colon 
        0\leq \gamma \leq 1, \,\, 
        \mathrm{Tr}\,\gamma = N_{\mathrm{el}}, \,\, 
        \mathrm{Tr} \big[ - \tfrac{1}{2} \Delta \gamma \big] < \infty 
    \right\}
    \label{eq:K}
\end{align}
allowing for fractional occupation numbers. The existence of a minimiser has been established for the various levels of theory: Hartree model \cite{Solovej1991}, Hartree--Fock \cite{Lions1987,LiebSimon1977}, extended Hartree--Fock \cite{Lieb1983}, LDA \cite{LeBris1993}, and extended LDA and GGA \cite{Anantharaman2009}. 

We consider the minimisation problem \cref{eq:ab-initio} within the LDA theory (though conceptually at least our approach can be applied more generally). We first introduce mild assumptions on $\ep_\mathrm{xc}$ that are satisfied for practical models:  $\ep_{\mathrm{xc}} \colon \mathbb R_+ \to \mathbb R$ 
is twice differentiable with 
$\ep_{\mathrm{xc}} \sim - \rho^{4/3}$ near zero and 
$-\rho^{1/3} \lesssim \ep_\mathrm{xc}^\prime(\rho) \leq 0$ on $(0,\infty)$. 
Moreover, we define the Hamiltonian 
    \begin{align}
        \label{eq:LDA-Ham}
        \Ham[\rho] &\coloneqq -\frac{1}{2}\Delta + V^\mathrm{nuc} + v_{\rm C} \rho + \ep_\mathrm{xc}^\prime(\rho).
    \end{align}
where $v_{\rm C}f(x) \coloneqq \int \frac{f(y)}{|x - y|}\mathrm{d}y$ denotes the Coulomb potential. For a given $\rho$, there exists an associated \textit{Fermi level}  $\ep_{\mathrm{F}}$ satisfying $\Tr \chi_{(-\infty,\ep_{\mathrm{F}}]}\big(\Ham[\rho]\big) = N_{\mathrm{el}}$.
We have the following existence result \cite{Anantharaman2009}:

\begin{theorem}
    For neutral or positively charged systems, there exists a minimiser $\gamma^0$ of \cref{eq:ab-initio} over $\mathcal K_{N_{\mathrm{el}}}$. If there exists a Fermi level associated to $\rho_{\gamma^0}$ such that $\ep_{\mathrm{F}} \not\in \sigma\big(\Ham[\rho_{\gamma^0}]\big)$, then
    \begin{align}
        \label{eq:LDA}
        \gamma^0 &= \chi_{(-\infty,\ep_\mathrm{F})}\big( \Ham[\rho_{\gamma^0}]\big).
    \end{align}
    
\end{theorem}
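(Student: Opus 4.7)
The plan is to prove existence via the direct method of the calculus of variations, following the Anantharaman--Cancès analysis \cite{Anantharaman2009}, and then to derive the spectral identity from the first-order optimality conditions on the convex admissible set $\mathcal K_{N_{\mathrm{el}}}$.

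For existence, I would first establish coercivity of $\mathcal E$ on $\mathcal K_{N_{\mathrm{el}}}$. The kinetic term controls $\|\sqrt{\rho_\gamma}\|_{H^1(\mathbb R^3)}$, so by Sobolev embedding $\rho_\gamma$ is bounded in $L^{6/5}\cap L^{3}(\mathbb R^3)$, and the Hardy--Littlewood--Sobolev inequality bounds the Coulomb self-interaction. The nuclear attraction $\int \rho_\gamma V^{\mathrm{nuc}}$ and the negative contribution of $E_{\mathrm{xc}}$ (controlled by the assumptions $\ep_{\mathrm{xc}}\sim -\rho^{4/3}$ near zero and $-\rho^{1/3}\lesssim \ep'_{\mathrm{xc}}(\rho)\leq 0$, so that $E_{\mathrm{xc}}[\gamma]\gtrsim -\int\rho_\gamma^{4/3}$) are absorbed by the kinetic energy through a Gagliardo--Nirenberg interpolation of $L^{4/3}$ between $L^{1}$ and $L^{3}$, giving $\mathcal E\gtrsim \Tr[-\tfrac{1}{2}\Delta\gamma]-C$. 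A minimising sequence $\gamma_n$ therefore has uniformly bounded kinetic energy, and we may extract a subsequence with $\gamma_n\rightharpoonup\gamma^0$ in the appropriate weak-$\ast$ sense and $\sqrt{\rho_{\gamma_n}}\rightharpoonup\sqrt{\rho_{\gamma^0}}$ weakly in $H^1$.

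The central obstacle is that the trace constraint need not pass to the weak limit: electrons may escape to infinity, leaving $\Tr\,\gamma^0<N_{\mathrm{el}}$. This is addressed using Lions' concentration-compactness principle applied to $\rho_{\gamma_n}$. Vanishing is excluded by the lower bound on $\mathcal E$; dichotomy is ruled out by a binding inequality $\mathcal E(N_{\mathrm{el}})<\mathcal E(N_1)+\mathcal E_\infty(N_{\mathrm{el}}-N_1)$ for every $0<N_1<N_{\mathrm{el}}$, which is where the hypothesis $\sum_i Z_i\geq N_{\mathrm{el}}$ is used: a split-off cluster at infinity sees a non-negatively charged residual system, making the separation energetically unfavourable. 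Once full mass is recovered, weak lower semicontinuity of the kinetic energy and of the convex Hartree term in $\rho$, combined with strong $L^p_{\mathrm{loc}}$-convergence of $\rho_{\gamma_n}$ for $p\in[1,3)$ (which gives continuity of the nuclear and exchange--correlation terms under the growth hypotheses on $\ep_{\mathrm{xc}}$), yields $\mathcal E(\gamma^0)\leq\liminf\mathcal E(\gamma_n)$, so $\gamma^0$ is a minimiser.

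For the spectral identity, I would linearise at $\gamma^0$. For any $\gamma\in\mathcal K_{N_{\mathrm{el}}}$, the admissible path $\gamma^0+t(\gamma-\gamma^0)$ stays in $\mathcal K_{N_{\mathrm{el}}}$ for $t\in[0,1]$, and the first-order condition is $\Tr\bigl[\Ham[\rho_{\gamma^0}](\gamma-\gamma^0)\bigr]\geq 0$. Introducing a Lagrange multiplier $\ep_{\mathrm{F}}$ for the trace constraint, this means that $\gamma^0$ minimises $\Tr\bigl[(\Ham[\rho_{\gamma^0}]-\ep_{\mathrm{F}})\gamma\bigr]$ over $\{0\leq\gamma\leq 1,\ \Tr\,\gamma<\infty\}$. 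A direct spectral-decomposition (aufbau) argument, writing $\gamma$ in the spectral resolution of $\Ham[\rho_{\gamma^0}]$, then shows that any such minimiser fully occupies modes below $\ep_{\mathrm{F}}$, leaves modes above $\ep_{\mathrm{F}}$ empty, and is only free to assign occupation numbers in $[0,1]$ on the $\ep_{\mathrm{F}}$-eigenspace itself. Under the gap hypothesis $\ep_{\mathrm{F}}\notin\sigma(\Ham[\rho_{\gamma^0}])$ this residual eigenspace is absent, forcing $\gamma^0=\chi_{(-\infty,\ep_{\mathrm{F}})}(\Ham[\rho_{\gamma^0}])$.
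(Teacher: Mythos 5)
The paper does not prove this theorem; it is stated as a recalled result and attributed to Anantharaman and Cancès \cite{Anantharaman2009}. Your sketch faithfully reconstructs the argument of that reference — coercivity of $\mathcal{E}$ on $\mathcal{K}_{N_{\mathrm{el}}}$ plus concentration-compactness (with the binding inequality using $\sum_i Z_i \geq N_{\mathrm{el}}$) for existence, and the variational inequality on the convex admissible set, Lagrange multiplier for the trace constraint, and aufbau decomposition in the spectral resolution of $\Ham[\rho_{\gamma^0}]$ for the projector identity under the gap hypothesis — so it takes essentially the same route as the source the paper is appealing to.
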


\begin{remark}[Finite Fermi-temperature]
    \label{rem:finite-temp}
    The main results of this work also apply to the finite Fermi-temperature setting \cite{Mermin1965}. In this case, the total energy of the system is
    \begin{align}
        \label{eq:ab-initio-beta}
        \mathcal E_\beta(\gamma) \coloneqq \mathcal E(\gamma) + \beta^{-1} \mathrm{Tr}\, S(\gamma) 
    \end{align}
    where $S(x) \coloneqq x\log x + (1-x)\log(1-x)$ is the Fermi-Dirac entropy and $\beta = (k_\mathrm{B} T)^{-1}$ is the \textit{inverse Fermi-temperature} where $k_{\mathrm{B}}$ is the Boltzmann constant and $T$ the absolute temperature. (The zero Fermi-temperature case \cref{eq:ab-initio} is naturally included with $\mathcal E = \mathcal E_{\infty}$). Then, as in extended Kohn--Sham models, this finite Fermi-temperature energy is minimised over $\mathcal K_{N_\mathrm{el}}$, yielding
    \begin{align}
        \label{eq:EL-beta}
        \gamma^0 
        = F_{\ep_{\mathrm{F}},\beta}\big( 
            \Ham[\rho_{\gamma^0}]
        \big),
    \end{align}
    where $F_{\mu,\beta}(x) \coloneqq \big( 1 + e^{\beta(x - \mu)}\big)^{-1}$ is the Fermi-Dirac distribution with chemical potential $\mu$ and inverse Fermi-temperature $\beta$. 
    
    Assuming there is a gap in the spectrum at the Fermi-level, and using the notation 
    $F_{\ep_{\mathrm{F}},\infty} 
    \coloneqq \lim\limits_{\beta \to \infty} F_{\ep_{\mathrm{F}},\beta} 
    = \chi_{(-\infty, \ep_{\mathrm{F}})} + \tfrac{1}{2}\chi_{\{\ep_{\mathrm{F}}\}}$, 
    the Euler--Lagrange equation \cref{eq:EL-beta} at zero Fermi-temperature agrees with \cref{eq:LDA}.

    The main results of this paper (see next section) are stated and proved both for insulators at zero temperature and in the case of finite Fermi-temperature.
\end{remark}

\begin{remark}[Grand canonical ensemble]\label{rem:GCE1}
    For fixed chemical potential $\mu \in \mathbb R$, we also define the grand-canonical free energy:
    \begin{align}
        \label{eq:ab-initio-GCE}
        \mathcal G_{\beta,\mu}(\gamma) 
        \coloneqq \mathcal E_\beta(\gamma) - \mu \mathrm{Tr}\,\gamma
    \end{align}
    which is minimised over the set of one-particle density operators without the constraint on the number of electrons in the system: 
    $\mathcal P \coloneqq \bigcup_{N_{\mathrm{el}} \geq 0} \mathcal P_{N_\mathrm{el}}$ 
    (for standard Kohn--Sham models at zero Fermi-temperature) or 
    $\mathcal K \coloneqq \bigcup_{N_{\mathrm{el}} \geq 0} \mathcal K_{N_\mathrm{el}}$ 
    (for extended models at zero Fermi-temperature, or at finite Fermi-temperature). In this way, the chemical potential represents the contribution to the free energy due to varying particle number. 

    The main results of this paper are stated for the grand-canonical ensemble, and we comment further in Remark~\ref{rem:CE} (below) regarding extending the analysis to the canonical ensemble.
\end{remark}

For the remainder of this paper, it will be convenient to consider the \textit{(one-particle) density matrix}, the integral kernel of the density operator: for $\gamma \in \mathcal P_{N_\mathrm{el}}$, it is common to abuse notation and also write 
$\gamma \in L^2(\mathbb R^3 \times \mathbb R^3)$ 
such that
\begin{align}
    \label{eq:DM}
    (\gamma \psi)(x) = 
    \int \gamma(x,y) \psi(y) \mathrm{d}y, 
    \qquad \forall \psi \in L^2(\mathbb R^3).
\end{align}

\subsection{Tight Binding}
\label{sec:TB}
We expand the one-particle density operator 
$\gamma = \sum_{i} f_i \ket{i}\bra{i} 
\in \mathcal K$ 
in a local basis of atomic-like orbitals. That is, to each atomic site $\ell \in \{1,\dots,M\}$, we assign 
$N_\mathrm{b} = N_{\mathrm{b}}(Z_\ell)$ 
atomic-like orbitals 
$\ket{\ell a} \coloneqq 
\phi_{\ell a} \coloneqq 
\phi_{Z_\ell, a}(\,\cdot - \bm r_\ell)$
centred on $\bm r_\ell$ and depending on the atomic species $Z_\ell$ (for $a = 1, \dots, N_\mathrm{b}$). Then, on expanding $\ket{i} = \sum_{\ell a} C^i_{\ell a} \ket{\ell a}$, we have
\begin{align}
    \label{eq:DM-TB}
    \gamma(x,y) =
    \phi(x)^\mathrm{T} \gamma_{\mathrm{TB}} \phi(y) \coloneqq 
    \sum_{\ell k, ab} \phi_{\ell a}(x) [\gamma_{\mathrm{TB}}]_{\ell k, ab} \phi_{kb}(y),
\end{align}
where $\gamma_{\mathrm{TB}} \coloneqq \sum_{i} f_i C^i \otimes C^i$. 
Throughout this paper, we will assume that the orbitals are exponentially localised: there exists $c_\phi, \eta_\phi > 0$ such that 
\begin{align}
    \label{eq:phi-assumption}
    \big|\partial^\alpha \phi_{Z, a}(x)\big| \leq c_\phi e^{-\eta_\phi  |x|}
\end{align}
for all atomic species $Z$, basis indices $a = 1, \dots, N_{\mathrm{b}}$, and multi-indices $\alpha \in \mathbb N^3$ with $|\alpha|_1 \leq \nu$ for some $\nu \geq 2$. (We require $\nu = 2$ for the main results, but will comment on possible extensions if $\nu > 2$.)

For the remainder, we will simply write 
$\gamma = \phi^\mathrm{T} \gamma_{\mathrm{TB}} \phi$ 
both for the density matrix \cref{eq:DM-TB} as well as the corresponding density operator as defined by \cref{eq:DM}. Moreover, the electron density $\rho_\gamma$ corresponding to $\gamma$ is given by 
$\rho_\gamma(x) \coloneqq 2 \phi(x)^\mathrm{T} \gamma_{\mathrm{TB}} \phi(x)$. 
We suppose that 
$\braket{\ell a| kb} = \delta_{\ell k}\delta_{ab}$ 
that is, we assume an \textit{othogonal} tight binding model. 
This can be achieved in practice by considering a similarity transformation of the orbitals such as the L\"owdin transform \cite{ChenOrtner16,Benzi2013}. A consequence of this is that $C^i \cdot C^j = \delta_{ij}$. In the following, we also suppose that the number of orbitals per atom is constant. We do this without loss of generality \oldcite[Appendix~B]{OrtnerThomas2020:arXiv}.\jtm{not in \oldcite{OrtnerThomas2020:pointdefs}}

When 
$\gamma = \phi^\mathrm{T} \gamma_\mathrm{TB} \phi$ 
as in \cref{eq:DM-TB}, and assuming an LDA exchange correlation functional, the \textit{ab initio} energy \cref{eq:ab-initio} reduces to  
\begin{gather}
    \label{eq:TB-energy}
    \mathcal E(\gamma) = 
    \mathrm{Tr} \,H_0\gamma_{\mathrm{TB}}
    + \int \Big[ \rho_\gamma V^\mathrm{nuc} 
    + \tfrac{1}{2}\rho_\gamma v_{\rm C}\rho_\gamma 
    + \ep_{\mathrm{xc}}\big( \rho_\gamma \big) \Big] \\ 
    \label{eq:TB-H0}
    \text{where} \qquad [H_0]_{\ell k, ab} \coloneqq \frac{1}{2}\int \nabla \phi_{\ell a} \cdot \nabla \phi_{kb}.
\end{gather}
Therefore, we also suppose that 
$\{ \phi_{\ell a} \} \subset H^1(\mathbb R^3)$. 

In particular, on making the tight-binding approximation, we are restricting the class of one-particle density operators to $\mathcal P^{\mathrm{TB}}_{N_\mathrm{el}}$ where
\begin{align}
    \mathcal P^{\mathrm{TB}}_{N_\mathrm{el}} &\coloneqq 
    \left\{ 
        \phi^{\mathrm T} \gamma_{\mathrm{TB}} \phi \colon 
        \gamma_\mathrm{TB} \in (\mathbb R^{N_\mathrm{b}\times N_{\mathrm{b}}})^{M\times M}, \,\, 
        \gamma_{\mathrm{TB}}^{\mathrm T} = \gamma_{\mathrm{TB}}, \,\,
        \gamma_{\mathrm{TB}}^2=\gamma_{\mathrm{TB}}, \,\,
        \mathrm{Tr}\, \gamma_{\mathrm{TB}} = N_{\mathrm{el}}
    \right\} \nonumber \\
    &\subset \mathcal P_{N_{\mathrm{el}}}.
    \label{eq:P-TB}
\end{align}
We also define the projection of the Hamiltonian \cref{eq:LDA-Ham} onto the tight binding basis by
\begin{align}
    \label{eq:Ham[rho]}
    \Ham^{\mathrm{TB}}[\rho]_{\ell k, ab} &\coloneqq 
    \int \phi_{\ell a} \Big[ 
        -\frac{1}{2}\Delta + V^\mathrm{nuc} + v_{\rm C} \rho + \ep_\mathrm{xc}^\prime(\rho)
        \Big] \phi_{kb}, 
\end{align}
and write $\lambda_1(\rho) \leq \dots \leq \lambda_{M\cdot N_\mathrm{b}}(\rho)$ for the ordered eigenvalues of $\Ham^{\mathrm{TB}}[\rho]$ counting multiplicities.

The electronic structure problem in the tight binding framework then consists of minimising $\mathcal E$ \cref{eq:TB-energy} over $\mathcal P_{N_\mathrm{el}}^{\mathrm{TB}}$:  
\begin{prop}
    \label{TB-minimiser}
    There exists  
    $\gamma^0 = \phi^\mathrm{T} \gamma_\mathrm{TB}^0 \phi$, 
    a minimiser of $\mathcal E$ over $\mathcal P^{\mathrm{TB}}_{N_\mathrm{el}}$. If there is a gap in the spectrum (that is, $\lambda_{N_{\mathrm{el}}}(\rho_{\gamma^0}) < \lambda_{N_{\mathrm{el}}+1}(\rho_{\gamma^0})$), then  
    \begin{align}
        \gamma^0_\mathrm{TB} &= 
        \chi_{(-\infty,\ep_\mathrm{F})}\big( 
            \Ham^{\mathrm{TB}}[\rho_{\gamma^0}]
        \big) 
    \end{align}
    where the Fermi-level $\ep_\mathrm{F}$ is any value satisfying $\lambda_{N_{\mathrm{el}}}(\rho_{\gamma^0}) < \ep_{\mathrm{F}} < \lambda_{N_{\mathrm{el}}+1}(\rho_{\gamma^0})$. 
\end{prop}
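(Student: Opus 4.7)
The plan is to proceed in three steps: existence via compactness of the Grassmannian and continuity of the energy; derivation of the Euler--Lagrange condition through unitary variations; and identification of the projector using the spectral-gap hypothesis together with a convex-hull relaxation.

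For existence, note that $\mathcal P^{\mathrm{TB}}_{N_\mathrm{el}}$ is isomorphic to the (compact) Grassmannian of $N_\mathrm{el}$-dimensional subspaces of $\mathbb R^{M N_\mathrm{b}}$, so it suffices to show that $\mathcal E$ is continuous on it. Under \cref{eq:phi-assumption}, the density $\rho_\gamma = 2\phi^{\mathrm T}\gamma_{\mathrm{TB}}\phi$ belongs to $L^1\cap L^\infty\cap L^{6/5}(\mathbb R^3)$ with exponential decay and norms controlled uniformly in $\gamma_{\mathrm{TB}}\in\mathcal P^{\mathrm{TB}}_{N_\mathrm{el}}$. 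The one-body term $\Tr H_0 \gamma_{\mathrm{TB}}$ is a continuous linear functional of $\gamma_{\mathrm{TB}}$; the Hartree term is continuous via Hardy--Littlewood--Sobolev; and the growth $|\ep_\mathrm{xc}(\rho)|\lesssim \rho^{4/3}$ together with dominated convergence give continuity of the exchange--correlation contribution. Weierstrass then delivers a minimizer.

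For the Euler--Lagrange condition, I vary $\gamma^0_{\mathrm{TB}}$ along the Grassmannian by unitary conjugation: for any Hermitian $A$, the path $\gamma_{\mathrm{TB}}(t) \coloneqq e^{-itA}\gamma^0_{\mathrm{TB}}e^{itA}$ remains in $\mathcal P^{\mathrm{TB}}_{N_\mathrm{el}}$. Since the variational derivative of $\mathcal E$ with respect to $\gamma_{\mathrm{TB}}$ equals $\Ham^{\mathrm{TB}}[\rho_\gamma]$ (up to a spin-normalization constant), first-order stationarity at $t=0$ reads $\Tr\bigl([\Ham^{\mathrm{TB}}[\rho_{\gamma^0}],\gamma^0_{\mathrm{TB}}]A\bigr)=0$ for every Hermitian $A$. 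Therefore $[\Ham^{\mathrm{TB}}[\rho_{\gamma^0}],\gamma^0_{\mathrm{TB}}]=0$, so $\gamma^0_{\mathrm{TB}}$ is the orthogonal projector onto some $\linspan\{\psi_k\}_{k\in I}$ spanned by eigenvectors of $\Ham^{\mathrm{TB}}[\rho_{\gamma^0}]$, with $|I|=N_\mathrm{el}$ and corresponding eigenvalues $\lambda_k$.

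It remains to prove that $I=\{1,\ldots,N_\mathrm{el}\}$ under the gap assumption. I relax to the convex hull $\mathcal K^{\mathrm{TB}}_{N_\mathrm{el}}$ and set $\tilde\gamma_{\mathrm{TB}} \coloneqq \chi_{(-\infty,\ep_\mathrm{F})}(\Ham^{\mathrm{TB}}[\rho_{\gamma^0}])$. The linear path $\gamma(s) \coloneqq (1-s)\gamma^0 + s\tilde\gamma \in \mathcal K^{\mathrm{TB}}_{N_\mathrm{el}}$ has linear response
\begin{align*}
    \frac{d}{ds}\mathcal E(\gamma(s))\Big|_{s=0}
    = \Tr\bigl(\Ham^{\mathrm{TB}}[\rho_{\gamma^0}](\tilde\gamma_{\mathrm{TB}}-\gamma^0_{\mathrm{TB}})\bigr)
    = \sum_{k\leq N_\mathrm{el},\,k\notin I}\lambda_k \;-\; \sum_{k>N_\mathrm{el},\,k\in I}\lambda_k,
\end{align*}
which is strictly negative whenever $I\neq\{1,\ldots,N_\mathrm{el}\}$, by the gap $\lambda_{N_\mathrm{el}}(\rho_{\gamma^0})<\lambda_{N_\mathrm{el}+1}(\rho_{\gamma^0})$ and the fact that $|\{k\leq N_\mathrm{el}:k\notin I\}|=|\{k>N_\mathrm{el}:k\in I\}|$. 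Hence $\min_{\mathcal K^{\mathrm{TB}}_{N_\mathrm{el}}}\mathcal E < \mathcal E(\gamma^0)$. Invoking the extended-Kohn--Sham analysis of \cite{Anantharaman2009} in the present finite-dimensional TB setting, any minimizer $\gamma^\ast$ over $\mathcal K^{\mathrm{TB}}_{N_\mathrm{el}}$ must satisfy $\gamma^\ast_{\mathrm{TB}} = F_{\ep_\mathrm{F}^\ast,\infty}\bigl(\Ham^{\mathrm{TB}}[\rho_{\gamma^\ast}]\bigr)$; under the gap hypothesis this Fermi--Dirac occupation collapses to a characteristic function and therefore to an orthogonal projector in $\mathcal P^{\mathrm{TB}}_{N_\mathrm{el}}$, contradicting the strict descent above.

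The main obstacle is this last step. Because $\Ham^{\mathrm{TB}}$ depends nonlinearly on $\rho_\gamma$, the Grassmannian EL equation pins down the relevant eigenbasis but not the aufbau filling: a naive swap of occupied and unoccupied orbitals modifies $\rho_\gamma$ and hence changes the very Hamiltonian whose eigenvalues one wishes to compare. The relaxation to $\mathcal K^{\mathrm{TB}}_{N_\mathrm{el}}$ is what converts this combinatorial comparison into a first-order descent inequality against the \emph{fixed} reference Hamiltonian $\Ham^{\mathrm{TB}}[\rho_{\gamma^0}]$, while the spectral-gap hypothesis guarantees that the relaxed minimizer remains a projector and is thus admissible in the original problem.
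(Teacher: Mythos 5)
The paper gives no proof of Proposition~\ref{TB-minimiser}; it is treated as a finite-dimensional analogue of the extended Kohn--Sham result of \cite{Anantharaman2009} cited for Theorem~1. So I can only assess your argument on its own merits.

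Your existence step and Euler--Lagrange step are both sound. In the finite-dimensional setting $\mathcal P^{\mathrm{TB}}_{N_\mathrm{el}}$ is the real Grassmannian, hence compact, and $\mathcal E$ is continuous there; Weierstrass gives a minimizer. Unitary conjugation $\gamma_{\mathrm{TB}}(t)=e^{-itA}\gamma^0_{\mathrm{TB}}e^{itA}$ does exhaust the tangent space, and stationarity gives $[\Ham^{\mathrm{TB}}[\rho_{\gamma^0}],\gamma^0_{\mathrm{TB}}]=0$, i.e.\ $\gamma^0_{\mathrm{TB}}=\sum_{k\in I}|\psi_k\rangle\langle\psi_k|$ for some eigenbasis and $|I|=N_\mathrm{el}$. (There is a mismatch of a factor of two between the variational derivative of $\mathcal E$ and the paper's $\Ham^{\mathrm{TB}}$ that you gloss over as ``a spin-normalization constant''; given the paper's own conventions this is not just a constant prefactor, but it does not affect the commutator identity, so I will let it pass.)

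You correctly flag the aufbau filling $I=\{1,\dots,N_\mathrm{el}\}$ as the real obstacle, and the convex-hull relaxation is the right idea — but the closing contradiction has a genuine hole. After showing $\min_{\mathcal K^{\mathrm{TB}}_{N_\mathrm{el}}}\mathcal E<\mathcal E(\gamma^0)$ when $I\neq\{1,\ldots,N_\mathrm{el}\}$, you invoke ``the gap hypothesis'' to conclude that a relaxed minimizer $\gamma^\ast$ has no fractional occupations and hence lies in $\mathcal P^{\mathrm{TB}}_{N_\mathrm{el}}$. But the gap hypothesis of the Proposition concerns $\sigma\big(\Ham^{\mathrm{TB}}[\rho_{\gamma^0}]\big)$, whereas the conclusion you need is $\ep_\mathrm{F}^\ast\notin\sigma\big(\Ham^{\mathrm{TB}}[\rho_{\gamma^\ast}]\big)$. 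Since $\gamma^\ast\neq\gamma^0$ — which is exactly what your strict-descent inequality establishes — these are two \emph{different} self-consistent Hamiltonians, and nothing in the hypotheses forces a spectral gap for the latter. Without it, $\gamma^\ast_{\mathrm{TB}}$ may occupy the $\ep_\mathrm{F}^\ast$-eigenspace fractionally, $\gamma^\ast\notin\mathcal P^{\mathrm{TB}}_{N_\mathrm{el}}$, and the contradiction with $\gamma^0$ minimizing over $\mathcal P^{\mathrm{TB}}_{N_\mathrm{el}}$ does not follow. Nor can one argue along the segment directly: the linear-response derivative at $s=0$ is negative, as you compute, but $\mathcal E$ is not convex on $\mathcal K^{\mathrm{TB}}_{N_\mathrm{el}}$ because of the exchange--correlation term, so a negative slope at the endpoint does not force $\mathcal E(\tilde\gamma)<\mathcal E(\gamma^0)$. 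This is the well-known subtlety of the aufbau principle for Kohn--Sham-type models: the gap at the candidate minimizer controls the filling only once one already knows the relaxed and unrelaxed problems share a minimizer, which is precisely what is in question. To close the argument one would need either a gap/uniqueness assumption on the $\mathcal K^{\mathrm{TB}}$-minimizer, or convexity of $\mathcal E$ on $\mathcal K^{\mathrm{TB}}$, neither of which is available here.
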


\begin{remark}
    \label{rem:finite-temp-GCE-TB}
    We may also consider minimising $\mathcal E_\beta$ over 
    \begin{align}
        \mathcal K^{\mathrm{TB}}_{N_\mathrm{el}} &\coloneqq 
        \left\{ 
            \phi^{\mathrm T} \gamma_{\mathrm{TB}} \phi \colon 
            \gamma_\mathrm{TB} \in (\mathbb R^{N_\mathrm{b}\times N_{\mathrm{b}}})^{M\times M}, \,\, 
            \gamma_{\mathrm{TB}}^{\mathrm T} = \gamma_{\mathrm{TB}}, \,\,
            0\leq \gamma_{\mathrm{TB}} \leq 1, \,\,
            \mathrm{Tr}\, \gamma_{\mathrm{TB}} = N_{\mathrm{el}}
        \right\} \nonumber \\
        &\subset \mathcal K_{N_{\mathrm{el}}},
        \label{eq:K-TB}
    \end{align}
   leading to minimisers satisfying $\gamma^0 = \phi^{\mathrm T} F_{\beta,\ep_{\mathrm{F}}}\big( \Ham^{\mathrm{TB}}[\rho_{\gamma^0}]\big) \phi$. 
   
   Moreover, minimising 
   $\mathcal G_{\beta,\mu}$ (as defined in \cref{eq:ab-initio-beta})
   over 
   $\mathcal K^{\mathrm{TB}} 
   \coloneqq \bigcup_{n\geq 0} \mathcal K_n^{\mathrm{TB}}$ 
   leads to the Euler--Lagrange equation $\gamma^0 = \phi^T F_{\beta,\mu}\big( \Ham^{\mathrm{TB}}[\rho_{\gamma^0}]\big)\phi$.
\end{remark}

\subsection{Machine Learning}
\label{sec:ML}
Starting from the \textit{ab initio} energy \cref{eq:ab-initio}, we have considered a general local density approximation for the exchange correlation function. Then, restricting the class of admissible one particle density operators to a local basis of atomic-like orbitals, we obtain a tight binding approximation \cref{eq:TB-energy}. In this way, the total energy of the system is now a function of the (finite dimensional) matrices 
$\gamma_\mathrm{TB} \in (\mathbb R^{N_\mathrm{b}\times N_\mathrm{b}})^{M\times M}$. 
We now restrict the class of admissible density operators to 
$\mathcal P_{N_{\mathrm{el}}}^{\mathrm{ML}} \subset \mathcal P_{N_{\mathrm{el}}}^{\mathrm{TB}}$ 
(or $\mathcal K_{N_{\mathrm{el}}}^{\mathrm{ML}} \subset \mathcal K_{N_{\mathrm{el}}}^{\mathrm{TB}}$ for finite Fermi-temperature models), 
a collection of density operators defined in terms of atom-centred features. 

\begin{remark}[BpopNN]
    \label{rem:BpopNN}
    In the BpopNN \cite{Xie2020}, the authors take the following approach: for Becke populations $p = \{p_1,\dots,p_M\}$, a corresponding one-particle density operator $\gamma(p)$ belongs to the set
    \[
        \argmin\Big\{ \mathcal E(\gamma) \colon
            \gamma \in \mathcal P_{N_\mathrm{el}}, \,\, 
            p_i = \int \rho_\gamma(x) \omega(x - \bm r_i) \mathrm{d}x
        \Big\}
    \]
    where $\omega$ is a smooth weight function concentrated about the origin.
    
    In this way, the density operator $\gamma(p)$ is a function of the atom centred ``features'' $\{p_i\}$, albeit a complicated and likely multi-valued function.
\end{remark}

Motivated by the BpopNN idea but to allow a rigorous analysis we define an explicit mapping from atom-centred features to an admissible density operator. To do this, we will write the Hamiltonian of the system as $H_0 + v$ where $H_0$ is given by \cref{eq:TB-H0} and $v$ is the effective potential written as a function of atom centred features. This approach is similar to that of \cite{HerbstLevitt2022,Gonze1996}.

Given an atom-centred feature vector 
$\widehat{v} = \{\widehat{v}_{mc}\}$ 
where $m=1,\dots,M$ is the atomic index and $c$ belongs to a finite set indexing the features, we write the parameterised effective potential as 
%
$v = \sum_{mc} \widehat{v}_{mc} \int \phi \otimes \phi \, \vb_{mc}$ 
where 
$\vb = \{\vb_{mc} \coloneqq \vb_{c}(\,\cdot - \bm r_m)\}$ 
is a collection of atom-centred functions. We let $I \coloneqq \{(m,c)\}$ be the index set for the atom centred features and write
$\widehat{v}_m \coloneqq \{ \widehat{v}_{mc} \}_{c}$ 
for the feature vector on atom $m$. Throughout this paper, we will assume that these atom-centred functions are exponentially localised: there exists $c_\vb, \eta_\vb > 0$ such that 
\begin{align}
    \label{eq:v-basis-assumption}
    \big|\partial^{\alpha}\vb_{c}(x)\big| \leq c_\vb e^{-\eta_\vb  |x|},
\end{align}
for all $c$ and multi-indices $\alpha \in \mathbb N^3$ with $|\alpha|_1 \leq \nu$ for some $\nu \geq 2$.

Then, as a function of the features, the tight binding Hamiltonian takes the form $H_0 + v$ where $H_0$ is given by \cref{eq:TB-H0}. We now write 
$\lambda_1(v) \leq \lambda_2(v) \leq \dots \leq \lambda_{M\cdot N_\mathrm{b}}(v)$ 
for the ordered eigenvalues of $H_0 + v$ (counting multiplicities), and $C^i(v)$ the corresponding normalised eigenvectors. Under this choice, we define
\begin{align}
    &\mathcal P_{N_\mathrm{el}}^{\mathrm{ML}} \coloneqq
    \left\{ 
        \phi^\mathrm{T}
        \big[ 
            \textstyle\sum_{i=1}^{N_{\mathrm{el}}} C^i(v) \otimes C^i(v)
        \big] 
        \phi \colon \,\, 
        \widehat{v} \in \ell^\infty( I ) 
    \right\} \nonumber\\
    &\subset \mathcal P_{N_\mathrm{el}}^{\mathrm{TB}} 
    = \left\{ 
        \phi^\mathrm{T} \gamma_{\mathrm{TB}} \phi \colon \,\,
        \gamma_\mathrm{TB} \in (\mathbb R^{N_\mathrm{b}\times N_{\mathrm{b}}})^{M\times M}, \,\, 
        \gamma_{\mathrm{TB}}^{\mathrm T} = \gamma_{\mathrm{TB}}, \,\,
        \gamma_{\mathrm{TB}}^2 = \gamma_{\mathrm{TB}},\,\,
        \mathrm{Tr}\, \gamma_{\mathrm{TB}} = N_{\mathrm{el}}
    \right\}\nonumber\\
    &\subset \mathcal P_{N_\mathrm{el}} 
    = \left\{ 
        \gamma \in \mathcal S\big( L^2(\mathbb R^3) \big) \colon \,\,
        \gamma^2 = \gamma, \,\, 
        \mathrm{Tr}\,\gamma = N_{\mathrm{el}}, \,\, 
        \mathrm{Tr} \big[ - \tfrac{1}{2} \Delta \gamma \big] < \infty 
    \right\}.
\end{align}

To simplify the presentation, we will consider \textit{non-degenerate} 
$\gamma \in \mathcal P_{N_\mathrm{el}}^{\mathrm{ML}}$, 
where there is a gap in the spectrum: if 
$\gamma = \phi^{\mathrm T} \big[ \sum_{i=1}^{N_{\mathrm{el}}} C^i(v) \otimes C^i(v) \big] \phi$, 
for some $\widehat{v}\in \ell^\infty(I)$, 
we suppose that 
 $\lambda_{N_{\mathrm{el}}}(v) < \lambda_{N_{\mathrm{el}} + 1}(v)$. In this case, 
%
    %
    %
    %
    %
%
we have
\begin{gather}
    \label{eq:ML-energy}
    \mathcal E(\gamma) = 
    \mathrm{Tr} \,H_0 \chi_{(-\infty,\ep_\mathrm{F})}(H_0 + v)
    + \int \big[
        \rho_\gamma V^\mathrm{nuc} + \tfrac{1}{2} \rho_{\gamma} v_{\rm C} \rho_{\gamma} + \ep_\mathrm{xc}( \rho_{\gamma} ) 
    \big]  \\
    \label{eq:ML-density}
    \text{where} \qquad 
    \rho_\gamma(x) = 2\phi(x)^{\mathrm T} \chi_{(-\infty,\ep_{\mathrm{F}})}(H_0 + v) \phi(x),
\end{gather}
and the Fermi-level $\ep_{\mathrm{F}}$ is any value satisfying $\lambda_{N_{\mathrm{el}}}(v) < \ep_{\mathrm{F}} < \lambda_{N_{\mathrm{el}}+1}(v)$. 
In the following, we will denote by 
$D\chi_{(-\infty,\mu)}$ 
the Jacobian of the map 
$(\mathbb R^{N_\mathrm{b} \times N_\mathrm{b}})^{M\times M} \to (\mathbb R^{N_\mathrm{b} \times N_\mathrm{b}})^{M\times M}$ 
defined by $\Ham \mapsto \chi_{(-\infty,\mu)}(\Ham)$. Moreover, we write 
$V_{\mathrm{eff}}[\rho] \coloneqq V^\mathrm{nuc} + v_{\rm C}\rho + \ep_{\mathrm{xc}}^\prime(\rho)$ 
and
$[\nabla_I v]_{mc} \coloneqq 
    \frac
        {\partial}
        {\partial \widehat{v}_{mc}} 
    \big[ 
        \sum_{m^\prime c^\prime} \widehat{v}_{m^\prime c^\prime} \int \phi \otimes \phi \, \vb_{m^\prime c^\prime} 
\big] 
= \int \phi \otimes \phi \, \vb_{mc}$.

The minimisation of \cref{eq:ML-energy} over $\mathcal P_{N_\mathrm{el}}^{\mathrm{ML}}$ is well-posed and yields an Euler--Lagrange equation: 
\begin{lemma}
    \label{ML-minimiser}
    There exists  
    $\gamma^0 = \phi^{\mathrm T}
    \big[ 
        \sum_{n=1}^{N_{\mathrm{el}}}C^n(v) \otimes C^n(v)
    \big] \phi$,  
    a minimiser of $\mathcal E$ over $\mathcal P_{N_\mathrm{el}}^{\mathrm{ML}}$. Moreover, if  $\lambda_{N_{\mathrm{el}}}(v) < \ep_{\rm F} < \lambda_{N_{\mathrm{el}}+1}(v)$, then we have 
    \begin{align}
        \label{eq:ML-EL}
        \Braket{ - D\chi_{(-\infty,\ep_\mathrm{F})}\big( H_0 + v \big) 
        \Big[ 
            v - \int \phi \otimes \phi \, V_{\mathrm{eff}}[\rho_{\gamma^0}]  
        \Big], \nabla_{I} v } = 0.
    \end{align}
    %

    In that case, we also have the following approximation result
    \begin{align}
        \label{eq:ML-approx}
        \left|
            v - \int \phi \otimes \phi\, V_{\mathrm{eff}}[\rho_{\gamma^0}]
        \right|
        \lesssim \min_{V \in \mathrm{span} \, \vb}
        \left|
            \int \phi \otimes \phi \Big[ V - V_{\mathrm{eff}}[\rho_{\gamma^0}] \Big]
        \right|.
    \end{align}
    Therefore, if $\vb$ is a complete basis, we have
    \begin{align}
        \label{eq:ML-complete}
        \gamma^0 
        = \phi^{\mathrm T} 
        \chi_{(-\infty, \ep_\mathrm{F})}(\Ham^\mathrm{TB}[\rho_{\gamma^0}]) 
        \phi
    \end{align}
    where $\Ham^\mathrm{TB}$ is given by \cref{eq:Ham[rho]}. 
\end{lemma}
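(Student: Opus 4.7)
For existence, the energy $\mathcal{E}$ is bounded below on $\mathcal{P}_{N_{\mathrm{el}}}$ by standard Kohn--Sham arguments and hence on $\mathcal{P}_{N_{\mathrm{el}}}^{\mathrm{ML}} \subset \mathcal{P}_{N_{\mathrm{el}}}^{\mathrm{TB}}$. Since $\mathcal{P}_{N_{\mathrm{el}}}^{\mathrm{TB}}$ is a closed bounded subset of the finite-dimensional space $(\mathbb{R}^{N_\mathrm{b}\times N_\mathrm{b}})^{M\times M}$ it is compact, and $\mathcal{E}$ restricted to it is continuous. Given a minimising sequence $(\widehat{v}^k) \subset \ell^\infty(I)$, I extract a subsequence along which the projectors $\gamma_{\mathrm{TB}}^k = \chi_{(-\infty,\ep_\mathrm{F}_k)}(H_0 + v^k)$ converge to some rank-$N_{\mathrm{el}}$ projector $\gamma_{\mathrm{TB}}^\infty$, then argue that $\gamma_{\mathrm{TB}}^\infty \in \mathcal{P}_{N_{\mathrm{el}}}^{\mathrm{ML}}$ by controlling the gauge: the map $\widehat{v} \mapsto \gamma_{\mathrm{TB}}(v)$ annihilates the block-diagonal component (commuting with $\gamma_{\mathrm{TB}}$), so after this quotient one can extract a bounded subsequence of $\widehat v^k$ whose limit realises $\gamma_{\mathrm{TB}}^\infty$.

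For the Euler--Lagrange equation, the non-degeneracy hypothesis ensures $P(v) := \chi_{(-\infty,\ep_\mathrm{F})}(H_0+v)$ is smooth in $\widehat v$ near the minimiser. Writing $\delta_{mc} v := [\nabla_I v]_{mc} = \int \phi \otimes \phi\, \vb_{mc}$ and applying the chain rule,
\begin{align*}
    \partial_{\widehat{v}_{mc}} \mathcal{E}(\gamma)
    &= \mathrm{Tr}\!\left[H_0 \cdot D\chi_{(-\infty,\ep_\mathrm{F})}(H_0+v)[\delta_{mc} v]\right]
      + \int V_{\mathrm{eff}}[\rho_\gamma]\, \partial_{\widehat{v}_{mc}} \rho_\gamma \\
    &= \mathrm{Tr}\!\left[\big(H_0 + \textstyle\int \phi \otimes \phi\, V_{\mathrm{eff}}[\rho_\gamma]\big) \cdot D\chi(H_0+v)[\delta_{mc} v]\right]
    = \mathrm{Tr}\!\left[\Ham^{\mathrm{TB}}[\rho_\gamma]\cdot D\chi(H_0+v)[\delta_{mc} v]\right],
\end{align*}
using $\rho_\gamma = 2\phi^{\mathrm T} P \phi$, the identity $\int f\,\rho_\gamma = 2\mathrm{Tr}[(\int \phi \otimes \phi\, f)\, P]$ (absorbing spin factors into conventions), and $H_0 + \int \phi \otimes \phi\, V_{\mathrm{eff}}[\rho] = \Ham^{\mathrm{TB}}[\rho]$ from \cref{eq:Ham[rho]}--\cref{eq:TB-H0}. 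Next I use the key identity $D\chi(H)[H] = 0$, which follows from the spectral representation because $\chi_{(-\infty,\ep_\mathrm{F})}((1+t)H) = \chi_{(-\infty,\ep_\mathrm{F})}(H)$ for small $|t|$ (eigenvalues do not cross $\ep_\mathrm{F}$). Writing $\Ham^{\mathrm{TB}}[\rho_\gamma] = (H_0+v) - (v - \int \phi \otimes \phi\, V_{\mathrm{eff}}[\rho_\gamma])$, the first piece drops out and self-adjointness of $D\chi$ with respect to the Frobenius inner product (standard for spectral calculus of self-adjoint matrices) gives exactly \cref{eq:ML-EL}.

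For the approximation bound, I interpret \cref{eq:ML-EL} as a Galerkin orthogonality: the bilinear form $\mathcal{B}(A, B) := -\mathrm{Tr}[A \cdot D\chi(H_0+v)[B]]$ is positive semi-definite on symmetric matrices (its quadratic form is $\sum_{n,m} \frac{f_m - f_n}{\lambda_n - \lambda_m} |\langle n|A|m\rangle|^2 \geq 0$, with kernel equal to the block-diagonal matrices commuting with $P$). The EL equation says that $v - \int \phi\otimes\phi\, V_{\mathrm{eff}}[\rho_{\gamma^0}]$ is $\mathcal{B}$-orthogonal to the subspace $\mathcal{V} := \mathrm{span}\{\int \phi \otimes \phi\, \vb_{mc}\}$, so $v$ is the $\mathcal{B}$-projection of $\int \phi\otimes\phi V_{\mathrm{eff}}[\rho_{\gamma^0}]$ onto $\mathcal{V}$. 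A Céa-type estimate then yields the $\mathcal{B}$-seminorm version of \cref{eq:ML-approx}, and the spectral gap provides two-sided equivalence between $\mathcal{B}$ and the operator norm on the off-diagonal block, giving \cref{eq:ML-approx}. When $\vb$ is complete, the minimiser on the right-hand side is zero, hence $v = \int \phi\otimes\phi\, V_{\mathrm{eff}}[\rho_{\gamma^0}]$ and $H_0 + v = \Ham^{\mathrm{TB}}[\rho_{\gamma^0}]$, yielding \cref{eq:ML-complete}.

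The main obstacle is establishing closedness of $\mathcal{P}_{N_{\mathrm{el}}}^{\mathrm{ML}}$ despite the unbounded parameter $\widehat{v}$, since one must quotient by the block-diagonal gauge before compactness can be applied; the Céa step is routine once the identity $D\chi(H)[H]=0$ is used to recast the first-order condition as a Galerkin problem.
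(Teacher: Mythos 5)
Your proposal follows essentially the same route as the paper: existence by continuity plus compactness, the Euler--Lagrange equation by differentiating the band energy and density, and the approximation bound by a Galerkin/C\'ea argument using the (semi-)definiteness of $-D\chi$ with respect to the Frobenius inner product. The noteworthy difference is your shortcut for the EL step. The paper (\cref{eq:der-eigen}--\cref{eq:der-E-der-v}) differentiates the eigenvalue sum explicitly and lets the $F(\Ham)\colon\partial v/\partial\widehat{v}_{mc}$ term from the band energy cancel against the term coming from the $-\sum v\,F(\Ham)$ rewrite. You instead observe that $D\chi(H)[H]=0$ — correct, since for a projector through a spectral gap $\chi_{(-\infty,\ep_{\rm F})}((1+t)H)=\chi_{(-\infty,\ep_{\rm F})}(H)$ for small $|t|$, equivalently $D\chi(H)[W]=\sum_{\lambda_s<\mu<\lambda_t}(\lambda_s-\lambda_t)^{-1}(P_sWP_t+P_tWP_s)$ annihilates $H$ because $P_sHP_t=0$ for $s\neq t$ — and combine this with self-adjointness of $D\chi$ to make the first term drop out directly. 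This is a cleaner and valid way to arrive at \cref{eq:ML-EL}. Your identity $\Ham^{\mathrm{TB}}[\rho]=H_0+\int\phi\otimes\phi\,V_{\mathrm{eff}}[\rho]$ is also the right bridge.

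Two points worth flagging. First, on existence: the paper simply asserts (\S\ref{sec:proofs}) that ``the set of admissible density operators is compact'' — i.e.\ that $\mathcal P^{\mathrm{ML}}_{N_{\mathrm{el}}}$ is closed in the compact Grassmannian — whereas you correctly identify that this needs an argument (the parameter $\widehat v$ is unbounded, the map is not injective, and one must control the block-diagonal ``gauge''); your quotient-by-gauge outline is the missing ingredient, though to close it one still has to show that the block-diagonal direction can actually be absorbed inside $\mathrm{span}\,\{\int\phi\otimes\phi\,\vb_{mc}\}$, which is not automatic for a fixed basis $\vb$. Second, on the C\'ea step: you are more careful than the paper here. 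At $\beta=\infty$, $-D\chi(\Ham)$ is only positive \emph{semi}-definite, with kernel equal to the matrices commuting with $\chi_{(-\infty,\ep_{\rm F})}(\Ham)$ — e.g.\ $\Phi=C^s\otimes C^s$ gives $\langle D\chi\,\Phi,\Phi\rangle=0$. The paper's Lemma~\ref{lem:DF-negative} asserts strict negative definiteness and even a spectral bound away from zero for $\beta=\infty$, which is not correct as stated; its proof of \cref{eq:ML-approx} then invokes $(-DF)^{-1/2}$, which does not exist on the full matrix space. Your version honestly says the Galerkin orthogonality only controls the off-diagonal block, but then you conclude \cref{eq:ML-approx} — a full-norm bound — anyway. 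Neither argument as written controls the block-diagonal component of $v-\int\phi\otimes\phi\,V_{\mathrm{eff}}[\rho_{\gamma^0}]$; closing this requires either restricting the statement to the off-diagonal part, a gauge-fixing of $v$ so that its block-diagonal part matches that of $\int\phi\otimes\phi\,V_{\mathrm{eff}}[\rho_{\gamma^0}]$, or working at $\beta<\infty$ where $DF$ is genuinely definite.
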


\begin{proof}[Sketch of the proof] 
    First, we note that $\mathcal E$ is continuous and the space of possible 
    $\{C^i(v)\}_{i=1}^{N_{\mathrm{el}}}$ for $\widehat{v}\in \ell^\infty(I)$ 
    is compact. In particular, there exists a minimiser. Supposing there is a gap in the spectrum, we obtain \cref{eq:ML-EL} as an Euler--Lagrange equation. We finally obtain \cref{eq:ML-approx,eq:ML-complete} by noting that 
    $D\chi_{(-\infty, \ep_{\mathrm{F}})}(H_0 + v)$ 
    is invertible.  
\end{proof}

\begin{remark}[Finite Fermi-temperature]
    \label{rem:lemma-finite-temp}
    In the finite Fermi-temperature case, we minimise $\mathcal E_\beta$ as defined in \cref{eq:ab-initio-beta} over the space of density operators of the form $\gamma = \phi^T F_{\beta,\ep_{\mathrm{F}}}(H_0 + v) \phi$ where $\widehat{v}\in\ell^\infty(I)$ and $\ep_{\mathrm{F}}$ is the unique solution to $\mathrm{Tr} \, F_{\beta, \ep_\mathrm{F}}(H_0 + v) = N_{\mathrm{el}}$. Minimisers $\gamma^0$ satisfy the following Euler--Lagrange equation:
    \begin{align}
        \label{eq:ML-EL-beta}
        \Braket{ - DF_{\beta, \ep_\mathrm{F}}(H_0 + v) 
        \Big[ 
            v - \int \phi \otimes \phi \, V_{\mathrm{eff}}[\rho_{\gamma^0}]  
        \Big], \nabla_{I} v } = 0.
    \end{align}
    The full proof of this result is contained in the proof of \cref{ML-minimiser} in \S\ref{sec:proofs}.
\end{remark}

\begin{remark}[Grand canonical ensemble]
    \label{rem:GCE}
    For the grand canonical model, we minimise $\mathcal G_{\beta,\mu}$ as defined in \cref{eq:ab-initio-GCE} over the space of density operators of the form $\gamma = \phi^T F_{\beta,\mu}(H_0 + v) \phi$ where $\widehat{v}\in\ell^\infty(I)$. Minimisers satisfy the Euler--Lagrange equation \cref{eq:ML-EL-beta} but with $\ep_{\mathrm{F}} = \mu$.  
    The proof of this result follows the exact same ideas as in the proof of \cref{ML-minimiser}.  
\end{remark}

\section{Results}
\label{sec:results}
In the foregoing section, we reviewed a general framework for parameterizing electronic structure models. In the present section we show how the ``global many-body'' functionals arising in those models can be decomposed into low-dimensional and spatially localized components. Although our results can be formally applied in the general setting of the previous section, for the sake of simplicity of presentation, we give rigorous statements only for the grand-canonical ensemble setting. We will comment on how to incorporate the charge constraint into the analysis in remarks following the main results. 

We fix the chemical potential $\mu$ and inverse Fermi-temperature $\beta \in (0,\infty]$ and consider the total free energy~\cref{eq:ab-initio-GCE} but now written explicitly as a function of the atomic environment and additional atomic features $\widehat{v}$:
\begin{align}
   \mathcal G(\bm r, \widehat{v}) 
    &\coloneqq \mathcal G_{\beta,\mu}\big( \phi^T F_{\beta,\mu}(H_0 + v) \phi \big) \nonumber \\
    &= \mathrm{Tr} \Big[ 
        (H_0-\mu) F_{\beta,\mu}(H_0 + v)
        + \beta^{-1} S\big( F_{\beta,\mu}(H_0 + v) \big)
    \Big] \nonumber \\
    &\qquad + \int \big[
        \rho(\bm r, \widehat{v}) V^\mathrm{nuc} + \tfrac{1}{2} \rho(\bm r, \widehat{v}) v_{\rm C} \rho(\bm r, \widehat{v}) + \ep_\mathrm{xc}( \rho(\bm r, \widehat{v}) ) 
    \big]
    \label{eq:ML-energy-GCE}
\end{align}
where $\rho(\bm r, \widehat{v};x) \coloneqq 2\phi(x)^T F_{\beta,\mu}(H_0 + v) \phi(x)$. 

To simplify the notation, we will from now on omit the parameters $(\beta,\mu)$ in $\mathcal G(\bm r,\widehat{v})$, and will similarly also write $F(x) \coloneqq F_{\beta,\mu}(x)$. Moreover, when $\widehat{v} \in \ell^\infty(I)$ is fixed and it is clear from the context, we will write $\Ham = H_0 + v$,
\[
    \mathsf{g}_- \coloneqq \dist\big( \mu, \sigma(\Ham) \cap (-\infty, \mu] \big),
    \quad \textrm{and} \quad 
    \mathsf{g}_+ \coloneqq \dist\big( \mu, \sigma(\Ham) \cap [\mu,\infty) \big).
\]
The spectral gap will be written $\mathsf{g} \coloneqq \mathsf{g}_- + \mathsf{g}_+ \geq 0$.

\subsection{Locality}
We first show that the total energy may be written as a sum of local energy contributions together with an explicit long-range electrostatic term:
\begin{theorem}
    \label{locality}
    Fix $\bm r$ and $\widehat{v}\in \ell^\infty(I)$ and suppose that either $\beta < \infty$ or that $\mathsf{g}_{\pm} > 0$. 
    Then, the total energy \cref{eq:ML-energy-GCE} may be decomposed as follows:
    \begin{align}
        \label{eq:locality-decomp}
        \mathcal G(\bm r, \widehat{v}) &= 
        \sum_\ell E_\ell(\bm r, \widehat{v}) +
        E_{\mathrm{el}}\big[ \rho(\bm r, \widehat{v}) \big] 
        \qquad \text{and} \qquad
        \rho(\bm r, \widehat{v}) = \sum_\ell \rho_\ell(\bm r, \widehat{v}), 
    \end{align}
    where 
    $E_{\mathrm{el}}[\rho] 
    \coloneqq \int \big[
        \rho V^\mathrm{nuc} + \tfrac{1}{2} \rho v_{\rm C} \rho
    \big]$ 
    and there exists $\eta >0$ such that
    \begin{align}
        \label{eq:locality}
        \begin{split}
        \left|
            \frac{\partial E_\ell(\bm r, \widehat{v})}{\partial \bm r_k}
        \right|
        + 
        \left|
            \frac{\partial E_\ell(\bm r, \widehat{v})}{\partial \widehat{v}_k}
        \right|
        &\lesssim e^{-\eta \,r_{\ell k}} \\
        \bigg|
            \frac{\partial \rho_\ell(\bm r, \widehat{v};x)}{\partial \bm r_k}
        \bigg|
        + 
        \bigg|
            \frac{\partial \rho_\ell(\bm r, \widehat{v};x)}{\partial \widehat{v}_k}
        \bigg|
        &\lesssim e^{-\eta\, [ |x - \bm r_\ell| + r_{\ell k} + |\bm r_k - x|]}.
        \end{split}
    \end{align}
    The exponent satisfies $\eta \sim \beta^{-1} + \min\{\mathsf{g}_-,\mathsf{g}_+\}$ 
    as $\beta^{-1} + \mathsf{g}\to 0$. 
    If $\beta = \infty$, we have $\eta \sim \mathsf{g}$ as $\mathsf{g}\to 0$.

    Similar estimates hold for higher derivatives as long as $\nu$ as in \cref{eq:phi-assumption,eq:v-basis-assumption} is sufficiently large. 
\end{theorem}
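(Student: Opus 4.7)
My plan is to reduce the statement to Combes--Thomas resolvent estimates applied to a contour representation of $F(H_0+v)$ and $S\bigl(F(H_0+v)\bigr)$, and then differentiate under the contour.

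\textbf{Step 1: Decomposition.} The long-range electrostatic functional $E_{\mathrm{el}}[\rho]$ is already isolated in \cref{eq:ML-energy-GCE}. I split the remaining, a priori short-range, contributions $\mathrm{Tr}\,\mathsf{M}(\bm r,\widehat v)+\int\ep_{\mathrm{xc}}(\rho)$, with $\mathsf{M}\coloneqq(H_0-\mu)F(H_0+v)+\beta^{-1}S\bigl(F(H_0+v)\bigr)$, by atom as follows. The trace term is split via the block structure of $H_0+v$: set $E_\ell^{\mathrm{band}}(\bm r,\widehat v)\coloneqq\sum_a[\mathsf{M}(\bm r,\widehat v)]_{\ell\ell,aa}$. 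The exchange--correlation integral is split via a fixed partition of unity $\sum_\ell\omega_\ell\equiv 1$ subordinate to the atomic sites, setting $E_\ell^{\mathrm{xc}}\coloneqq\int\omega_\ell(x)\ep_{\mathrm{xc}}(\rho(\bm r,\widehat v;x))\,\mathrm{d}x$, and one puts $E_\ell\coloneqq E_\ell^{\mathrm{band}}+E_\ell^{\mathrm{xc}}$. The site density is defined symmetrically,
\[
    \rho_\ell(\bm r,\widehat v;x)\coloneqq\sum_{k,ab}\bigl[\phi_{\ell a}(x)[F(H_0+v)]_{\ell k,ab}\phi_{kb}(x)+\phi_{kb}(x)[F(H_0+v)]_{k\ell,ba}\phi_{\ell a}(x)\bigr],
\]
which sums to $\rho(\bm r,\widehat v;x)$.

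\textbf{Step 2: Resolvent bound and contour representation.} The hypotheses \cref{eq:phi-assumption,eq:v-basis-assumption} imply $|[H_0]_{\ell k,ab}|+|[v]_{\ell k,ab}|\lesssim e^{-\eta_0 r_{\ell k}}$ for some $\eta_0>0$. I then choose a simple closed contour $\mathcal{C}\subset\mathbb{C}$ that encloses $\sigma(H_0+v)\cap(-\infty,\mu]$, stays at distance at least $\eta_1\sim\min\{\mathsf{g}_-,\mathsf{g}_+,\beta^{-1}\}$ from $\sigma(H_0+v)\cup\{\mu+i\pi(2j+1)\beta^{-1}:j\in\mathbb{Z}\}$, and along which $|F(z)|$ is integrable; when $\beta=\infty$ the Matsubara constraint is vacuous and the contour only needs to avoid $\sigma(H_0+v)$ at distance $\sim\mathsf{g}_\pm$. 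A standard Combes--Thomas conjugation then yields
\[
    \bigl|[(z-H_0-v)^{-1}]_{\ell k,ab}\bigr|\lesssim \eta_1^{-1}e^{-\eta r_{\ell k}},\qquad \eta\sim\min\{\eta_0,\eta_1\},
\]
uniformly in $z\in\mathcal{C}$. Substituting into $F(H_0+v)=\tfrac{1}{2\pi i}\oint_\mathcal{C}F(z)(z-H_0-v)^{-1}\,\mathrm{d}z$, and the analogous representation for $S(F(H_0+v))$, gives $|[F(H_0+v)]_{\ell k,ab}|\lesssim e^{-\eta r_{\ell k}}$ with the claimed dependence of $\eta$ on $\beta^{-1}$ and $\mathsf{g}_\pm$.

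\textbf{Step 3: Derivative estimates.} By the resolvent identity
\[
    \partial_{\widehat v_{kc}}(z-H_0-v)^{-1}=(z-H_0-v)^{-1}[\nabla_I v]_{kc}(z-H_0-v)^{-1},
\]
and since $[\nabla_I v]_{kc}=\int\phi\otimes\phi\,\vb_{kc}$ has matrix elements that are exponentially localised near $\bm r_k$, two applications of Step 2 give $\bigl|\partial_{\widehat v_{kc}}[F(H_0+v)]_{\ell m,ab}\bigr|\lesssim e^{-\eta(r_{\ell k}+r_{km})}$. Contracting with the bounded $H_0-\mu$ or $S'(F)$ (for $E_\ell^{\mathrm{band}}$), or with the orbital products $\phi_\ell(x)\phi_m(x)^{T}$ (for $\rho_\ell$, and, through $\ep_{\mathrm{xc}}'$, for $E_\ell^{\mathrm{xc}}$), yields the first bound in \cref{eq:locality}; the two extra factors $e^{-\eta|x-\bm r_\ell|}$ and $e^{-\eta|x-\bm r_k|}$ in the density estimate come from the pointwise decay of $\phi_\ell(x)$ and $\phi_k(x)$ guaranteed by \cref{eq:phi-assumption}. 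Position derivatives $\partial_{\bm r_k}$ are handled identically after noting that $\partial_{\bm r_k}H_0$, $\partial_{\bm r_k}v$ and $\partial_{\bm r_k}V^{\mathrm{nuc}}$ are each exponentially localised around $\bm r_k$, which is where the hypothesis $\nu\geq 2$ enters.

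\textbf{Main obstacle.} The delicate point is the finite-temperature contour, where one must balance $\dist(z,\sigma(H_0+v))$ (controlling resolvent decay) against $\dist\bigl(z,\mu+i\pi(2\mathbb{Z}+1)\beta^{-1}\bigr)$ (controlling $|F(z)|$) and propagate the resulting exponent all the way to the sharp asymptotics $\eta\sim\beta^{-1}+\min\{\mathsf{g}_-,\mathsf{g}_+\}$ claimed in the theorem. A contour deformation of the type used in \cite{ChenOrtnerThomas2019:locality} should handle this; the $\beta=\infty$ case then follows by pinching $\mathcal{C}$ onto the gap, recovering $\eta\sim\mathsf{g}$. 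The higher-derivative statement is obtained by iterating the resolvent identity, each iteration costing a factor $[\nabla_I v]$ or $\partial_{\bm r_k}(H_0+v)$ whose exponential localisation and regularity requires a correspondingly larger value of $\nu$ in \cref{eq:phi-assumption,eq:v-basis-assumption}.
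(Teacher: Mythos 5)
Your approach is essentially the same as the paper's: partition the band-structure trace over diagonal blocks, partition the exchange--correlation integral spatially, define the site density via off-diagonal contributions of $F(H_0+v)$ (your symmetrised formula agrees with the paper's since $F$ is symmetric), and then derive the locality estimates from a contour representation plus Combes--Thomas, differentiating through the resolvent identity. The paper formalises this via \cref{lem:Hamiltonian} (locality of $\Ham$ and its derivatives), \cref{lem:locality+bodyorder} (Combes--Thomas applied to $D^jO(\Ham)$ for $O\in\{F,G\}$) and \cref{cor:DjO}, and uses disjoint Voronoi-like sets $\mathcal{N}_\ell$ in place of your smooth partition of unity — an immaterial difference.

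One technical slip in your Step~2: for $\beta<\infty$ the contour must enclose \emph{all} of $\sigma(H_0+v)$, not just $\sigma(H_0+v)\cap(-\infty,\mu]$, because $F_{\beta,\mu}$ does not vanish on the unoccupied spectrum; the occupied-only contour is correct only at $\beta=\infty$ (this is precisely how the paper's \cref{lem:DF-negative} and \cref{lem:locality+bodyorder} distinguish the two cases). Relatedly, the paper works with $G(x)=\beta^{-1}\log(1-F(x))$ and the identity $\mathrm{tr}[(H_0-\mu)F+\beta^{-1}S(F)]_{\ell\ell}=\mathrm{tr}[G(\Ham)-vF(\Ham)]_{\ell\ell}$ at finite temperature, so that only bona fide matrix functions of $\Ham=H_0+v$ appear under the contour integral, whereas you keep the product $(H_0-\mu)F(H_0+v)$; this works but is slightly less clean. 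Finally, your provisional exponent $\eta_1\sim\min\{\mathsf{g}_-,\mathsf{g}_+,\beta^{-1}\}$ degenerates to zero when the gap closes at finite $\beta$, whereas the theorem needs the additive form $\eta\sim\beta^{-1}+\min\{\mathsf{g}_-,\mathsf{g}_+\}$ (at $\beta<\infty$ the contour can be pushed off the real axis by $\sim\beta^{-1}$ near $\mu$, so the distance to the spectrum is controlled by $\beta^{-1}$ even with no gap). You flag this in your "Main obstacle" paragraph, so it is an acknowledged loose end rather than a conceptual error — but the contour deformation you defer to is the step that actually produces the stated exponent, and it is carried out in the paper's \cref{lem:CT}--\cref{lem:locality+bodyorder}.
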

\begin{proof}[Sketch of the proof]
    First, we take a disjoint union $\dot{\bigcup}{}_{\ell = 1}^{M} \mathcal N_\ell = \mathbb R^3$, and define
    \begin{align}
         E_\ell(\bm r, \widehat{v}) 
         &= \mathrm{tr} \Big[ 
            (H_0-\mu) F(H_0 + v)
            + \beta^{-1} S\big( F(H_0 + v) \big)
        \Big]_{\ell\ell} + \int_{\mathcal N_\ell} \ep_\mathrm{xc}( \rho(\bm r, \widehat{v}) ) \label{eq:Eell-sketch-1}\\
        &= \mathrm{tr} \Big[ 
            G(H_0 + v)
            - v F(H_0 + v)
        \Big]_{\ell\ell} + \int_{\mathcal N_\ell} \ep_\mathrm{xc}( \rho(\bm r, \widehat{v}) )
        \label{eq:Eell-sketch-2}
    \end{align}
    where $G(x) \coloneqq \beta^{-1} \log\big( 1 - F(x) \big)$. For $\beta < \infty$, \cref{eq:Eell-sketch-2} is more convenient, while at $\beta = \infty$, we consider \cref{eq:Eell-sketch-1} (after disregarding the entropy).
    Moreover, we define 
    \begin{align}
        \label{eq:rhoell-sketch}
        \rho_\ell(\bm r, \widehat{v};x) \coloneqq 2\sum_k \phi_\ell(x)^{\mathrm{T}} F(H_0 + v)_{\ell k} \phi_k(x).
    \end{align} 
    To conclude, we can apply the main ideas of previous works \cite{ChenOrtnerThomas2019:locality,Thomas2020:scTB,ChenOrtner16,ChenLuOrtner18} to bound the derivatives of $\Ham \mapsto F(\Ham)$ and $\Ham\mapsto G(\Ham)$ evaluated at $\Ham = H_0 + v$. A full proof is given in \cref{sec:proofs}.
\end{proof}

\begin{remark}[Canonical ensemble]
    The locality estimates break down in the canonical ensemble at finite temperature because the choice of Fermi-level introduces non-locality into the system~\cite{ChenLuOrtner18}.
\end{remark}

\subsection{Body-ordered approximation}

Now, approximating $F$ and $G$ as in \cref{eq:Eell-sketch-1} (for $\beta = \infty$), \cref{eq:Eell-sketch-2} (for $\beta < \infty$), and \cref{eq:rhoell-sketch} by polynomials $F_N$ and $G_N$ of degree at most $N$, we obtain the following approximate energy: 
\begin{align}
    \label{eq:decomp-N}
    \mathcal G_N(\bm r, \widehat{v}) &= \sum_\ell E_{N,\ell}(\bm r, \widehat{v}) + E_{\mathrm{el}}\big[ \rho_N(\bm r, \widehat{v}) \big] 
    \qquad \text{and} \qquad
    \rho_N(\bm r, \widehat{v}) = \sum_\ell \rho_{N,\ell}(\bm r, \widehat{v}) 
\end{align}
where 
\begin{align}
    \rho_{N,\ell}(\bm r, \widehat{v};x) 
    &\coloneqq2 \sum_k \phi_\ell(x)^{\mathrm T} F_{N}\big( H_0 + v \big)_{\ell k} \phi_k(x), \qquad \text{and} \nonumber\\
    E_{N,\ell}(\bm r, \widehat{v}) 
     &= \begin{cases} \mathrm{tr} \Big[ 
        (H_0-\mu) F_N(H_0 + v)
    \Big]_{\ell\ell} + \int_{\mathcal N_\ell} \ep_\mathrm{xc}( \rho_N(\bm r, \widehat{v}) ), &\text{if } \beta = \infty\\
    \mathrm{tr} \Big[ 
            G_N(H_0 + v)
        - v F_N(H_0 + v)
    \Big]_{\ell\ell} + \int_{\mathcal N_\ell} \ep_\mathrm{xc}( \rho_N(\bm r, \widehat{v}) ), &\text{if } \beta < \infty.
    \end{cases} \nonumber 
\end{align}

\begin{remark}(Locality of the body--ordered approximations) \label{rem:locality-N}
    Following the same proof as Theorem~\ref{locality}, one may show that the body-ordered approximations also satisfy the locality estimates \cref{eq:locality}.
\end{remark}



The decomposition \cref{eq:decomp-N} is a nonlinear body-order expansion: 
    \begin{theorem}
        \label{body-order}
        $E_{N,\ell}(\bm r, \widehat{v})$ and $\rho_{N,\ell}(\bm r,\widehat{v})$ are functions of body-order at most $2N + 1$. 
        
        Specifically, with $\bm u_{\ell} \coloneqq ( \bm r_\ell, Z_\ell, \widehat{v}_{\ell} )$
        there exist 
        $U_{nN}^{(l)}(\bm u_{k_1}, \dots, \bm u_{k_n}) \in (\mathbb R^{N_\mathrm{b} \times N_\mathrm{b}})^{M\times M}$ 
        for $n = 0,\dots, 2N-1$ and $l = 1,2$ 
        and a nonlinear function $\ep_\ell \colon (\mathbb R^{N_\mathrm{b} \times N_\mathrm{b}})^{M\times M} \to \mathbb R$
        such that, on defining
    \begin{gather}
        \bm U_N^{(l)}
        \coloneqq U_{0N}^{(l)} 
        + \sum_{k} U_{1N}^{(l)}(\bm u_k) 
        + \dots + 
        \sum_{
            \above
            {k_1, \dots, k_{2N-1}} 
            {k_i \not= k_j \, \forall i\not=j}
        } 
        U_{2N-1, N}^{(l)}(\bm u_{k_1},\dots, \bm u_{k_{2N-1}})
        \label{eq:U}
    \end{gather}
    we obtain
    \begin{align}
        \rho_{N,\ell}(\bm r, \widehat{v}; x) 
        &= \phi_\ell(x)^{\mathrm T} 
        \big[
            \bm U_N^{(1)} \phi(x) 
        \big]_{\ell\ell} 
        \qquad \text{and} \label{eq:rho-phi-U-phi}\\ 
        E_{N,\ell}(\bm r, \widehat{v})
        &= \tr \big[
            \bm U_{N}^{(2)}
        \big]_{\ell\ell} 
        + \ep_\ell\big( \bm U_{N}^{(1)} \big). 
        \label{eq:E-tr+ep}
    \end{align} 
    Moreover, one can write $U_{nN}^{(l)}(\bm u_{k_1},\dots, \bm u_{k_{n}})_{\ell k} 
    = \widetilde{U}_{nN}^{(l)}(\bm u_{\ell k}; \bm u_{\ell  k_1},\dots, \bm u_{\ell k_{n}})$ 
    where 
    $\bm u_{\ell k} \coloneqq (\bm r_{\ell k}, Z_{\ell}, Z_k, \widehat{v}_\ell, \widehat{v}_k)$.
    \end{theorem}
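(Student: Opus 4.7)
The plan is to expand $F_N$ and $G_N$ as explicit polynomials and then reduce every matrix element appearing in $\rho_{N,\ell}$ and $E_{N,\ell}$ to a finite ``path sum'' over walks in the atomic graph; the body-ordered decomposition~\cref{eq:U} emerges by reorganising the path sum according to the set of distinct atoms involved.

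Writing $F_N(x)=\sum_{n=0}^{N} a_n x^n$ (and similarly $G_N$), the functional calculus gives $F_N(H_0+v)=\sum_n a_n (H_0+v)^n$, and each matrix power admits the expansion
\begin{equation*}
    [(H_0+v)^n]_{\ell k}
    \;=\; \sum_{k_1,\dots,k_{n-1}} \prod_{i=0}^{n-1} [H_0+v]_{k_i k_{i+1}},
    \qquad k_0:=\ell,\; k_n:=k.
\end{equation*}
Two structural facts control the body-order of each summand: the kinetic factor $[H_0]_{k_i k_{i+1}}=\tfrac12\int \nabla\phi_{k_i}\!\cdot\!\nabla\phi_{k_{i+1}}$ depends only on the atoms $k_i,k_{i+1}$, whereas the potential factor $[v]_{k_i k_{i+1}}=\sum_{mc}\widehat v_{mc}\int\phi_{k_i}\,\vb_{mc}\,\phi_{k_{i+1}}$ contains, through the sum over $m$, a single extra ``potential-centre atom'' per edge. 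A walk of length $n$ from $\ell$ to $k$ therefore involves at most $n-1$ intermediate path atoms plus $n$ potential-centre atoms, for a total of at most $2n-1$ environment atoms beyond $\{\ell,k\}$, and hence overall body-order at most $2n+1 \leq 2N+1$.

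I would then reorganise this path sum: for each fixed unordered collection of $j\le 2N-1$ distinct environment atoms $\{k_1,\dots,k_j\}$, collect all contributions from walks whose intermediate vertices and potential-centre atoms lie inside $\{\ell,k,k_1,\dots,k_j\}$, applying a standard inclusion--exclusion to avoid double counting when two atoms coincide. This produces symmetric block-matrix kernels $U_{jN}^{(1)}(\bm u_{k_1},\dots,\bm u_{k_j})$; setting $\bm U_N^{(1)}:=2F_N(H_0+v)$ yields~\cref{eq:U}, and~\cref{eq:rho-phi-U-phi} follows directly from $\rho_{N,\ell}(x)=2\sum_k\phi_\ell(x)^{\mathrm T}[F_N(H_0+v)]_{\ell k}\phi_k(x)$. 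The same reorganisation applied to $\mathrm{tr}[(H_0-\mu)F_N(H_0+v)]_{\ell\ell}$ (for $\beta=\infty$) or $\mathrm{tr}[G_N(H_0+v)-v F_N(H_0+v)]_{\ell\ell}$ (for $\beta<\infty$) defines $\bm U_N^{(2)}$; here the closed-walk constraint at $\ell$ removes one endpoint degree of freedom, keeping the effective body-order inside the target bound. The remaining XC integral $\int_{\mathcal N_\ell}\ep_{\mathrm{xc}}(\rho_N(\bm r,\widehat v;x))\,\mathrm dx$ is nonlinear in $\rho_N$, but $\rho_N(x)=\phi(x)^{\mathrm T}\bm U_N^{(1)}\phi(x)$ is an explicit function of $\bm U_N^{(1)}$ and the fixed basis evaluations $\phi_k(x)$, so I would package it as a single nonlinear functional $\ep_\ell(\bm U_N^{(1)})$, giving~\cref{eq:E-tr+ep}.

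Finally, the relative-coordinate statement $\widetilde U_{nN}^{(l)}(\bm u_{\ell k};\bm u_{\ell k_1},\dots,\bm u_{\ell k_n})$ follows from translation invariance of the building blocks: the overlap integrals $\int \phi_\ell\,\vb_m\,\phi_k$ and the kinetic integrals $\int\nabla\phi_\ell\!\cdot\!\nabla\phi_k$ depend on atomic positions only through the differences $\bm r_{\ell k}$ and $\bm r_{\ell m}$, and this invariance propagates through the path sum. I expect the main obstacle to be the combinatorial inclusion--exclusion step: one must verify that intermediate walk vertices and potential-centre atoms which happen to coincide are not double counted, and that the resulting $U_{jN}^{(l)}$ are genuine symmetric functions of exactly $j$ distinct atoms rather than naive sums that conflate different coincidence patterns.
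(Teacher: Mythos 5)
Your proposal follows essentially the same route as the paper's proof: the Hamiltonian entry $\Ham_{\ell k,ab}$ is decomposed as $\sum_m \Ham_{\ell k m,ab}$, where each $\Ham_{\ell k m}$ is at most $3$-body (the two orbital centres plus one potential-centre atom $m$ coming from $v$), and then $[\Ham^n]_{\ell k}$ is expanded as a path sum over intermediate vertices $\ell_1,\dots,\ell_{n-1}$ and potential centres $m_1,\dots,m_n$, giving at most $2n-1\leq 2N-1$ environment atoms and hence body-order $\leq 2N+1$; the XC term is packaged as $\ep_\ell(\bm U)=\int_{\mathcal N_\ell}\ep_{\mathrm{xc}}\!\big(\bm U:[\phi(x)\otimes\phi(x)]\big)\,\mathrm{d}x$, exactly as you describe, and the relative-coordinate statement is read off from translation invariance of the kinetic and overlap integrals in \cref{eq:Ham-3-body}. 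The one step you flag as the ``main obstacle'' — the inclusion–exclusion reorganisation of the unconstrained path sum into the $k_i\neq k_j$ form of \cref{eq:U} — is also not carried out in the paper; it is instead delegated to the earlier body-order paper \cite{ThomasChenOrtner2022:body-order} where the $U_{nN}^{(l)}$ are constructed explicitly, so your level of detail matches the paper's. Your choice $\bm U_N^{(1)}=2F_N(H_0+v)$ is in fact the cleaner normalisation for \cref{eq:rho-phi-U-phi}.
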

    \begin{proof}[Sketch of the proof]
        Since entries of the Hamiltonian $H_0 + v$ have body-order at most $3$, for all polynomials $P_N$ of degree at most $N$, we have that $P_N(H_0 + v)_{\ell k}$ are quantities of body-order at most $2N+1$. In particular, the quantities 
        \[
            \rho_{N,\ell}(\bm r, \widehat{v}),
            \quad 
            \mathrm{tr} \big[ (H_0 - \mu) F_N(H_0 + v) \big]_{\ell\ell},
            \quad \textrm{and} \quad 
            \mathrm{tr} \big[ G_N(H_0 + v) - v F_N(H_0 + v)\big]_{\ell\ell}
        \]
        have body-order at most $2N+1$.
    \end{proof}

\begin{remark}[Canonical ensemble]
    \label{rem:CE}
    In the canonical ensemble, a possible choice for the body-ordered approximation to the Fermi-level, $\ep_{\mathrm{F},N}$, is a solution to the equation 
    \[ 
        \mathrm{Tr}\,F_{N,\beta,\ep_{\mathrm{F},N}}\big(H_0 + v\big) = N_{\mathrm{el}}
    \]
    where $F_{N,\beta,\mu}$ is a degree $N$ polynomial approximation to $F_{\beta,\mu}$. Therefore, one must undergo an additional pre-computation step to find an approximate Fermi-level. In this way, the approximate Fermi-level is a nonlinear body-ordered function. That model can still be decomposed into parameterized components that are local and body-ordered components, but the operations performed on those components become more complex.
\end{remark}

\subsection{Convergence of minimisers}
We first note that critical points of the approximate energy satisfy the following Euler--Lagrange equation:
\begin{lemma}
    \label{ML-minimiser-N}
    If $\beta < \infty$, then we approximate $G$ with a polynomial $G_N$ of degree at most $N$, and define $F_N \coloneqq G_N^\prime$. On the other hand, if $\beta = \infty$, then we simply approximate $F$ with a polynomial $F_N$ of degree at most $N$.
    
    Suppose $\widehat{v}^\star$ minimises $\mathcal G_N(\bm r, \widehat{v})$ (at the fixed geometry $\bm r$). Then, 
    \begin{align}
        \label{eq:ML-EL-N}
        \Braket{ - DF_N\big( H_0 + v^\star \big) 
        \Big[ 
            v^\star - \int \phi \otimes \phi \, V_{\mathrm{eff}}[\rho_{N}(\bm r, \widehat{v}^\star)]  
        \Big], \nabla_{I} v^\star } = 0.
    \end{align}
\end{lemma}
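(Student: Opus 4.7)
The proof proceeds by direct computation of the gradient of the smooth map $\widehat v\mapsto \mathcal G_N(\bm r,\widehat v)$ and imposition of the first-order optimality condition at the minimiser $\widehat v^\star$, paralleling the argument sketched for \cref{ML-minimiser}. Summing \cref{eq:decomp-N} over $\ell$, the approximate free energy reads
\[
\mathcal G_N(\bm r,\widehat v)=\mathrm{Tr}\big[G_N(H_0+v)-vF_N(H_0+v)\big]+\int \ep_{\mathrm{xc}}(\rho_N)+E_{\mathrm{el}}[\rho_N]
\]
for $\beta<\infty$, with $\mathrm{Tr}[(H_0-\mu)F_N(H_0+v)]$ replacing the first trace at $\beta=\infty$. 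The two ingredients that drive the calculation are the trace identity $\frac{d}{dt}\big|_{t=0}\mathrm{Tr}[P_N(H+t\delta v)]=\mathrm{Tr}[P_N'(H)\delta v]$ valid for any polynomial $P_N$, and the chain-rule identity $\partial_{\widehat v_{mc}}\rho_N(x)=2\phi(x)^{\mathrm T}DF_N(H_0+v)\big[[\nabla_I v]_{mc}\big]\phi(x)$ obtained by differentiating $\rho_N(x)=2\phi(x)^{\mathrm T}F_N(H_0+v)\phi(x)$.

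For $\beta<\infty$, I would apply the product rule to $\mathrm{Tr}[vF_N(H_0+v)]$ and exploit the defining relation $G_N'=F_N$ together with cyclicity of trace: the two contributions $\mathrm{Tr}[F_N(H)\delta v]$ (from differentiating $G_N$ and from the product rule on $-vF_N$) cancel, leaving $-\mathrm{Tr}[v\,DF_N(H_0+v)[\delta v]]$ as the spectral part of the gradient. Assembling the nuclear, Hartree, and exchange--correlation contributions to the electronic derivative collapses to the effective potential $V_{\mathrm{eff}}[\rho_N]=V^{\mathrm{nuc}}+v_C\rho_N+\ep_{\mathrm{xc}}'(\rho_N)$ and yields (modulo spin-factor conventions)
\[
\partial_{\widehat v_{mc}}\Big(\textstyle\int\ep_{\mathrm{xc}}(\rho_N)+E_{\mathrm{el}}[\rho_N]\Big)=\mathrm{Tr}\Big[DF_N(H_0+v)\big[[\nabla_I v]_{mc}\big]\cdot\textstyle\int\phi\otimes\phi\,V_{\mathrm{eff}}[\rho_N]\Big].
\]
Summing the two contributions, equating to zero at $\widehat v^\star$, and invoking the trace self-adjointness of the super-operator $A\mapsto DF_N(H)[A]$ (verified monomial-by-monomial by cyclicity) reproduces \cref{eq:ML-EL-N}.

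The main obstacle is the $\beta=\infty$ case, in which $G_N$ plays no role and the spectral derivative reads $\mathrm{Tr}[(H_0-\mu)DF_N(H_0+v)[\delta v]]$. Splitting $H_0-\mu=H-\mu-v$ and using the polynomial identity $\mathrm{Tr}[(H-\mu)DF_N(H)[\delta v]]=\mathrm{Tr}[(H-\mu)F_N'(H)\delta v]$ produces the desired $-\mathrm{Tr}[v\,DF_N(H)[\delta v]]$ together with a residual $\mathrm{Tr}[(H-\mu)F_N'(H)\delta v]$. In the exact zero-temperature limit this residual vanishes because $(x-\mu)\chi'_{(-\infty,\mu)}(x)=0$ on the spectral gap around $\mu$; for the polynomial $F_N$ one recovers the stated EL equation either by passing to the $\beta\to\infty$ limit of the finite-temperature argument under the gap assumption, or by restricting attention to polynomial approximations $F_N$ for which $F_N'$ is annihilated on $\sigma(H_0+v^\star)$. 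This is where the real technical care lies; everything else is an essentially routine combination of the trace-polynomial calculus and the cancellations enforced by $G_N'=F_N$.
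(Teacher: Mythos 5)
Your argument for $\beta<\infty$ is essentially the paper's own: the paper's proof of \cref{ML-minimiser-N} is a one-line reference to the proof of \cref{ML-minimiser}, and the mechanism you lay out — differentiating $\mathrm{Tr}\big[G_N(\Ham)-vF_N(\Ham)\big]$ by the polynomial trace identity and the product rule, cancelling the two $\mathrm{Tr}[F_N(\Ham)\,\delta v]$ contributions via $G_N'=F_N$, and pairing the electronic derivative with $V_{\mathrm{eff}}[\rho_N]$ using the self-adjointness of $A\mapsto DF_N(\Ham)[A]$ — is exactly the calculation that proof of \cref{ML-minimiser} carries out in \cref{eq:E-rewritten,eq:der-E-T1,eq:der-rho,eq:der-E-der-v}, now applied with $F$, $G$ replaced by $F_N$, $G_N$. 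That part of your proof is correct and faithful to the paper.

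Your flag on the $\beta=\infty$ case is well aimed, and in fact identifies an issue the paper's proof does not address. In \cref{eq:decomp-N} the zero-temperature approximate energy is defined via $\mathrm{tr}\big[(H_0-\mu)F_N(\Ham)\big]_{\ell\ell}$ rather than $\mathrm{tr}\big[G_N(\Ham)-vF_N(\Ham)\big]_{\ell\ell}$. Differentiating the former and splitting $H_0-\mu=(\Ham-\mu)-v$ leaves exactly the residual you compute, $\mathrm{Tr}\big[(\Ham-\mu)F_N'(\Ham)\,\delta v\big]=\sum_s F_N'(\lambda_s)(\lambda_s-\mu)\,\mathrm{Tr}[P_s\,\delta v]$, which vanishes for the exact $F=\chi_{(-\infty,\mu)}$ (since $\chi'(\lambda_s)=0$ on the gapped spectrum) but not for a generic degree-$N$ polynomial $F_N$. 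So the claimed \cref{eq:ML-EL-N} does not follow from the paper's $\beta=\infty$ definition. However, neither of your two suggested remedies closes the gap: passing to the $\beta\to\infty$ limit of the finite-temperature argument yields the Euler--Lagrange equation for $\mathrm{Tr}\big[G_{N,\infty}(\Ham)-vF_{N,\infty}(\Ham)\big]$, which is a genuinely different functional from $\mathrm{Tr}\big[(H_0-\mu)F_N(\Ham)\big]$ unless $G_{N,\infty}(x)=(x-\mu)F_{N,\infty}(x)$ (incompatible with $G_{N,\infty}'=F_{N,\infty}$); and requiring $F_N'$ to vanish on $\sigma(H_0+v^\star)$ is a solution-dependent constraint that cannot be imposed a priori. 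The fix that actually works — and that you should supply — is to also define the $\beta=\infty$ approximate energy as $\mathrm{tr}\big[G_N(\Ham)-vF_N(\Ham)\big]_{\ell\ell}$ with $G_N'=F_N$, taking $G_N$ to approximate $G_\infty(x)=(x-\mu)\chi_{(-\infty,\mu)}(x)$; then the cancellation you exploit at $\beta<\infty$ holds identically and \cref{eq:ML-EL-N} follows at all temperatures.
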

\begin{proof}[Sketch of the proof]
    The proof is analogous to that of \cref{ML-minimiser}. In~\cref{sec:proofs} we briefly outline the minor differences.
\end{proof}

We now consider stable critical points $\widehat{v}^\star$ of $\mathcal G$: 
\begin{align}\label{eq:stable-critical}
    \nabla \mathcal G(\widehat{v}^\star) = 0, 
    \qquad \text{and} \qquad
    \nabla^2 \mathcal G(\widehat{v}^\star) \text{ invertible}.
\end{align}

We wish to approximate $\widehat{v}^\star$ and (hence the corresponding electron density $\rho(\bm r, \widehat{v}^\star)$) with solutions $\widehat{v}^\star_N$ to the approximate energy $\mathcal G_N$ (and the corresponding density $\rho_N(\bm r, \widehat{v}^\star_N)$). To do so, we require the following result about the convergence of the polynomial approximations $F_N$ and $G_N$:
\begin{lemma}
    \label{poly-approx}
    Fix $\mu \in \mathbb R$, $\beta \in(0,\infty]$, 
    $X \coloneqq [\underline{\sigma}, \mu - \mathsf{g}_-] \cup [\mu + \mathsf{g}_+, \overline{\sigma}] \subset \mathbb R$, 
    and $\mathsf{g} \coloneqq \mathsf{g}_- + \mathsf{g}_+$.
    Then, for $N$ sufficiently large, there exists $\theta > 0$ and polynomial approximations $G_N$ and $F_N \coloneqq G_N^\prime$ such that
    \[
        \big\|  G - G_N \big\|_{L^\infty(X)}
        + \big\| F - F_N \big\|_{L^\infty(X)}
        \lesssim e^{-\theta N},
    \]
    where $\theta \sim \beta^{-1} + \sqrt{\mathsf{g}_-}\sqrt{\mathsf{g}_+}$ as $\beta^{-1} + \mathsf{g}\to 0$. If $\beta = \infty$, we have $\theta \sim \mathsf{g}$ as $\mathsf{g}\to 0$. 
\end{lemma}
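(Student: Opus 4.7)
The plan is to reduce the estimate on $G$ to one on $F$ via antidifferentiation and then invoke classical results in polynomial approximation. Fix any reference point $x_0 \in X$ and define $G_N(x) \coloneqq G(x_0) + \int_{x_0}^x F_N(s)\,\mathrm{d}s$, so that $G_N$ is a polynomial of degree $\deg F_N + 1$ with $G_N' = F_N$, and since $G' = F$ one has $\|G - G_N\|_{L^\infty(X)} \leq \mathrm{diam}(X)\cdot \|F - F_N\|_{L^\infty(X)}$. Thus it suffices to construct polynomial approximations $F_N$ of degree $N-1$ with the claimed exponential rate; the rate constant is then inherited by $G_N$ up to an innocuous prefactor.

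For $\beta < \infty$, the function $F_{\beta,\mu}(z) = (1 + e^{\beta(z-\mu)})^{-1}$ extends meromorphically to $\mathbb{C}$ with simple poles only at $z_k = \mu + \mathrm{i}\pi(2k+1)/\beta$, $k \in \mathbb{Z}$. In particular $F_{\beta,\mu}$ is analytic and uniformly bounded in every open strip $|\Im z| < c\pi/\beta$ with $c < 1$. The classical Bernstein--Walsh lemma then furnishes polynomials $F_N$ with $\|F - F_N\|_{L^\infty(X)} \lesssim R^{-N}$, where $\log R$ equals the value at infinity of the Green's function of the complement of $X$ in that strip; a direct estimate gives $\log R \gtrsim \beta^{-1}$. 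Concretely, one may expand $F$ in Chebyshev polynomials on each of the two sub-intervals $X_-$ and $X_+$ separately (exponential rate coming from analyticity inside Bernstein ellipses whose sum of semi-axes is $\sim e^{c/\beta}$) and combine the two partial sums via an interpolation polynomial that is close to $1$ on $X_-$ and to $0$ on $X_+$ to within arbitrarily high polynomial precision.

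For $\beta = \infty$, the target $F_\infty$ equals $1$ on $X_- = [\underline{\sigma}, \mu - \mathsf{g}_-]$ and $0$ on $X_+ = [\mu + \mathsf{g}_+, \overline{\sigma}]$, so approximation is controlled purely by the geometry of the two intervals rather than by analyticity. This is essentially the classical Zolotarev problem: the best polynomial approximation on $X_-\cup X_+$ of a function that is constant on each component converges exponentially, with rate given by the value at infinity of the Green's function of $\widehat{\mathbb{C}}\setminus(X_-\cup X_+)$. This Green's function is computable in closed form through elliptic integrals; the resulting asymptotics (as in classical treatments of Zolotarev's third and fourth problems) yield, after affine normalisation of $X$ and with bounded diameter, $\log R \sim \sqrt{\mathsf{g}_-\,\mathsf{g}_+}$ as $\mathsf{g} \to 0$. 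Combining the two regimes and taking the larger rate gives the unified estimate $\theta \sim \beta^{-1} + \sqrt{\mathsf{g}_-\,\mathsf{g}_+}$.

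The main technical obstacle is the sharp $\sqrt{\mathsf{g}_-\,\mathsf{g}_+}$ scaling in the $\beta = \infty$ case. Exponential convergence at \emph{some} positive rate follows from any reasonable construction (for instance a product of Chebyshev polynomials shifted to $X_+$ that is exponentially small there and bounded below on $X_-$, balanced by a symmetric counterpart), but identifying the optimal rate requires the potential-theoretic characterisation above. By contrast, the finite-$\beta$ half of the argument is a standard Chebyshev/Paley--Wiener calculation, and the passage from $F_N$ to $G_N$ costs only a multiplicative factor $\mathrm{diam}(X)$.
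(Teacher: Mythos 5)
The antidifferentiation reduction $G_N(x)\coloneqq G(x_0)+\int_{x_0}^x F_N$ breaks down precisely at $\beta=\infty$, which is one of the two cases the lemma covers. You claim $\|G-G_N\|_{L^\infty(X)}\leq\mathrm{diam}(X)\,\|F-F_N\|_{L^\infty(X)}$, but for $x_0\in X_-$ and $x\in X_+$ the integral $\int_{x_0}^x(F-F_N)$ traverses the spectral gap $(\mu-\mathsf{g}_-,\mu+\mathsf{g}_+)$, a set on which you have no control on $F-F_N$. At $\beta=\infty$, $F=\chi_{(-\infty,\mu)}$ jumps in the gap, so $F-F_N=O(1)$ there and $\int_{\mathrm{gap}}(F-F_N)=O(\mathsf{g})$, a constant independent of $N$; the claimed exponential bound for $\|G-G_N\|_{L^\infty(X)}$ therefore does not follow. (At $\beta<\infty$ the reduction survives only if you approximate $F$ on the full interval $[\underline{\sigma},\overline{\sigma}]$ rather than on the disconnected set $X$, which you never quite say.) The paper sidesteps this by working with the Hermite (Cauchy) integral formula for the interpolation error on a contour $\mathscr{C}$ encircling $X$; differentiating that contour representation controls $G-G_N$ and $(G-G_N)'=F-F_N$ simultaneously at the same exponential rate, so no gap-crossing integral is ever needed. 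To salvage your approach at $\beta=\infty$ you would need to impose the extra moment condition $\int_{\mathrm{gap}}F_N=\mathsf{g}_-$ on the constructed $F_N$ and verify that this constraint is compatible with the exponential rate; as written, this step is missing.

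A second issue is the asymptotic you produce at $\beta=\infty$. You derive $\theta\sim\sqrt{\mathsf{g}_-\mathsf{g}_+}$ by invoking ``Zolotarev's third and fourth problems'', but those are theorems about \emph{rational} approximation of the sign function; the relevant object here is best \emph{polynomial} approximation of the step on the two intervals, whose rate is the saddle value of the Green's function $g_\Omega(\cdot,\infty)$ for $\Omega=\overline{\mathbb{C}}\setminus(X_-\cup X_+)$, and that saddle value scales like $\mathsf{g}_-+\mathsf{g}_+=\mathsf{g}$ as the gap closes (as one sees, e.g., by reducing the asymmetric two-interval configuration to a symmetric one of half-gap $(\mathsf{g}_-+\mathsf{g}_+)/2$ and noting the pole is pushed far away). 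This matches the lemma's ``$\theta\sim\mathsf{g}$ if $\beta=\infty$'' but not your $\sqrt{\mathsf{g}_-\mathsf{g}_+}$: the two only agree when $\mathsf{g}_-\sim\mathsf{g}_+$, and in the asymmetric regime $\sqrt{\mathsf{g}_-\mathsf{g}_+}\ll\mathsf{g}$ undersells the true rate. So the $\beta=\infty$ half of your argument does not prove the stated asymptotics; the potential-theoretic framework you gesture at is the right one, but the rate constant is misidentified.
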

\begin{proof}
    The proof uses logarithmic potential theory, see \cite{Saff2010} for an overview of the general setting. For details of this theory applied to the particular functions $F$ and $G$, see \cite{ThomasChenOrtner2022:body-order}.
\end{proof}

With this result at hand, we define the $\mathcal G_N$ as in \cref{eq:decomp-N} using $F_N$ and $G_N$ from Lemma~\ref{poly-approx}. Then, we may approximate $\widehat{v}^\star$ with critical points of $\mathcal G_N$. The following result is a direct consequence of regular perturbation theory.

\begin{theorem}
    \label{minimisers}
    Suppose $\widehat{v}^\star$ is a stable critical point of $\mathcal G$. Then, there exists $\overline{C} > 0$ such that for $\theta N > 2 \log M + \overline{C}$, there exists a critical point $\widehat{v}_N^\star$ of $\mathcal G_{N}$ such that
    \[
        \| \widehat{v}_N^\star - \widehat{v}^\star \|_{\ell^\infty}
        +
        \big\|\rho_N(\bm r, \widehat{v}^\star_N) - \rho(\bm r, \widehat{v}^\star)\big\|_{L^\infty(\mathbb R^3)}
        \lesssim M e^{-\theta N}.
    \]
    Moreover, we have 
    \[
        \big| \mathcal G_N(\bm r, \widehat{v}_N^\star) - \mathcal G(\bm r, \widehat{v}^\star) \big|
        \lesssim M^2 e^{-\theta N}.
    \]
\end{theorem}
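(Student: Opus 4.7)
The plan is a quantitative inverse function / Newton--Kantorovich argument for the map $\widehat v \mapsto \nabla \mathcal G_N(\bm r, \widehat v)$ near the stable critical point $\widehat v^\star$, exploiting (i) $\nabla \mathcal G(\widehat v^\star) = 0$ with $\nabla^2 \mathcal G(\widehat v^\star)$ invertible by \cref{eq:stable-critical}, and (ii) the polynomial approximation bound from \cref{poly-approx}. First, I would promote the scalar estimate $\|F-F_N\|_{L^\infty(X)} + \|G-G_N\|_{L^\infty(X)} \lesssim e^{-\theta N}$ to an operator-level estimate via the same holomorphic functional calculus (Cauchy integral around a contour encircling $\sigma(H_0 + v)$ at distance of order $\mathsf g$) that underlies the proof of \cref{locality}. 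Uniformly for $\widehat v$ in a fixed $\ell^\infty$-ball around $\widehat v^\star$, this yields operator-norm bounds of order $e^{-\theta N}$ on $F(H_0+v) - F_N(H_0+v)$ and $G(H_0+v) - G_N(H_0+v)$, together with analogous bounds on their first and second Fr\'echet derivatives in $\widehat v$. Combined with the exponential localisation of $\phi$ and $\vb$ from \cref{eq:phi-assumption,eq:v-basis-assumption} and the locality estimates of \cref{locality} and \cref{rem:locality-N}, the per-site derivative $\partial_{\widehat v_k}(\mathcal G - \mathcal G_N)(\widehat v)$ depends essentially only on the local environment of atom $k$, so using $\nabla \mathcal G(\widehat v^\star) = 0$,
\[
\|\nabla \mathcal G_N(\widehat v^\star)\|_{\ell^\infty(I)} \lesssim e^{-\theta N}, \qquad \|\nabla^2(\mathcal G_N - \mathcal G)(\widehat v^\star)\|_{\mathrm{op}} \lesssim e^{-\theta N}.
\]

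Next, the invertibility of $\nabla^2 \mathcal G(\widehat v^\star)$ as a bilinear form on $\ell^2(I)$ gives a bounded inverse on $\ell^2$; converting to $\ell^\infty \to \ell^\infty$ on the $M$-dimensional feature space costs a factor $M$ via $\|\cdot\|_{\ell^\infty} \leq \|\cdot\|_{\ell^2} \leq \sqrt{M}\|\cdot\|_{\ell^\infty}$. The Newton map $\Phi(\widehat v) \coloneqq \widehat v - [\nabla^2 \mathcal G(\widehat v^\star)]^{-1} \nabla \mathcal G_N(\widehat v)$ is then a contraction on an $\ell^\infty$-ball of radius of order $M e^{-\theta N}$ around $\widehat v^\star$ precisely when $\theta N > 2\log M + \overline C$ for a suitable constant $\overline C$, producing a unique fixed point $\widehat v_N^\star$ with $\nabla \mathcal G_N(\widehat v_N^\star) = 0$ and $\|\widehat v_N^\star - \widehat v^\star\|_{\ell^\infty} \lesssim M e^{-\theta N}$.

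The density estimate then follows from the splitting $\rho_N(\widehat v_N^\star) - \rho(\widehat v^\star) = [\rho_N(\widehat v_N^\star) - \rho_N(\widehat v^\star)] + [\rho_N(\widehat v^\star) - \rho(\widehat v^\star)]$: the first bracket is $\lesssim M e^{-\theta N}$ by the Lipschitz dependence of $\rho_N$ on $\widehat v$ (locality of $\rho_{N,\ell}$), and the second is $\lesssim e^{-\theta N}$ from the operator bound of Step~1. For the energy, I would decompose $\mathcal G_N(\widehat v_N^\star) - \mathcal G(\widehat v^\star) = [\mathcal G_N - \mathcal G](\widehat v_N^\star) + [\mathcal G(\widehat v_N^\star) - \mathcal G(\widehat v^\star)]$: summing $M$ local contributions plus the nonlocal electrostatic term (whose global Coulomb coupling of $\rho$ contributes an extra factor of $M$) bounds the first bracket by $M^2 e^{-\theta N}$, while a Taylor expansion around the critical point yields $|\mathcal G(\widehat v_N^\star) - \mathcal G(\widehat v^\star)| \lesssim \|\nabla^2 \mathcal G(\widehat v^\star)\|_{\ell^2\to\ell^2} \|\widehat v_N^\star - \widehat v^\star\|_{\ell^2}^2 \lesssim M^3 e^{-2\theta N}$, which is absorbed into $M^2 e^{-\theta N}$ under the standing hypothesis.

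The principal difficulty is the delicate accounting of $M$-factors arising from the long-range electrostatic term $E_{\mathrm{el}}$ and from conversions between $\ell^\infty$/$\ell^2$ norms on features and pointwise/integrated norms on densities. The exact power of $M$ appearing in the smallness threshold $\theta N > 2\log M + \overline C$ is dictated by the worst of these norm conversions, and one must verify that the natural $\ell^2$-structure in which $\nabla^2 \mathcal G(\widehat v^\star)$ is invertible interacts cleanly with the $\ell^\infty$-based locality estimates of \cref{locality}; in particular, one should check that the contraction radius survives the nonlocal Coulomb coupling uniformly in $M$ subject to the stated threshold.
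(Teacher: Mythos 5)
Your overall strategy --- Newton--Kantorovich applied to the Euler--Lagrange map $\widehat v \mapsto \nabla\mathcal G_N(\widehat v)$ near $\widehat v^\star$ --- is exactly what the paper does: it introduces $T_N(\widehat v)_{mc} = \partial_{\widehat v_{mc}}\mathcal G_N$ and $T(\widehat v)_{mc} = \partial_{\widehat v_{mc}}\mathcal G$, verifies the Newton--Kantorovich hypotheses for $T_N$ about $\widehat v^\star$, and obtains a zero $\widehat v_N^\star$ at distance $\lesssim M e^{-\theta N}$. The density and energy estimates also proceed by the same splittings you propose. However, your accounting of $M$-factors contains a genuine error that, by accident, partially compensates for a second questionable step.

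The key mistake is the claim that the per-site gradient error is local, so that $\|\nabla\mathcal G_N(\widehat v^\star)\|_{\ell^\infty(I)} \lesssim e^{-\theta N}$ with no $M$-factor. This is false: the effective potential $V_{\rm eff}[\rho] = V^{\rm nuc} + v_{\rm C}\rho + \ep_{\rm xc}'(\rho)$ enters each component of $T$ and $T_N$ via the long-range Coulomb kernel, and the nuclear/Hartree contribution produces a sum $\sum_m \frac{1}{r_{\ell m}+r_{mk}}$ over all nuclei. Each term is bounded but there are $M$ of them, so the correct consistency estimate is $\|(T-T_N)(\widehat v^\star)\|_{\ell^\infty} \lesssim M e^{-\theta N}$ (this is precisely \cref{eq:c} in the paper's proof). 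You are aware there is a difficulty in "accounting for $M$-factors arising from the long-range electrostatic term," but the resolution is not a norm-conversion issue: it enters at the level of the $\ell^\infty$ consistency (and Lipschitz) bounds themselves.

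Your second $M$-factor, coming from converting $\ell^2$-stability of $\nabla^2\mathcal G(\widehat v^\star)$ to an $\ell^\infty\to\ell^\infty$ bound on the inverse, is also not what the paper does, and the numerical factor is off: $\|A\|_{\ell^\infty\to\ell^\infty} \leq \sqrt{M}\,\|A\|_{\ell^2\to\ell^2}$ gives $\sqrt{M}$, not $M$. The paper instead takes the stability hypothesis \cref{eq:stable-critical} to mean $\|DT(\widehat v^\star)^{-1}\|_{\ell^\infty\to\ell^\infty} \leq c_{\rm stab}$ with $c_{\rm stab}$ independent of $M$, and puts the two $M$-factors directly on the $\ell^\infty$ consistency estimate $\ep_N \sim M e^{-\theta N}$ and the $\ell^\infty$ Lipschitz constant $L \sim M$ (from $\|D^2 T_N\|_{\ell^\infty\times\ell^\infty\to\ell^\infty}\lesssim M$, again driven by the Coulomb term). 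This yields the Newton--Kantorovich smallness $c_{\rm stab,N}^2 \ep_N L \lesssim M^2 e^{-\theta N}$ and hence the $\theta N > 2\log M + \overline C$ threshold. With your bookkeeping, the exponent on $M$ in the threshold comes out wrong unless the two errors cancel.

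Finally, you omit a step the paper handles carefully: the Newton--Kantorovich argument requires all estimates (Lipschitz, consistency) to hold uniformly on a ball $B_\delta(\widehat v^\star)$ in $\ell^\infty$. Because a perturbation in $\widehat v$ enters the Hamiltonian $H_0+v$ through a global sum over sites, the spectral gap at $\mu$ can only be guaranteed on a ball of shrinking radius $\delta \sim M^{-1/2}$ (see \cref{eq:dist-spec} and \cref{eq:T-smooth} in the paper). This induces a further condition $2c_{\rm stab,N}\ep_N < \delta$ and is compatible with the $\theta N > 2\log M + \overline C$ threshold, but it must be verified; the argument fails if one works on a ball of $M$-independent radius.
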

\begin{remark}
    The dependence on system size $M$ is a weakness of our current results, preventing the analysis from being extended to the bulk limit. It is unclear to us whether this is a worst-case that is in principle attainable for some systems, or whether a stronger generally valid estimate can be proven. This is a possible direction for future study.
\end{remark}

\section{Numerical Experiments}
\label{sec:numerics}
We now present a preliminary numerical study to demonstrate that one can construct systematically improvable machine learning models for systems with long-range charge equilibration effects, based on the results of \S\ref{sec:results}.   
\begin{enumerate}
    \item There exist functions $E_{N,\ell}(\bm{r}, \widehat{v})$ and $\rho_{N,\ell}(\bm{r}, \widehat{v})$ which are local in the sense of Theorem~\ref{locality} and can be used to approximate the total free energy as in \cref{eq:locality-decomp},
    
    \item By Theorem~\ref{body-order}, $E_{N,\ell}(\bm{r}, \widehat{v})$ and $\rho_{N,\ell}(\bm{r}, \widehat{v})$ have body-order of at most $2N+1$ in the features $\bm{u}_k = (\bm{r_k}, Z_k, \widehat{v_k})$,
    
    \item Finally, the approximate free energy converge with increasing body-order, as stated rigorously in Theorem~\ref{minimisers}.
\end{enumerate}
To demonstrate these statements in practice, one can extract the total energy, electron density and atomic features $\bm{u}_k$ from quantum mechanical calculations and fit the functions $E_{N,\ell}(\bm{r}, \widehat{v})$ and $\rho_{N,\ell}(\bm{r}, \widehat{v})$ using machine learned, body-ordered functions.

\subsubsection*{Implementation details}

First, we make the following design choices:
\begin{itemize}
\item The local basis functions $\phi_{\ell a}$, in which the electron density is expanded, are Gaussian-type orbitals and are the same for all chemical elements: 
\begin{align*}
    \phi_{\ell a} \coloneqq \phi_{a}(\,\cdot - \bm r_\ell)
    \quad \text{where} \quad 
    \phi_{a}(x) \coloneqq 
    \phi_{\lambda\mu}(x) \coloneqq \frac{1}{\mathcal{N}} e^{- \frac{x^2}{2\sigma^2}}R_{\lambda\mu}(x)
\end{align*}
where $(\lambda,\mu)$ is an angular momentum index tuple, $R_{\lambda\mu}$ is a real solid harmonic.

\item The effective potential features $\widehat{v}_{\ell}$ and the effective potential $V(x)$ are connected via a projection:
\begin{align} 
\widehat{v}_{\ell} = \{\widehat{v}_{\ell,n\lambda\mu}\}_{n\lambda\mu}, \quad \widehat{v}_{\ell,n\lambda\mu} = \int V(x) \psi_{n\lambda\mu}(x-\bm{r}_\ell)\mathrm{d}x
\label{eq:vhat-projection}
\end{align} 
where $\psi_{n\lambda\mu}$ is once again a Gaussian type orbital, but now with variable width $\sigma_n$ indexed by $n$. As the range of $n$ and $\lambda$ grows, more information about the local effective potential is encoded in $\widehat{v}_{\ell}$.

\begin{remark}
    For the analysis, we needed $V$ to be a function of $\widehat{v}$ so found it more convenient to work with the expansion, $V(x) = \sum_{\ell a} \widehat{v}_{\ell a} \Xi_{\ell a}(x)$ rather than the projection \cref{eq:vhat-projection}. In particular, the problem of solving \cref{eq:vhat-projection} for $V$ is a complicated ill-posed inverse problem. However, both these approaches are equivalent if both sets of basis functions $\{\Xi_{\ell a}\}$ and $\{\phi_{\ell a}\}$ are orthogonal.
\end{remark}

\item The effective potential $V = V_{\mathrm{eff}}[\rho]$ is evaluated using a smeared nuclei approximation, and by also neglecting the exchange correlation contribution: 
\begin{align}
    V_{\mathrm{eff}}[\rho](x) = \int \frac{\rho(y) - \nu(y)}{|x-y|}
    \mathrm{d}y.
    \label{eq:smeared_nuclei_v_eff}
\end{align}
\end{itemize}

\subsubsection*{Extracting $\rho$ and $\widehat{v}$ from density functional theory (DFT) calculations}

A coarse-grained electron density was extracted by partitioning the total electron density of the DFT calculation, $\rho^{\rm{DFT}}$, 
onto atom centered functions with angular momentum indices ($\lambda,\mu$), giving atomic charges, dipoles and quadrupoles. Further details on this process are given in \Cref{sec:appendix-fhi-aims}. Terms with $\lambda>2$ were not used. 

These atom centered quantities are then interpreted as the coefficients $\rho_{\ell, \lambda \mu}^{\mathrm{DFT}}$ in a coarse-grained model of the system:
\begin{align}
    \rho^{\rm{DFT}} = \sum_{\ell,\lambda\mu} \rho_{\ell,\lambda\mu}^{\rm{DFT}} \phi_{\lambda\mu}(\,\cdot - \bm r_\ell). \label{eq:DFT-rho}
\end{align}
The task of predicting the charge density is then to predict the coefficients $\rho_{\ell, \lambda \mu}^{\mathrm{DFT}}$. From this surrogate charge distribution one can compute a coarse-grained potential at the solution of the DFT calculation through \eqref{eq:smeared_nuclei_v_eff} and thus the features $\widehat{v}_{\ell,\lambda\mu}^{\rm{DFT}}$ via \eqref{eq:vhat-projection}.

\subsubsection*{Parameterisation of $E_\ell$ and $\rho_\ell$} 

$E_{N,\ell}$ and $\rho_{N,\ell}$ are explicitly constructed as body-ordered functions of the features $\bm{u}_k = (\bm{r_k}, z_k, \widehat{v}_k)$ of neighbouring atoms, through a flexible functional form which is fitted to a dataset. In particular, we used elements from the MACE equivariant message passing network architecture~\cite{MACE2022}. Full details are given in Appendix~\ref{sec:appendix-ml-param}. 

\subsubsection*{Results Clusters of Water}

DFT calculations were performed on a dataset of $4800$ small clusters of water molecules with the FHI-aims DFT code \cite{Blum2009}, using the PBE exchange correlation functional \cite{Perdew1996}. The total energy and electron density can be fitted with the above framework. 

The function $\rho_{N,\ell}(\bm{r}, v)$ can be trained directly against $\{\bm{u}_k\}_k$ extracted from DFT. The function $E_\ell(\bm{r}, v)$ is instead fitted via the total energy of the water cluster and its gradients with respect to nuclear coordinates $\{\partial E_{\mathrm{tot}} / \partial \bm{r}_\ell\}_{\ell}$. These gradients are accessible from the DFT calculation as nuclear forces. As in the main result \eqref{eq:locality-decomp}, the total energy also contains the Hartree energy, which can be computed from the coarse-grained density \eqref{eq:DFT-rho}.

Figure \ref{fig:water_convergence_plot} shows the convergence of this scheme with respect to the number of electric potential features per atom (the range of $n$ and $\lambda$ in \eqref{eq:vhat-projection}), and the body-order of the functions $E_{N, \ell}$ and $\rho_{N, \ell}$.

\begin{figure}
    \centering
    \includegraphics[width=\linewidth]{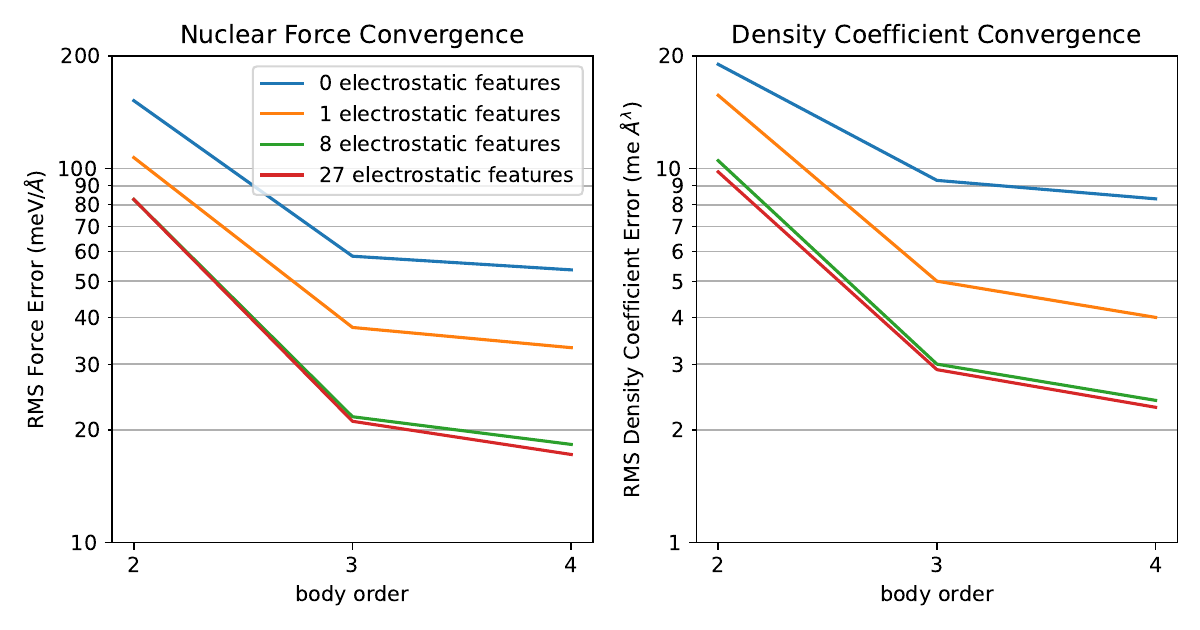}
    \caption{Convergence of the fitted functions $E_{N,\ell}$ and $\rho_{N,\ell}$ as a function of the number of electric potential features $\hat{v}$ per atom and the body-order of the functions. Convergence of the PES is assessed by the accuracy of the nuclear forces. Convergence of the density is assessed by the error in the partitioned density coefficients in equation \eqref{eq:DFT-rho}. These quantities are multipole moments and have units of charge $\times$ length$^\lambda$. The number of electric potential features per atom is restricted by considering only $\{\psi_{n\lambda\mu}\}$ for $n \leq n_{\mathrm{max}}$, $\lambda \leq \lambda_{\mathrm{max}}$, and $-\lambda \leq \mu \leq \lambda$ (so that the number of features is $n_{\mathrm{max}}( \lambda_{\mathrm{max}} + 1)^2)$. For the plots, we take $(n_{\mathrm{max}}, \lambda_{\mathrm{max}}) = (0,0)$, $(1,0)$, $(2,1)$, and $(3,2)$.
    }
    \label{fig:water_convergence_plot}
\end{figure}

\section{Conclusions}
We rigorously demonstrated the possibility of decomposing a general electronic structure potential energy surface into local and body-ordered components, even in the presence of long-range charge equilibration effects. This was achieved by first discretising Kohn--Sham DFT in a basis of localised orbitals, resulting in a tight-binding-like formulation. The key idea then was to introduce an internal atom-centered variable (or, feature), resulting in an extended potential energy landscape representing the effective potential. By equilibrating the new internal variable (i.e., achieving self-consistency) our local model is capable of representing non-trivial long-range effects. Our formulation is closely related to extended variables in classical charge equilibriation \cite{Rappe1991-PhysChem95-QEq,Rick1994}, or the Becke populations in \cite{Xie2020}. We construct body-ordered approximations to the self-consistent solutions and prove an exponential rate of convergence as we increase the maximal body-order.

\section*{Acknowledgements} 
JT is supported by 
EPSRC Grant EP/W522594/1. 
%
CO is supported by NSERC Discovery Grant GR019381 and NFRF Exploration Grant GR022937.
WB thanks the AFRL for funding through grant FA8655-21-1-7010. This work utilized computational resources from the ARCHER2 UK National Supercomputing Service (\url{http://www.archer2.ac.uk}) which is funded by EPSRC the membership of the UK Car-Parrinello Consortium. We also utilised the Cambridge Service for Data Driven Discovery (CSD3).

\section{Proofs}
\label{sec:proofs}

We recall here that $f \lesssim g$ means that $f \leq C g$ for some constant $C$, independent of system size $M$. When we wish to denote explicit dependencies, we use the notation $f \lesssim_{D} g$; there exists a constant $C_D$ depending on $D$ (but not on $M$) such that $f \leq C_D g$.

\subsection{Preliminaries}
\label{sec:proofs-prelim}

We first state an elementary result that we will use throughout: 
\begin{lemma}
    \label{lem:elementary}
    For $0 < \eta_1 \leq \dots \leq \eta_n$ and $\{\bm r_{m}\}_{m=1}^M \subset \mathbb R^d$ where $M\geq n$ or $M= \infty$, we have
    \begin{align}
        \int e^{-\sum_{l=1}^n \eta_l |x - \bm r_{l}|} \mathrm{d}x
        &\lesssim_d (\eta_n)^{-d}
        e^{-\frac{1}{4} \eta_1 [r_{12}+r_{23}+\dots + r_{n-1,n}+ r_{n1}]}, \qquad \text{and} \label{eq:expint}\\
        \sum_{m=1}^M e^{-[\eta_1 r_{1 m} + \eta_2 r_{m2}]}
        &\lesssim_d 
        \left[
            1 + \big( \eta_2 \min\limits_{i\not=j} r_{ij}\big)^{-d}
        \right] 
        e^{-\frac{1}{2} \eta_{1} r_{1 2}} \label{eq:expsum}
    \end{align}
    where $r_{ij} \coloneqq |\bm r_{j} - \bm r_i|$.
\end{lemma}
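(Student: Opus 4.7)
The plan is to prove both bounds by elementary manipulations of the exponents: triangle inequalities combined with the ordering of the $\eta_l$, followed by either a simple scaling argument (for \cref{eq:expint}) or a packing count (for \cref{eq:expsum}).

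For \cref{eq:expint}, the key observation is that $\eta_n = \max_l \eta_l$ lets me split
\[
  \sum_{l=1}^n \eta_l |x-\bm r_l|
  \geq \tfrac{1}{2}\sum_{l=1}^n \eta_l |x-\bm r_l| + \tfrac{1}{2}\eta_n |x-\bm r_n|
  \geq \tfrac{\eta_1}{2}\sum_{l=1}^n |x-\bm r_l| + \tfrac{\eta_n}{2}|x-\bm r_n|,
\]
where the second step uses $\eta_l \geq \eta_1$. Applying the triangle inequality along the cycle $\bm r_1 \to \bm r_2 \to \cdots \to \bm r_n \to \bm r_1$ gives $r_{12} + r_{23} + \cdots + r_{n-1,n} + r_{n1} \leq 2\sum_l |x-\bm r_l|$, so the first term in the lower bound is at least $\tfrac{\eta_1}{4}[r_{12}+\cdots+r_{n1}]$. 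Pulling this cycle factor out of the integral and using the standard scaling $\int e^{-\tfrac12 \eta_n |y|}\mathrm{d}y \lesssim_d \eta_n^{-d}$ then yields the claim.

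For \cref{eq:expsum}, since $\eta_1 \leq \eta_2$, the triangle inequality $r_{12}\leq r_{1m}+r_{m2}$ gives $\eta_1 r_{1m}+\eta_2 r_{m2}\geq \eta_1(r_{1m}+r_{m2})\geq \eta_1 r_{12}$, and averaging this with the trivial lower bound $\eta_1 r_{1m}+\eta_2 r_{m2}\geq \eta_2 r_{m2}$ yields $\eta_1 r_{1m}+\eta_2 r_{m2} \geq \tfrac12 \eta_1 r_{12} + \tfrac12 \eta_2 r_{m2}$. The sum is therefore bounded by $e^{-\tfrac12 \eta_1 r_{12}}\sum_m e^{-\tfrac12\eta_2 r_{m2}}$, and it remains to control this latter sum. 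The $m=2$ term contributes the $1$ in $[1+(\eta_2 \min_{i\neq j}r_{ij})^{-d}]$; for the other terms I would use that the balls $B(\bm r_m, \tfrac12 \min_{i\neq j}r_{ij})$ are pairwise disjoint, so the packing count $|\{m\neq 2 : r_{m2}\leq R\}|\lesssim_d (R/\min_{i\neq j}r_{ij})^d$ holds for $R \geq \min_{i\neq j}r_{ij}$. A layer-cake integration (writing the sum as $\int_0^\infty |\{m\neq 2 : r_{m2}< s\}|\cdot \tfrac{\eta_2}{2}e^{-\eta_2 s/2}\mathrm{d}s$ and substituting $u=\eta_2 s/2$) then gives $\sum_{m\neq 2} e^{-\tfrac12 \eta_2 r_{m2}} \lesssim_d (\eta_2 \min_{i\neq j}r_{ij})^{-d}$.

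The only non-trivial step will be the packing count and the associated layer-cake computation in the second bound; the rest is repeated use of the triangle inequality and the scaling of $\int e^{-\eta |y|}\mathrm{d}y$. The argument is uniform in $M$ and so applies equally to the finite $M\geq n$ and $M=\infty$ cases.
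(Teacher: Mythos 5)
Your proof is correct and follows essentially the same route as the paper's: in both parts the exponent is split using the ordering of the $\eta_l$ together with triangle inequalities, and for \cref{eq:expint} your split $\sum_l \eta_l|x-\bm r_l| \geq \tfrac12\eta_n|x-\bm r_n| + \tfrac{\eta_1}{4}[r_{12}+\cdots+r_{n1}]$ and the final scaling of $\int e^{-\tfrac12\eta_n|y|}\,\mathrm{d}y$ are exactly what the paper does. The only variation is the final step of \cref{eq:expsum}: you control $\sum_{m\neq 2}e^{-\tfrac12\eta_2 r_{m2}}$ via a packing count $|\{m:r_{m2}\leq R\}|\lesssim_d (R/\mathfrak m)^d$ and a layer-cake integration, whereas the paper smears each summand over the pairwise-disjoint sets $\mathcal B_m := B_{r_{2m}}(\bm r_2)\cap B_{\mathfrak m/2}(\bm r_m)$ (with $\mathfrak m:=\min_{i\neq j}r_{ij}$) and compares the sum to the continuum integral $\int e^{-\tfrac12\eta_2|x-\bm r_2|}\,\mathrm{d}x$; both devices rest on the same minimum-separation hypothesis, both use the disjointness of $\mathfrak m/2$-balls, and both yield the same $(\eta_2\mathfrak m)^{-d}$ factor, so this is a routine and equally clean alternative rather than a different method.
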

\begin{proof}
    We first show \cref{eq:expint}. We first note that 
    $\sum_{l=1}^n |x - \bm r_l| 
    \geq \frac{1}{2} \big( r_{12} + r_{23} + \dots + r_{n1} \big)$. 
    In particular, we have 
    \begin{align}
        \int e^{-\sum_{l=1}^n \eta_l |x - \bm r_{l}|} \mathrm{d}x
        &\leq \int e^{-\frac{1}{2}\eta_n |x-\bm r_n|} \mathrm{d}x \,\, 
        e^{-\frac{1}{4} \eta_1 [r_{12}+r_{23}+\dots + r_{n-1,n}+ r_{n1}]} \nonumber\\
        &= 2^d (d-1)! \, \big|\partial B_1\big| (\eta_n)^{-d}
        e^{-\frac{1}{4} \eta_1 [r_{12}+r_{23}+\dots + r_{n-1,n}+ r_{n1}]} \nonumber.
    \end{align} 
    
    We sketch the proof of \cref{eq:expsum}; more details can be found in \oldcite[Appendix~B.3]{PhD}. After defining $\mathfrak{m}\coloneqq \min_{i\not=j} r_{ij}$, we have $\mathcal B_m \coloneqq B_{r_{2m}}(\bm r_2) \cap B_{\mathfrak{m}/2}(\bm r_m)$ are disjoint sets with $r_{2m} \geq |x - \bm r_2|$ for all $x \in \mathcal B_m$ and $|\mathcal B_m| \geq c_d \mathfrak{m}^d$. In particular, we obtain
    \begin{align}
        \sum_{m=1}^M e^{-[\eta_1 r_{1 m} + \eta_2 r_{m2}]}
        &\leq \Big( 
            \sum_{m=1}^M e^{-\frac{1}{2}\eta_2 r_{m2 }}
        \Big) e^{-\frac{1}{2}\eta_1 r_{12}} 
        \leq \Big( 
            1 + \sum_{m\not=2} \frac{1}{\big| \mathcal B_m \big|} \int_{\mathcal B_m} e^{-\frac{1}{2}\eta_2 |x - \bm r_{m}|}
        \Big) e^{-\frac{1}{2}\eta_1 r_{12}} \nonumber\\
        &\leq \Big( 
            1 + \frac{1}{c_d \mathfrak{m}^d} \int e^{-\frac{1}{2}\eta_2 |x|}
        \Big) e^{-\frac{1}{2}\eta_1 r_{12}}
        \leq \Big( 
            1 + \frac{2^d(d-1)! \, \big|\partial B_1\big| }{c_d (\mathfrak{m}\eta_2)^d} 
        \Big) e^{-\frac{1}{2}\eta_1 r_{12}}, \nonumber
    \end{align}
    which concludes the proof.
\end{proof}

For the remainder, we will assume that $\min_{i\not=j} r_{ij} \geq \mathfrak{m} >0$, which allows us to apply \Cref{lem:elementary} with prefactors that only depend on the configuration via $\mathfrak{m}$.

Next, we show that the tight binding Hamiltonian is exponentially localised:
\begin{lemma}
    \label{lem:Hamiltonian}
    Fix $\bm r$ and $\widehat{v}\in \ell^\infty(I)$, and write $\Ham = H_0 + v$. Then, there exists $\eta_\Ham > 0$, such that 
    \begin{align}
        \left| 
            \Ham_{\ell k, ab}
        \right| 
        \lesssim (1 + \|\widehat{v}\|_{\ell^\infty}) e^{-\eta_\Ham r_{\ell k}},
        \quad  
        \left| 
            \frac{\partial \Ham_{\ell k, ab}}{\partial [\bm r_m]_l} 
        \right| 
        + \left| 
            \frac{\partial \Ham_{\ell k, ab}}{\partial \widehat{v}_m} 
        \right| 
        &\lesssim \big(1 + \|\widehat{v}\|_{\ell^\infty}\big)
        e^{-\eta_\Ham [r_{\ell m} + r_{m k}]}
        \label{eq:der-Ham}
    \end{align}
    where $\eta_\Ham$ depends on $\eta_\phi, \eta_\vb$ from \cref{eq:phi-assumption,eq:v-basis-assumption}.
\end{lemma}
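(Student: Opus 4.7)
The plan is to split $\mathcal{H}_{\ell k,ab} = [H_0]_{\ell k, ab} + v_{\ell k, ab}$ and estimate each piece separately using the exponential decay of $\phi$ and $\Xi$ (from \cref{eq:phi-assumption,eq:v-basis-assumption}) together with the integral/summation tools of \cref{lem:elementary}. For the $H_0$ term, I use the explicit formula \cref{eq:TB-H0} together with the pointwise bound $|\nabla \phi_{Z,a}(x-\bm r_\ell)| \lesssim e^{-\eta_\phi |x-\bm r_\ell|}$ (valid since $\nu\geq 2$). The integrand is then bounded by $e^{-\eta_\phi[|x-\bm r_\ell|+|x-\bm r_k|]}$, and \cref{eq:expint} with $n=2$ gives $|[H_0]_{\ell k, ab}| \lesssim e^{-\frac{1}{2}\eta_\phi r_{\ell k}}$.

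For the $v$ term, I write $v_{\ell k, ab} = \sum_{m,c} \widehat{v}_{mc}\int \phi_{\ell a}(x)\phi_{kb}(x)\Xi_{mc}(x)\,\mathrm{d}x$. Each integrand decays like $e^{-\eta_\phi[|x-\bm r_\ell|+|x-\bm r_k|]}\,e^{-\eta_\Xi|x-\bm r_m|}$, so \cref{eq:expint} with $n=3$ bounds the integral by $e^{-\eta_0[r_{\ell k}+r_{km}+r_{m\ell}]}$ for some $\eta_0 \sim \min(\eta_\phi,\eta_\Xi)$. Since the feature index set is finite per atom, pulling out $\|\widehat{v}\|_{\ell^\infty}$ and summing over $m$ via \cref{eq:expsum} (with the uniform lower bound $\mathfrak{m}$ on interatomic distances absorbing the prefactors) yields $|v_{\ell k, ab}| \lesssim \|\widehat{v}\|_{\ell^\infty}\, e^{-\eta_{\mathcal H} r_{\ell k}}$ for an appropriately reduced rate. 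Combining both contributions gives the first estimate with the $(1+\|\widehat{v}\|_{\ell^\infty})$ prefactor.

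For the derivative with respect to $\widehat{v}_{mc}$, only the $v$-term contributes and $\partial v_{\ell k, ab}/\partial \widehat{v}_{mc} = \int \phi_{\ell a}\phi_{kb}\Xi_{mc}$, to which \cref{eq:expint} with $n=3$ applies directly, producing the bound $e^{-\eta_0[r_{\ell k}+r_{km}+r_{m\ell}]}$; since $r_{km}+r_{m\ell}\geq r_{\ell k}$ we may discard the $r_{\ell k}$ factor to obtain the desired $e^{-\eta_{\mathcal H}[r_{\ell m}+r_{m k}]}$. For the $\bm r_m$-derivative I split into cases: when $m\notin\{\ell,k\}$ only the $\Xi_{mc}$-dependence matters and differentiation produces $\nabla\Xi_{mc}$, which is still covered by \cref{eq:v-basis-assumption} (here I use $\nu\geq 2$), so the same three-point integral estimate applies; when $m\in\{\ell,k\}$ both $H_0$ and $v$ contribute, and differentiating $\phi_{\ell a}(\,\cdot-\bm r_\ell)$ yields $\nabla\phi_{\ell a}$, again retained by \cref{eq:phi-assumption}, while the contribution from $\Xi_{\ell c}$ to $v$ is handled as before. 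In each case the bound collapses consistently with the convention $r_{\ell m}=0$ when $m=\ell$.

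The main technical obstacle is really just book-keeping: one must verify that all implicit constants from \cref{lem:elementary} depend only on $\mathfrak{m}$ and on $\eta_\phi,\eta_\Xi$ (and not on $M$), and that the case analysis for the $\bm r_m$-derivative of $v$ — where several terms with different localization centres must be combined — produces the single clean bound $e^{-\eta_{\mathcal H}[r_{\ell m}+r_{mk}]}$. Extensions to higher-order derivatives work identically provided $\nu$ in \cref{eq:phi-assumption,eq:v-basis-assumption} is large enough to differentiate the integrands the required number of times.
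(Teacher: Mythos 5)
Your proposal is correct and follows essentially the same route as the paper: split $\Ham = H_0 + v$, apply the exponential-decay assumptions \cref{eq:phi-assumption,eq:v-basis-assumption} together with the integral and summation estimates of \cref{lem:elementary}, and treat the $\bm r_m$-derivative by a case analysis on whether $m$ coincides with $\ell$ or $k$. The only differences from the paper's write-up are cosmetic (the paper records the stronger three-point exponent $e^{-c[r_{\ell k}+r_{km}+r_{m\ell}]}$ before specialising, and tracks explicit prefactors like $c_\phi^2\eta_\phi^{-3}$), so your proof is a faithful sketch of the same argument.
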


\begin{proof}
    Recall, 
    $\Ham_{\ell k, ab} 
    = \int \nabla \phi_{\ell a} \cdot \nabla \phi_{kb} 
    + \sum_{mc} \widehat{v}_{mc} \int \phi_{\ell a} \phi_{kb} \vb_{mc}$. 
    In particular, applying \cref{eq:phi-assumption,eq:v-basis-assumption} together with \cref{lem:elementary}, we obtain the first bound in \cref{eq:der-Ham} with the exponent $\frac{1}{2}\min\{ \eta_\phi, \eta_\vb \}$. 

    On defining  
    $V(x) \coloneqq \sum_{mc} \widehat{v}_{mc} \vb_{mc}$, 
    we have
    \begin{align}
        -\frac{\partial \Ham_{\ell k, ab}}{\partial [\bm r_m]_l} 
        = 
        \widehat{v}_{mc}\int \phi_{\ell a}\phi_{k b} \partial_l \vb_{mc} 
        +
        \begin{cases} 
            \sum_i \int \partial_{il} \phi_{ma} \partial_i\phi_{kb} 
            + \int \partial_l \phi_{ma} \phi_{kb} \, V
            &\text{if } \ell = m \not= k\\
            \sum_i \int \partial_{i} \phi_{\ell a} \partial_{il}\phi_{mb} 
            + \int \phi_{\ell a} \partial_i \phi_{mb} \, V
            &\text{if } \ell \not= m = k\\
            \int \Big[ 
                \sum_i\big( 
                    \partial_{il} \phi_{m a} \partial_i\phi_{mb} + \partial_{i} \phi_{m a} \partial_{il}\phi_{mb} 
                \big)\\
                \qquad\qquad + \big( 
                    \partial_l \phi_{ma} \phi_{mb} + \phi_{ma} \partial_l \phi_{mb}
                \big)  V 
            \Big] 
            &\text{if } \ell = m = k\\
            0 &\textrm{otherwise.}
        \end{cases}
    \end{align}
    Moreover, we have 
    $\frac{\partial \Ham_{\ell k, ab}}{\partial \widehat{v}_{mc}} = \int \phi_{\ell a} \phi_{kb} \vb_{mc}$. 
    Therefore, using the locality of $\phi$ and $\vb$ and their derivatives (\cref{eq:phi-assumption,eq:v-basis-assumption}) together with \cref{lem:elementary}, we have 
    \begin{align}
        \left| 
            \frac{\partial \Ham_{\ell k, ab}}{\partial [\bm r_m]_l} 
        \right| 
        + \left| 
            \frac{\partial \Ham_{\ell k, ab}}{\partial \widehat{v}_m} 
        \right| 
        &\lesssim \big(1 + \|\widehat{v}\|_{\ell^\infty}\big)
        \max\{1,c_{\vb}\}
        c_\phi^2\eta_{\phi}^{-3} 
        e^{-\frac{1}{4}\min\{\eta_{\phi},\eta_{\vb}\} [r_{\ell k} + r_{km} + r_{m\ell}]},
        \label{eq:der-Ham-proof}
    \end{align}
    which concludes the proof.
\end{proof}

Next, we state properties of the Jacobian, $DF_{\beta,\mu}$, of $\Ham \mapsto F_{\beta,\mu}(\Ham)$:
\begin{lemma}
    \label{lem:DF-negative}
    Fix $\beta\in (0,\infty]$, $\mu \in \mathbb R$, and a symmetric matrix $\Ham\in(\mathbb R^{N_\mathrm{b} \times N_\mathrm{b}})^{M\times M}$. In the case $\beta = \infty$, we suppose $\mu\in \conv\big( \sigma(\Ham)\big)  \setminus \sigma(\Ham)$. 
    Then, $DF_{\beta,\mu}(\Ham)$ is negative definite with respect to the Frobenius inner product.
\end{lemma}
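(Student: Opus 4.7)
I would reduce the matrix-level statement to a scalar one via the Daleckii--Krein (or ``L\"owner matrix'') representation of the derivative of a matrix function. Diagonalise $\Ham = \sum_i \lambda_i \ket{i}\bra{i}$ in an orthonormal eigenbasis and, for a symmetric perturbation $V$, set $V_{ij} := \bra{i} V \ket{j}$. The Daleckii--Krein formula then gives
\begin{equation*}
    DF_{\beta,\mu}(\Ham)[V] = \sum_{i,j} L_{ij} V_{ij} \,\ket{i}\bra{j},
    \qquad
    L_{ij} := \begin{cases} F'_{\beta,\mu}(\lambda_i), & \lambda_i = \lambda_j,\\[2pt] \dfrac{F_{\beta,\mu}(\lambda_i) - F_{\beta,\mu}(\lambda_j)}{\lambda_i - \lambda_j}, & \lambda_i \neq \lambda_j, \end{cases}
\end{equation*}
and orthonormality of the eigenbasis collapses the Frobenius pairing to
\begin{equation*}
    \langle V, DF_{\beta,\mu}(\Ham)[V] \rangle_F = \sum_{i,j} L_{ij} \,|V_{ij}|^2.
\end{equation*}
The claim then reduces to verifying that the scalar coefficients $L_{ij}$ have the correct sign.

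\textbf{Sign of the L\"owner coefficients.} For $\beta < \infty$, the Fermi--Dirac function $F_{\beta,\mu}$ is smooth and strictly decreasing on $\mathbb R$, so $F'_{\beta,\mu}(\lambda) < 0$ for every $\lambda$, and the mean value theorem gives $L_{ij} < 0$ for every index pair $(i,j)$. Hence $\langle V, DF_{\beta,\mu}(\Ham)[V]\rangle_F < 0$ whenever $V \neq 0$, proving strict negative-definiteness. For $\beta = \infty$, the hypothesis $\mu \in \conv(\sigma(\Ham)) \setminus \sigma(\Ham)$ splits the spectrum into nonempty occupied ($\lambda_i < \mu$) and unoccupied ($\lambda_j > \mu$) subsets; a direct computation then shows $L_{ij} = 0$ for same-side pairs and $L_{ij} = -1/|\lambda_i - \lambda_j| < 0$ for cross-block pairs, which produces strict negativity on the space of perturbations whose occupied--unoccupied block is nonzero (i.e.\ on the complement of the kernel of $DF_{\infty,\mu}(\Ham)$, which consists of $V$'s that commute with the spectral projector).

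\textbf{Main obstacle.} The principal technical point is justifying the Daleckii--Krein representation at $\beta = \infty$, where $F_{\infty,\mu}$ is only piecewise constant and classical derivative formulae do not directly apply. I would sidestep this via the holomorphic functional calculus by writing
\begin{equation*}
    F_{\infty,\mu}(\Ham) = \frac{1}{2\pi i} \oint_\Gamma (z - \Ham)^{-1} \,dz,
\end{equation*}
where $\Gamma$ is a contour in $\mathbb C \setminus \sigma(\Ham)$ encircling exactly the eigenvalues of $\Ham$ lying below $\mu$; such a contour exists precisely because $\mu \notin \sigma(\Ham)$. Differentiating under the integral sign yields the first-order perturbation formula
\begin{equation*}
    DF_{\infty,\mu}(\Ham)[V] = \frac{1}{2\pi i}\oint_\Gamma (z-\Ham)^{-1} V (z-\Ham)^{-1}\,dz,
\end{equation*}
and evaluating by residues at the eigenvalues of $\Ham$ recovers the L\"owner form above, now rigorously. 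The finite-$\beta$ case can be handled identically with the entire analytic function $F_{\beta,\mu}$, or more directly since $F_{\beta,\mu}$ is already $C^\infty$.
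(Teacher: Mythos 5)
Your proof follows exactly the same route as the paper's: derive $DF_{\beta,\mu}(\Ham)$ from a contour integral (or, equivalently, invoke Daleckii--Krein), pass to the eigenbasis of $\Ham$ to obtain the L\"owner (divided-difference) form so that the Frobenius quadratic form becomes $\sum_{s,t}L_{st}\big(C^{s\mathrm{T}}\Phi C^{t}\big)^2$, and conclude from the sign of the scalar coefficients $L_{st}$. For $\beta<\infty$ both proofs are complete because $F_{\beta,\mu}$ is strictly decreasing on $\mathbb R$. At $\beta=\infty$, however, your analysis is in fact the more careful of the two: you correctly observe that $L_{st}=0$ for diagonal and same-side pairs, so $DF_{\infty,\mu}(\Ham)$ has a nontrivial kernel (all perturbations commuting with the spectral projector $\chi_{(-\infty,\mu)}(\Ham)$) and is therefore only negative \emph{semi}-definite. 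The paper's proof ends with the inequality $\sum_{s,t}\tfrac{F_{\beta,\mu}(\lambda_s)-F_{\beta,\mu}(\lambda_t)}{\lambda_s-\lambda_t}\big(C^{s\mathrm{T}}\Phi C^t\big)^2<0$ and then asserts a positive lower bound $\tfrac{1}{\lambda_{N_\mathrm{b}\cdot M}-\lambda_1}$ on $\sigma\big(-DF_{\infty,\mu}(\Ham)\big)$; neither statement holds for $\Phi$ in the kernel you identify, so at $\beta=\infty$ the lemma (and the paper's proof of it) should be read as strict negativity on the orthogonal complement of that kernel rather than as unconditional negative definiteness.
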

\begin{proof}
    This elementary proof uses the same idea as in \cite{Levitt:screening}, and is included here for completeness.

    If $\beta = \infty$, we suppose $\mathscr C$ is a simple closed contour encircling $\sigma(\Ham)\cap (-\infty, \mu)$ and avoiding the rest of the spectrum. On the other hand, if $\beta < \infty$, we suppose $\mathscr C$ is a simple closed contour encircling $\sigma(\Ham)$ and avoiding the poles of $F_{\beta,\mu}$ at $\mu + i \pi\beta^{-1}(2\mathbb Z + 1)$. Then, we have 
    \begin{align}
        DF_{\beta,\mu}( \Ham )_{\ell k ab, \ell^\prime k^\prime a^\prime b^\prime}
        &= \frac{\partial}{\partial \Ham_{\ell^\prime k^\prime, a^\prime b^\prime}}
        \oint_{\mathscr C} 
            F_{\beta,\mu}(z)
            \big( z - \Ham \big)^{-1}_{\ell k, ab} 
        \frac{\mathrm{d}z}{2\pi i} \nonumber\\
        &= \oint_{\mathscr C}  
            F_{\beta,\mu}(z)
            \big( z - \Ham \big)^{-1}_{\ell\ell^\prime,aa^\prime}
            \big( z - \Ham \big)^{-1}_{k^\prime k, b^\prime b}
        \frac{\mathrm{d}z}{2\pi i} \nonumber \\
        &= \sum_{st}
        \oint_{\mathscr C}  
            \frac{F_{\beta,\mu}(z)}{(z - \lambda_s)(z - \lambda_t)}
        \frac
            {\mathrm{d}z}
            {2\pi i} 
        [C^s \otimes C^s]_{\ell\ell^\prime, aa^\prime} 
        [C^t \otimes C^t]_{k^\prime k, b^\prime b}
        \nonumber \\
        &=\sum_{st}
        \frac{F_{\beta,\mu}(\lambda_s) - F_{\beta,\mu}(\lambda_t)}{\lambda_s - \lambda_t}
        [C^s \otimes C^s]_{\ell\ell^\prime, aa^\prime} 
        [C^t \otimes C^t]_{k^\prime k, b^\prime b},
    \end{align}
    with the convention that $\frac{F_{\beta,\mu}(x) - F_{\beta,\mu}(x)}{x-x}\coloneqq F^\prime_{\beta,\ep_{\mathrm{F}}}(x)$ 
    and $\sigma(\Ham) = \{\lambda_1 \leq \dots \leq \lambda_{N_\mathrm{b} \cdot M}\}$. 
    In particular, we have
    \begin{align}
        \Braket{DF_{\beta,\mu}( \Ham ) \Phi, \Phi}
        &= 
        \sum_{\ell\ell^\prime kk^\prime, aa^\prime bb^\prime}
        \Phi_{\ell k,ab}
        DF_{\beta,\mu}( \Ham )_{\ell k ab, \ell^\prime k^\prime a^\prime b^\prime} 
        \Phi_{\ell^\prime k^\prime, a^\prime b^\prime} \nonumber\\
        &= \sum_{st}
        \frac{F_{\beta,\mu}(\lambda_s) - F_{\beta,\mu}(\lambda_t)}{\lambda_s - \lambda_t}
        \big( C^{s\mathrm{T}} \Phi C^t \big)^2
        < 0,
    \end{align}
    which concludes the proof. 
    
    If $\beta = \infty$, then one obtains bounds on the spectrum of $-DF_{\infty, \mu}(\Ham)$:
    \[
        \sigma\big( -DF_{\infty,\mu}(\Ham) \big)
        \subset \Big[\frac{1}{\lambda_{N_\mathrm{b}\cdot M} - \lambda_1}, \frac{1}{\lambda_{N_{\mathrm{el}} + 1} - \lambda_{N_{\mathrm{el}}}}\Big]
    \]
    where $\lambda_{N_{\mathrm{el}}} < \mu < \lambda_{N_{\mathrm{el}} + 1}$.
\end{proof}

We recall the Combes--Thomas \cite{CombesThomas1973} resolvent estimate (the following particular form is taken from \cite{Thomas2020:scTB}, while variants may be found in \cite{ChenOrtner16,ChenLuOrtner18,ChenOrtnerThomas2019:locality}):
\begin{lemma}[Combes--Thomas Resolvent Estimate]
    \label{lem:CT}
    Suppose that $\Ham \in (\mathbb R^{N_{\mathrm{b}} \times N_{\mathrm{b}}})^{M\times M}$ is symmetric with 
    $\big|\Ham_{\ell k, ab}\big| \leq c_{\Ham} e^{-\eta_{\Ham} \, r_{\ell k}}$ 
    (for some $c_\Ham, \eta_{\Ham} > 0$) for all $\ell, k \in \{1,\dots,M\}$ and $a,b \in \{1,\dots,N_{\mathrm{b}}\}$. Then, if $z \in \mathbb C$ with $\mathfrak{d} \coloneqq \mathrm{dist}\big( z, \sigma(\Ham) \big) > 0$, we have 
    \begin{align}
        \label{eq:CT}
        \Big| \big[ (z - \Ham)^{-1} \big]_{\ell k, ab} \Big|
        \leq 2 \mathfrak{d}^{-1} e^{-\eta_{\mathrm{ct}} \, r_{\ell k}} \\
        \text{where} \qquad \eta_{\mathrm{ct}} \coloneqq c \eta_\Ham \min\left\{1, \tfrac{\eta_\Ham^d}{c_\Ham} \mathfrak{d} \right\} \label{eq:CT-exponent}
    \end{align}
    and $c > 0$ depends on $\min\limits_{\ell\not=k} r_{\ell k}$ and $d$.
\end{lemma}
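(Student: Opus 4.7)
The plan is to follow the classical Combes--Thomas strategy: conjugate $\Ham$ by a diagonal exponential weight, show the resulting operator $\Ham_\eta$ is a small perturbation of $\Ham$, and read off the pointwise resolvent decay from the block structure of the conjugation. Fix the pair $(\ell,k)$ for which we want to prove the bound, and introduce the block-diagonal weight $G$ with $G_{mm} = r_{\ell m} I_{N_\mathrm{b}}$. For $\eta \geq 0$ set $\Ham_\eta \coloneqq e^{-\eta G} \Ham e^{\eta G}$, whose $(i,j)$ block equals $\Ham_{ij, ab}\, e^{\eta(r_{\ell j} - r_{\ell i})}$. The reverse triangle inequality $|r_{\ell j} - r_{\ell i}| \leq r_{ij}$ is what makes the whole scheme work: it keeps the off-diagonal entries of $\Ham_\eta$ exponentially decaying at the reduced rate $\eta_\Ham - \eta$ whenever $\eta < \eta_\Ham$.

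The main step, and the only slightly delicate one, is to control $\|\Ham_\eta - \Ham\|_{\mathrm{op}}$ quantitatively. I would first bound matrix entries using $|e^t - 1| \leq |t|e^{|t|}$, obtaining $|[\Ham_\eta - \Ham]_{ij, ab}| \leq c_\Ham \eta\, r_{ij}\, e^{-(\eta_\Ham - \eta) r_{ij}}$. For $\eta \leq \eta_\Ham/2$, I would absorb the polynomial prefactor $r_{ij}$ into the exponential via $t e^{-\alpha t} \leq (e\alpha)^{-1}$, picking up an additional $\eta_\Ham^{-1}$, and then apply Schur's test together with the row/column-sum bound $\sum_{j} e^{-\alpha r_{ij}} \lesssim \alpha^{-d}$ (a direct consequence of the disjoint-ball argument underlying \Cref{lem:elementary} combined with the well-separation hypothesis $\min_{i\neq j}r_{ij} \geq \mathfrak{m}$) to obtain
\begin{equation*}
\|\Ham_\eta - \Ham\|_{\mathrm{op}} \lesssim c_\Ham \, \eta \, \eta_\Ham^{-(d+1)},
\end{equation*}
with constants depending only on $N_\mathrm{b}$, $d$, and $\mathfrak{m}$. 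The bookkeeping in this Schur step --- keeping track of the correct power of $\eta_\Ham$ --- is the only place where one has to be careful.

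With this perturbation bound in hand, I would impose $\|\Ham_\eta - \Ham\|_{\mathrm{op}} \leq \mathfrak{d}/2$. A Neumann series around $(z-\Ham)^{-1}$ (which has norm at most $\mathfrak{d}^{-1}$) then shows that $z$ lies in the resolvent set of $\Ham_\eta$ and $\|(z-\Ham_\eta)^{-1}\|_{\mathrm{op}} \leq 2\mathfrak{d}^{-1}$. Conjugation preserves resolvents, so $(z - \Ham)^{-1} = e^{\eta G}(z - \Ham_\eta)^{-1} e^{-\eta G}$; extracting the $(\ell,k)$ block and using $r_{\ell\ell} = 0$ yields
\begin{equation*}
\big|[(z-\Ham)^{-1}]_{\ell k, ab}\big| \leq 2\mathfrak{d}^{-1} e^{\eta(r_{\ell\ell} - r_{\ell k})} = 2\mathfrak{d}^{-1} e^{-\eta r_{\ell k}}.
\end{equation*}

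To finish, I would optimise $\eta$ subject to the two constraints $\eta \leq \eta_\Ham/2$ (from the linearisation of the exponential) and $\eta \lesssim \mathfrak{d}\, \eta_\Ham^{d+1}/c_\Ham$ (from the Neumann condition). Both are satisfied by $\eta = c\, \eta_\Ham \min\{1, (\eta_\Ham^d/c_\Ham)\mathfrak{d}\}$ for a universal sufficiently small $c > 0$, which reproduces the exponent $\eta_{\mathrm{ct}}$ in \cref{eq:CT-exponent}.
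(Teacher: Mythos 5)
Your proposal is the classical Combes--Thomas conjugation argument (exponential weight $G_{mm}=r_{\ell m}\mathrm{Id}_{N_\mathrm{b}}$, reverse triangle inequality, perturbation bound via a Schur/row-sum test, Neumann series, and then reading off the decay from $(z-\Ham)^{-1}=e^{\eta G}(z-\Ham_\eta)^{-1}e^{-\eta G}$), which is exactly the route taken in \cite{Thomas2020:scTB} that the paper cites for this lemma --- the paper itself only states the lemma and does not reprove it, so there is no internal proof to diverge from. The one place where your bookkeeping is slightly loose is the row-sum estimate $\sum_j e^{-\alpha r_{ij}}\lesssim\alpha^{-d}$: this cannot hold uniformly in $\alpha$, since the $j=i$ term alone contributes $1$, which exceeds $\alpha^{-d}$ once $\alpha$ is large; the correct well-separated form is $\sum_j e^{-\alpha r_{ij}}\lesssim_{\mathfrak{m},d}1+(\alpha\mathfrak{m})^{-d}$. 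Tracking this carefully replaces the Neumann constraint $\eta\lesssim\mathfrak{d}\,\eta_\Ham^{d+1}/c_\Ham$ by $\eta\lesssim\mathfrak{d}\,\eta_\Ham\min\{1,\eta_\Ham^d\}/c_\Ham$; the two agree precisely where the $\eta_\Ham^d$ factor matters (small $\eta_\Ham$), and when $\eta_\Ham$ is large the separate constraint $\eta\leq\eta_\Ham/2$ is binding anyway, so the stated form of $\eta_{\mathrm{ct}}$ survives --- the discrepancy is absorbed into the $\mathfrak{m}$-dependent constant $c$.
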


Applying the polynomial approximation result (\cref{poly-approx}), together with the Combes--Thomas estimate, we obtain the following result:
\begin{lemma}
    \label{lem:locality+bodyorder}
    Fix $\bm r$ and $\widehat{v}\in\ell^\infty(I)$, and write $\Ham = H_0 + v$. Suppose that, if $\beta = \infty$, then $\mathsf{g}_{\pm} > 0$. 
    Then, for both $O = F$ and $G$, the polynomial approximations $O_N = F_N$ and $G_N$ as in \cref{poly-approx} satisfy the following: there exists $\theta, \eta_{\mathrm{ct}} > 0$ such that
    \begin{align}
        &\big| D^jO(\Ham)_{\ell kab; \ell_1 k_1 a_1 b_1, \dots, \ell_j k_j a_j b_j} \big| 
        + e^{\theta N}  \big| \big[ D^jO(\Ham) - D^jO_N(\Ham) \big]_{\ell kab; \ell_1 k_1 a_1 b_1, \dots, \ell_j k_j a_j b_j} \big| \nonumber\\
        &\qquad\lesssim e^{-\eta_{\mathrm{ct}} \min\limits_{\sigma \in S_j} [r_{\ell \ell_{\sigma(1)}} + r_{k_{\sigma(1)} \ell_{\sigma(2)}} + \dots + r_{k_{\sigma(j-1)} \ell_{\sigma(j)}} + r_{k_{\sigma(j)} k}]}, 
        \label{eq:lem12}
    \end{align}
    for $0\leq j\leq \nu$. Moreover,  
    $\theta \sim \beta^{-1} + \sqrt{\mathsf{g}_-}\sqrt{\mathsf{g}_+}$ 
    and 
    $\eta_{\mathrm{ct}} \sim \beta^{-1} + \min\{\mathsf{g}_-,\mathsf{g}_+\}$ 
    as $\beta^{-1} + \mathsf{g}\to 0$. If $\beta = \infty$, we have $\theta, \eta \sim \mathsf{g}$ 
    as $\mathsf{g}\to 0$.
\end{lemma}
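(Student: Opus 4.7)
My plan is to exploit the Dunford (contour-integral) representation of matrix functions. For suitably chosen simple closed contours $\mathscr{C}$ --- in the finite-temperature case, a rectangle strictly inside the first Matsubara strip $|\mathrm{Im}(z)|<\pi\beta^{-1}$; in the zero-temperature case for $F$, a small loop around $\sigma(\Ham)\cap(-\infty,\mu)$ keeping distance $\sim\mathsf{g}_-$ from $\mu$ --- both $O\in\{F,G\}$ and their polynomial approximations $O_N$ admit the representation $O(\Ham)=\tfrac{1}{2\pi i}\oint_{\mathscr{C}} O(z)(z-\Ham)^{-1}\mathrm{d}z$, and this identity is the engine of everything that follows.

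First I would differentiate $j$ times with respect to entries of $\Ham$ to obtain
\[
\frac{\partial^j O(\Ham)_{\ell k,ab}}{\partial \Ham_{\ell_1 k_1,a_1 b_1}\cdots \partial \Ham_{\ell_j k_j,a_j b_j}} = \sum_{\sigma\in S_j}\oint_{\mathscr{C}} O(z)\,(z-\Ham)^{-1}_{\ell\ell_{\sigma(1)},aa_{\sigma(1)}}\cdots(z-\Ham)^{-1}_{k_{\sigma(j)}k,b_{\sigma(j)}b}\frac{\mathrm{d}z}{2\pi i},
\]
and then apply Lemma~\ref{lem:CT} to each of the $j{+}1$ resolvent entries, with $\mathfrak{d}=\dist(\mathscr{C},\sigma(\Ham))$, bounding each summand by $(2\mathfrak{d}^{-1})^{j+1}|O(z)|\exp(-\eta_{\mathrm{ct}} L_\sigma)$ where $L_\sigma = r_{\ell\ell_{\sigma(1)}}+r_{k_{\sigma(1)}\ell_{\sigma(2)}}+\cdots+r_{k_{\sigma(j)}k}$. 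By choosing $\mathscr{C}$ so that $\mathfrak{d}\gtrsim \beta^{-1}+\min\{\mathsf{g}_-,\mathsf{g}_+\}$ and of bounded length, integration absorbs the $\mathfrak{d}^{-(j+1)}$ factor into the implicit constant, and the crude bound $\sum_\sigma e^{-\eta_{\mathrm{ct}} L_\sigma}\leq j!\, e^{-\eta_{\mathrm{ct}}\min_\sigma L_\sigma}$ then delivers the first term of~\cref{eq:lem12}. The stated scaling of $\eta_{\mathrm{ct}}$ follows from~\cref{eq:CT-exponent} evaluated at this $\mathfrak{d}$.

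For the approximation piece, the same representation gives $D^j O(\Ham)-D^j O_N(\Ham)$ as the identical sum of integrals with $O(z)$ replaced by $O(z)-O_N(z)$. Lemma~\ref{poly-approx} only provides $\|O-O_N\|_{L^\infty(X)}\lesssim e^{-\theta N}$ on the real set $X$, but since $F$ and $G$ are analytic on a complex neighbourhood of $X$ (away from the Matsubara poles), I would invoke the same logarithmic-potential-theoretic construction of~\cite{ThomasChenOrtner2022:body-order} to lift the uniform estimate to a confocal ellipse-type contour around $X$ at the same asymptotic rate. Arranging this contour to coincide with (or enclose) the $\mathscr{C}$ already chosen, the $e^{-\theta N}$ factor multiplies the previously established product-of-resolvents bound and yields the second term of~\cref{eq:lem12}.

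The hard part will be the joint selection of $\mathscr{C}$: it must simultaneously (i) stay at distance $\gtrsim\beta^{-1}+\min\{\mathsf{g}_-,\mathsf{g}_+\}$ from $\sigma(\Ham)$ so that $\eta_{\mathrm{ct}}$ has the claimed scaling, and (ii) lie inside the domain of analyticity where the polynomial approximation attains the rate $\theta\sim\beta^{-1}+\sqrt{\mathsf{g}_-}\sqrt{\mathsf{g}_+}$. Both constraints are satisfiable in each regime, but the geometric bookkeeping --- especially at $\beta=\infty$, where the contour only encircles $\sigma(\Ham)\cap(-\infty,\mu)$ and the Matsubara constraint disappears --- together with carrying the real-axis polynomial estimate to the complex contour while preserving the exponent, is the delicate part of the argument.
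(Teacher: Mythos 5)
Your proposal is correct and follows essentially the same route as the paper: both start from the Dunford contour representation, differentiate under the integral to obtain permutation sums of resolvent products, bound each factor with the Combes--Thomas estimate (Lemma~\ref{lem:CT}), and handle the $O - O_N$ term by lifting the $L^\infty(X)$ polynomial-approximation estimate of Lemma~\ref{poly-approx} to the contour $\mathscr C$ via the Hermite integral formula / potential-theoretic machinery. You even correctly isolate the delicate point (jointly choosing $\mathscr C$ to guarantee both the Combes--Thomas decay rate and the polynomial approximation rate on the complex neighbourhood of $X$), which the paper disposes of with the same citation to \cite{ThomasChenOrtner2022:body-order}.
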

\begin{proof}
    The Hamiltonian $\Ham \coloneqq H_0 + v$ satisfies the assumption of \cref{lem:CT} with $c_{\Ham} \lesssim 1 + \|\widehat{v}\|_{\ell^\infty}$ and $\eta_\Ham = \frac{1}{2}\eta_{\phi}$. 
    
    Moreover, a simple calculation reveals that, on defining $\mathscr R_z \coloneqq ( z - \Ham )^{-1}$ and thus
    \begin{align}
        [\mathscr R_z^{(j)}]_{\ell kab; \ell_1 k_1 a_1 b_1, \dots, \ell_j k_j a_j b_j}
        \coloneqq \Big[ 
            \sum_{\sigma \in S_j} 
            \mathscr R_z \Delta_{\ell_{\sigma_1} k_{\sigma_1} a_{\sigma_1} b_{\sigma_1}} \mathscr R_z
            \cdots 
             \mathscr R_z \Delta_{\ell_{\sigma_j} k_{\sigma_j} a_{\sigma_j} b_{\sigma_j} }
             \mathscr R_z
        \Big]_{\ell k, ab}, \nonumber
    \end{align}
    where $[\Delta_{\ell k ab}]_{\ell^\prime k^\prime,a^\prime b^\prime} \coloneqq \delta_{\ell\ell^\prime} \delta_{kk^\prime} \delta_{aa^\prime} \delta_{bb^\prime}$, 
    we have 
    \begin{align}
        D^jO(\Ham)
        &= \oint_{\mathscr C} O(z) \mathscr R_z^{(j)} \frac{\mathrm{d}z}{2\pi i},
        \qquad
        D^jO(\Ham) - D^jO_N(\Ham)
        = \oint_{\mathscr C} \big( O - O_N\big)(z) \mathscr R_z^{(j)} \frac{\mathrm{d}z}{2\pi i} \nonumber
    \end{align}
    where $\mathscr C$ is a simple closed positively oriented contour encircling $\sigma(\Ham)$ and avoiding the singularities of $O$. That is, for both $O = F$ and $G$, the contour must be contained in $\mathbb C \setminus \{ \mu + i r \colon r \in \mathbb R, |r| \geq \pi \beta^{-1} \}$.
    
    Therefore, applying \cref{eq:CT}, we obtain
    \begin{align}
        &\Big| D^jO(\Ham)_{\ell kab; \ell_1 k_1 a_1 b_1, \dots, \ell_j k_j a_j b_j} \Big| \nonumber\\
        &\qquad\leq 
        \Big(\frac{2}{\mathfrak{d}}\Big)^{j+1}
        \|O\|_{\mathscr C} 
        \sum_{\sigma\in S_j} e^{-\eta_{\mathrm{ct}} [r_{\ell \ell_{\sigma(1)}} + r_{k_{\sigma(1)} \ell_{\sigma(2)}} + \dots + r_{k_{\sigma(j-1)} \ell_{\sigma(j)}} + r_{k_{\sigma(j)} k}]} \nonumber\\
        &\qquad\leq 
        \Big(\frac{2}{\mathfrak{d}}\Big)^{j+1} 
        j! \,
        \|O\|_{\mathscr C} 
        e^{-\eta_{\mathrm{ct}} \min\limits_{\sigma \in S_j}[r_{\ell \ell_{\sigma_1}} + r_{k_{\sigma_1} \ell_{\sigma_2}} + \dots + r_{k_{\sigma_{j-1}} \ell_{\sigma_j}} + r_{k_{\sigma_j} k}]}
    \end{align}
    where $\mathfrak{d}, \eta_{\mathrm{ct}}$ are the constants from \cref{lem:CT} applied to $\Ham = H_0 + v$, and $\|O\|_{\mathscr C} \coloneqq \len\mathscr C \max\limits_{z\in \mathscr C} |O(z)|$.

    Moreover, we similarly obtain the second inequality in \cref{eq:lem12} by noting that the polynomial approximation result of Lemma~\ref{poly-approx} may be extended to obtain \cite{ThomasChenOrtner2022:body-order}:
    \[
        \| O - O_N \|_{\mathscr C} \lesssim e^{-\theta N}.
    \]
    In fact, one uses the Hermite integral formula \cite{bk:Trefethen2019} and the proof follows in the exact same way as Lemma~\ref{poly-approx}.
\end{proof}

As a direct corollary, we have the following result: 
\begin{corollary}
\label{cor:DjO}
Under the assumptions and notation from \cref{lem:locality+bodyorder}, and for $w^{(1)},\dots,w^{(j)}$ with $|w^{(l)}_{\ell k, ab}| \leq c_l e^{-\eta_l \mathsf d_l(\bm r_\ell,\bm r_k)}$ for some $c_l, \eta_l > 0$, we have 
\begin{align*}
    \big|\big[ 
        D^jO(\Ham) [w^{(1)}, \dots, w^{(j)}]
    \big]_{\ell k, ab}\big|
    &\lesssim_j c_1 \cdots c_j \,
    e^{-\eta \max_l\mathsf{d}_l(\bm r_\ell, \bm r_k)} \\
    \big|\big[ 
        D^j(O-O_N)(\Ham) [w^{(1)}, \dots, w^{(j)}]
    \big]_{\ell k, ab}\big|
    &\lesssim_j c_1 \cdots c_j \, 
    e^{-\theta N} e^{-\eta \max_l \mathsf d_l(\bm r_\ell,\bm r_k)} 
\end{align*}
where 
$\eta = \frac{1}{2}\min\{\eta_{\mathrm{ct}},\eta_1, \dots, \eta_{j}\}$, $\eta_{\mathrm{ct}}$ is the constant from \cref{lem:locality+bodyorder}, and 
$\mathsf d_l$ is symmetric with $r_{\ell k} \leq \mathsf{d}_l(\bm r_\ell, \bm r_{k}) \leq r_{\ell m} + \mathsf{d}_l(\bm r_m, \bm r_{k})$ for all $\ell,k,m$ and $l=1,\dots,j$.

Moreover, in the case 
$w^{(l)} \coloneqq \sum_{mc} \widehat{w}^{(l)}_{mc} \int \phi \otimes \phi  \, \vb_{mc}$ 
for $l=1,\dots,j$ and $0\leq j \leq \nu$, we have 
$c_l = \|\widehat{w}^{(l)}\|_{\ell^\infty}$ and $\eta_l = \frac{1}{2}\eta_\phi$.
\end{corollary}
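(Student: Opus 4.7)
The corollary is a direct consequence of \cref{lem:locality+bodyorder} combined with the chain/triangle-inequality arguments that have appeared repeatedly in this paper, together with the summation bounds of \cref{lem:elementary}. First, I would write
\[
    \big[D^j O(\Ham)[w^{(1)},\dots,w^{(j)}]\big]_{\ell k,ab}
    = \sum_{\above{\ell_1 k_1 a_1 b_1,}{\dots,\ell_j k_j a_j b_j}}
      \big[D^j O(\Ham)\big]_{\ell k ab;\,\ell_1 k_1 a_1 b_1,\dots,\ell_j k_j a_j b_j}
      \prod_{l=1}^{j} w^{(l)}_{\ell_l k_l,\,a_l b_l},
\]
apply the first inequality of \cref{lem:locality+bodyorder} to the kernel, and substitute $|w^{(l)}_{\ell_l k_l,a_l b_l}|\leq c_l e^{-\eta_l\mathsf d_l(\bm r_{\ell_l},\bm r_{k_l})}$. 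With $\eta'\coloneqq\min\{\eta_{\mathrm{ct}},\eta_1,\dots,\eta_j\}$, each summand is then bounded by a product $c_1\cdots c_j$ times the exponential of $-\eta'$ times
\[
    r_{\ell\ell_{\sigma(1)}}+r_{k_{\sigma(1)}\ell_{\sigma(2)}}+\cdots+r_{k_{\sigma(j)}k}
    +\sum_{i=1}^{j}\mathsf d_i(\bm r_{\ell_i},\bm r_{k_i}),
\]
for some permutation $\sigma\in S_j$ minimising the kernel chain.

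The key step is to extract a factor of $e^{-\eta\,\mathsf d_{l^\star}(\bm r_\ell,\bm r_k)}$, where $l^\star\in\argmax_l\mathsf d_l(\bm r_\ell,\bm r_k)$. Using $r_{\ell_i k_i}\leq\mathsf d_i(\bm r_{\ell_i},\bm r_{k_i})$ for $i\neq l^\star$, and iterating the assumed triangle inequality $\mathsf d_{l^\star}(\bm r_\ell,\bm r_k)\leq r_{\ell m}+\mathsf d_{l^\star}(\bm r_m,\bm r_k)$ along the chain $\ell\to\ell_{\sigma(1)}\to\cdots\to\ell_{l^\star}$ together with its symmetric version $\mathsf d_{l^\star}(\bm r_\ell,\bm r_k)\leq\mathsf d_{l^\star}(\bm r_\ell,\bm r_m)+r_{mk}$ along $k_{l^\star}\to\cdots\to k$, I obtain
\[
    \mathsf d_{l^\star}(\bm r_\ell,\bm r_k)
    \leq
    \big[r_{\ell\ell_{\sigma(1)}}+r_{k_{\sigma(1)}\ell_{\sigma(2)}}+\cdots+r_{k_{\sigma(j)}k}\big]
    +\sum_{i\neq l^\star}\mathsf d_i(\bm r_{\ell_i},\bm r_{k_i})
    +\mathsf d_{l^\star}(\bm r_{\ell_{l^\star}},\bm r_{k_{l^\star}}).
\]
Thus the full combined exponent already dominates $\eta'\,\mathsf d_{l^\star}(\bm r_\ell,\bm r_k)$; splitting the exponent in half I pull out $e^{-\eta\,\mathsf d_{l^\star}(\bm r_\ell,\bm r_k)}$ with $\eta=\eta'/2$, and retain the remaining half to control the summation.

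The remaining half-exponent still decays in every Euclidean distance in the kernel chain, and (again via $r_{\ell_i k_i}\leq\mathsf d_i$) in each $r_{\ell_i k_i}$. Summation over $\ell_1,k_1,\dots,\ell_j,k_j$ is then carried out by $2j$ consecutive applications of~\cref{eq:expsum}, each producing a constant depending only on $\mathfrak m$, $d$, and the exponents; the finite sums over orbital/feature indices $a_l,b_l$ contribute a further $j$-dependent constant, and the $|S_j|=j!$ permutation factor from \cref{lem:locality+bodyorder} is absorbed into the $\lesssim_j$. The estimate for $D^j(O-O_N)$ follows identically, inheriting the uniform prefactor $e^{-\theta N}$ directly from the second bound in \cref{lem:locality+bodyorder}.

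For the specific case $w^{(l)}=\sum_{mc}\widehat w^{(l)}_{mc}\int\phi\otimes\phi\,\vb_{mc}$, I estimate $|w^{(l)}_{\ell k,ab}|\leq\|\widehat w^{(l)}\|_{\ell^\infty}\sum_{mc}|\!\int\phi_{\ell a}\phi_{kb}\vb_{mc}|$, bound the integrand pointwise via~\cref{eq:phi-assumption,eq:v-basis-assumption}, apply~\cref{eq:expint} with $n=3$ to the integral in $x$, and sum over $m$ using~\cref{eq:expsum}; this yields the claimed decay $c_l=\|\widehat w^{(l)}\|_{\ell^\infty}$ and $\eta_l\sim\min\{\eta_\phi,\eta_\vb\}$ with $\mathsf d_l(\bm r_\ell,\bm r_k)=r_{\ell k}$. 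The main obstacle I anticipate is the careful bookkeeping in the second paragraph, namely verifying that the iteration of the triangle-like inequality for $\mathsf d_{l^\star}$ works uniformly over the choice of minimising permutation $\sigma$ and correctly consumes all the $r_{\ell_i k_i}$ terms for $i\neq l^\star$ via the pointwise comparison to $\mathsf d_i$.
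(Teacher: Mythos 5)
Your proof is correct and follows essentially the same route as the paper's: bound the kernel via Lemma~\ref{lem:locality+bodyorder}, combine with the pointwise bounds on the $w^{(l)}$, take $\eta=\tfrac12\min\{\eta_{\mathrm{ct}},\eta_1,\dots,\eta_j\}$ and split the combined exponent in half --- using $\mathsf d_l\geq r_{\ell_l k_l}$ in one half so the summation over $\ell_1 k_1,\dots,\ell_j k_j$ is controlled by iterated application of~\eqref{eq:expsum}, and the assumed triangle inequality for $\mathsf d_{l^\star}$ in the other half to extract $e^{-\eta\,\mathsf d_{l^\star}(\bm r_\ell,\bm r_k)}$. The chain-iteration argument in your second paragraph is exactly the implicit justification behind the paper's second displayed inequality.

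One small divergence is in the final paragraph. For $w^{(l)}=\sum_{mc}\widehat w^{(l)}_{mc}\int\phi\otimes\phi\,\vb_{mc}$ you bound the $m$-sum term-by-term (a three-point~\eqref{eq:expint} followed by~\eqref{eq:expsum}), which yields $\eta_l\sim\min\{\eta_\phi,\eta_\vb\}$; this is a valid decay rate but weaker than the $\eta_l=\tfrac12\eta_\phi$ asserted in the corollary. The paper instead exploits that $\sum_{mc}|\vb_{mc}(x)|\lesssim\sum_m e^{-\eta_\vb|x-\bm r_m|}$ is \emph{uniformly bounded} (by the minimum-separation assumption), so the $m$-sum contributes only a constant and the decay comes entirely from the two $\phi$ factors:
\[
    \big|w^{(l)}_{\ell k,ab}\big|
    \lesssim \|\widehat w^{(l)}\|_{\ell^\infty}\int e^{-\eta_\phi[|x-\bm r_\ell|+|x-\bm r_k|]}\,\mathrm dx
    \lesssim \|\widehat w^{(l)}\|_{\ell^\infty}\,e^{-\frac12\eta_\phi r_{\ell k}},
\]
recovering the stated constant. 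The distinction is inconsequential for the theorems that invoke this corollary (only positivity of $\eta_l$ matters there), but your route does not quite reproduce the constant as written.
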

\begin{remark}
    In the main proofs, we will either apply \cref{cor:DjO} with $\mathsf{d}(\bm r_\ell, \bm r_k) \coloneqq r_{\ell k}$, $\mathsf{d}(\bm r_\ell, \bm r_k) \coloneqq r_{\ell m} + r_{mk}$ for fixed $m$, or $\mathsf{d}(\bm r_\ell, \bm r_k) \coloneqq |\bm r_\ell - x| + |x - \bm r_k|$ for fixed $x$.
\end{remark}    
    
\begin{proof}
    We directly apply \cref{lem:locality+bodyorder} to conclude
    \begin{align}
        &\big|\big[ 
            D^jO(\Ham) [w^{(1)}, \dots, w^{(j)}]
        \big]_{\ell k, ab}\big| \nonumber\\
        &\quad\lesssim c_1\cdots c_j \sum_{\ell_1k_1,\dots,\ell_jk_j}
        e^{-2\eta \min\limits_{\sigma \in S_j}[
            r_{\ell \ell_{\sigma_1}} + \mathsf{d}_{\ell_{\sigma_1}k_{\sigma_1}} + r_{k_{\sigma_1} \ell_{\sigma_2}} + \dots + r_{k_{\sigma_{j-1}} \ell_{\sigma_j}} + \mathsf{d}_{\ell_{\sigma_j}k_{\sigma_j}} + r_{k_{\sigma_j} k}
        ]} \nonumber \\
        &\quad\leq c_1\cdots c_j \Big[ \sum_{\ell_1k_1,\dots,\ell_jk_j}
        e^{-\eta \min\limits_{\sigma \in S_j}[
            r_{\ell \ell_{\sigma_1}} + r_{\ell_{\sigma_1}k_{\sigma_1}} + r_{k_{\sigma_1} \ell_{\sigma_2}} + \dots + r_{k_{\sigma_{j-1}} \ell_{\sigma_j}} + r_{\ell_{\sigma_j}k_{\sigma_j}} + r_{k_{\sigma_j} k}
        ]}
        \Big] 
        e^{-\eta \, \mathsf{d}_{\ell k}}
        \label{eq:cor-proof}
    \end{align}
    where $\mathsf{d}_{\ell k} \coloneqq \mathsf{d}(\bm r_\ell, \bm r_k)$. The summation in the square brackets in \cref{eq:cor-proof} is finite (independent of system size) by a repeated application of \cref{lem:elementary}.

    Moreover, after noting that 
    $\sum_{mc} |\vb_{mc}(x)| 
    \leq c_\vb \sum_{mc} e^{-\eta_{\vb} |x - \bm r_m|}$ 
    is uniformly bounded, we have
    \[
        \left| 
            \sum_{mc} \widehat{w}_{mc} \int \phi_{\ell a}(x) \phi_{kb}(x) \vb_{mc}(x) 
        \right|
        \lesssim \|\widehat{w}\|_{\ell^\infty(I)} 
        \int e^{-\eta_\phi [|x - \bm r_\ell| + |x - \bm r_k|]}
        \lesssim \|\widehat{w}\|_{\ell^\infty(I)} 
        e^{-\frac{1}{2} \eta_{\phi} r_{\ell k}}.
    \]
    Here, we have applied \cref{lem:elementary}. 
\end{proof}    

\subsection{Proofs of the Main Results}

\begin{proof}[Proof of Lemma~\ref{ML-minimiser}]
    We shall consider the finite and zero Fermi-temperature cases together by considering $\mathcal E_\beta$ as defined in \cref{eq:ab-initio-beta} (allowing for the $\beta = \infty$ case). 
    
    Firstly, we note that $\mathcal E_\beta$ is a continuous function of $\widehat{v}$: the mapping taking $\widehat{v}$ to $H_0 + v$ is affine, while the mapping from $H_0 + v$ onto the ordered eigenpairs is continuous. Therefore 
    $\widehat{v} \mapsto \gamma_{\mathrm{TB}}, \rho_{\gamma}$ 
    are both continuous, and thus the composition $\widehat{v} \mapsto \gamma \mapsto \mathcal E_\beta(\gamma)$ is continuous. Moreover, the set of admissible density operators is compact and thus there exists a minimiser $\gamma^0$. 
    
    If $\beta = \infty$ then we assume there is a gap in the spectrum. Therefore, we have 
    $\gamma^0 = \phi^T F_{\beta,\ep_{\mathrm{F}}}(H_0 + v)\phi$ 
    for some $\widehat{v} \in \ell^\infty(I)$. In the zero Fermi-temperature case, $\ep_{\mathrm{F}}$ is any value in the spectral gap, whereas for $\beta < \infty$, the Fermi level $\ep_{\mathrm{F}}$ is the unique solution to 
    $\mathrm{Tr}\,F_{\beta,\ep_{\mathrm{F}}}(H_0 + v) = N_{\mathrm{el}}$. 
    This equation has a unique solution since the mapping 
    $\tau \mapsto \mathrm{Tr}\,F_{\beta,\tau}(H_0 + v)$ 
    is strictly increasing with $\mathrm{Tr}\,F_{\beta,\tau}(H_0 + v) \to 0$ (resp. $M\cdot N_\mathrm{b}$) as $\tau \to -\infty$ (resp. $\tau \to +\infty$).
    
    Rewriting the total energy, we have
    \begin{align} 
        \label{eq:E-rewritten}
        \mathcal E_{\beta}(\gamma) 
        = \sum_{i} \Big[ \lambda_i F_{\beta,\ep_{\mathrm{F}}}(\lambda_i) 
        +\beta^{-1} S\big( F_{\beta,\ep_{\mathrm{F}}}(\lambda_i) \big) \Big]
        &- \sum_{\ell k ab} v_{\ell k, ab} 
        F_{\beta,\ep_{\mathrm{F}}}(H_0 + v)_{\ell k, ab}
        \nonumber\\
        &+ \int \Big[
            \rho_{\gamma} V^{\mathrm{nuc}} + \tfrac{1}{2}\rho_{\gamma} v_{\rm C}\rho_{\gamma} + \ep_{\mathrm{xc}}(\rho_{\gamma}) 
        \Big]
    \end{align}
    where $\sigma(H_0 + v) = \{\lambda_i\}$. Moreover, we fix the numbering of the eigenvalues so that the functions $\widehat{v} \mapsto \lambda_i$ are differentiable for each $i$. 
    
    In order to compute the derivatives of $\mathcal E_\beta$, we require the derivatives of the eigenvalues:
    \begin{align}
        \frac{\partial \lambda_i}{\partial \widehat{v}_{mc}} 
        &= \frac{\partial C^{iT}}{\partial \widehat{v}_{mc}} (H_0 + v) C^i 
        + C^{iT} \frac{\partial v}{\partial \widehat{v}_{mc}} C^i
        + C^{iT} (H_0 + v) \frac{\partial C^{i}}{\partial \widehat{v}_{mc}} \nonumber\\
        &= 2\lambda_i \frac{\partial C^i \cdot C^i}{\partial \widehat{v}_{mc}} 
        + \frac{\partial v}{\partial \widehat{v}_{mc}} \colon  ( C^{i} \otimes C^i ). \label{eq:der-eigen}
    \end{align}
    The first term of \cref{eq:der-eigen} is zero due to the normalisation of the eigenvectors $\|C^i\|_{\ell^2} = 1$. 
    
    Now, we note that at zero Fermi-temperature $\ep_{\mathrm{F}}$ can be fixed constant in a neighbourhood of $\widehat{v}$. At finite Fermi-temperature, we differentiate the constraint $\Tr F_{\beta,\ep_\mathrm{F}}(H_0 + v) = N_{\mathrm{el}}$ and apply \cref{eq:der-eigen}, to obtain
    \begin{align}
        \Tr F_{\beta,\ep_{\mathrm{F}}}^\prime(H_0 + v) \frac{\partial \ep_{\mathrm{F}}}{\partial \widehat{v}_{mc}} 
        &= \sum_i F_{\beta,\ep_{\mathrm{F}}}^\prime(\lambda_i) \frac{\partial \lambda_i}{\partial \widehat{v}_{mc}} 
        = F_{\beta,\ep_{\mathrm{F}}}^\prime(H_0 + v) : \frac{\partial v}{\partial \widehat{v}_{mc}}.
        \label{eq:der-FL-constraint}
    \end{align}
    
    We now may differentiate the first term of \cref{eq:E-rewritten},
    \begin{align}
        &\frac
            {\partial}
            {\partial \widehat{v}_{mc}}
        \sum_i \Big[ \lambda_i F_{\beta,\ep_{\mathrm{F}}}(\lambda_i) 
        + \beta^{-1} S\big( F_{\beta,\ep_{\mathrm{F}}}(\lambda_i) \big) \Big] \nonumber\\
        &=  \Big[ 
            F_{\beta,\ep_{\mathrm{F}}}(H_0 + v) 
            + \Big( (H_0 + v) 
            + \beta^{-1} S^\prime\big( F_{\beta,\ep_{\mathrm{F}}}(H_0 + v) \big) \Big)
            F^\prime_{\beta,\ep_{\mathrm{F}}}(H_0 + v)
        \Big] \colon \frac{\partial v}{\partial \widehat{v}_{mc}} \nonumber\\ 
        &\qquad -
        \sum_i \Big[ 
            \lambda_i + \beta^{-1} S^\prime\big(
                F_{\beta,\ep_{\mathrm{F}}}(\lambda_i)
            \big) 
        \Big]
        F_{\beta,\ep_{\mathrm{F}}}^\prime(\lambda_i) 
        \frac{\partial \ep_{\mathrm{F}}}{\partial \widehat{v}_{mc}} \nonumber\\
        &= \Big[ 
            F_{\beta,\ep_{\mathrm{F}}}(H_0 + v) 
            + \ep_{\mathrm{F}} F^\prime_{\beta,\ep_{\mathrm{F}}}(H_0 + v)
        \Big] \colon \frac{\partial v}{\partial \widehat{v}_{mc}} %
        -
        \ep_{\mathrm{F}}\Tr F^\prime_{\beta,\ep_{\mathrm{F}}}(H_0 + v)
        \frac
            {\partial \ep_{\mathrm{F}}}
            {\partial \widehat{v}_{mc}}  \nonumber\\ 
        &= F_{\beta,\ep_{\mathrm{F}}}(H_0 + v) \colon \frac{\partial v}{\partial \widehat{v}_{mc}}.
        \label{eq:der-E-T1}
    \end{align}
    Here, we have used the fact that
    $S^\prime(x) = \log \frac{x}{1-x}$, 
    and thus 
    $S^\prime\big(F_{\beta,\ep_{\mathrm{F}}}(x)\big) = -\beta (x - \ep_{\mathrm{F}})$. 
    In the final line, we have applied \cref{eq:der-FL-constraint}. 
    
    On the other hand, derivatives of the final term in \cref{eq:E-rewritten} involve derivatives of the electron density: 
    \begin{align}
        \frac
            {\partial [\gamma_{\mathrm{TB}}]_{\ell k, ab}}
            {\partial \widehat{v}_{mc}}
        &= \Big[ 
            DF_{\beta,\ep_{\mathrm{F}}}( H_0 + v )
            \frac
                {\partial v}
                {\partial \widehat{v}_{mc}}
        \Big]_{\ell k, ab} 
        \qquad \text{and thus} \nonumber\\
        \frac
            {\partial \rho_{\gamma}(x)}
            {\partial \widehat{v}_{mc}}
        &= \Braket{
            DF_{\beta,\ep_{\mathrm{F}}}( H_0 + v ) \phi(x) \otimes \phi(x),
            \frac
                {\partial v}
                {\partial \widehat{v}_{mc}}
        }.
        \label{eq:der-rho}
    \end{align}
    In particular, combining \cref{eq:der-E-T1,eq:der-rho}, we obtain
    \begin{align}
        0 = \frac{\partial \mathcal E_{\beta}(\gamma)}{\partial \widehat{v}_{mc}} 
        &= 
        - \sum_{\ell k, ab} v_{\ell k,ab} 
        \Big[
            DF_{\beta,\ep_{\mathrm{F}}}(H_0 + v) \frac{\partial v}{\partial \widehat{v}_{mc}}
        \Big]_{\ell k,ab}
        +
        \int 
        \frac
            {\partial \rho_{\gamma}(x)}
            {\partial \widehat{v}_{mc}}
        V_{\mathrm{eff}}[\rho_\gamma](x) \mathrm{d}x \nonumber\\
        &= \Braket{
        -DF_{\beta,\ep_{\mathrm{F}}}( H_0 + v ) \Big[
            v - \int \phi \otimes \phi \, V_{\mathrm{eff}}[\rho_\gamma]
        \Big], \frac
            {\partial v}
            {\partial \widehat{v}_{mc}}
        },
        \label{eq:der-E-der-v}
    \end{align}
    concluding the proof of \cref{eq:ML-EL}.

    We move on to consider \cref{eq:ML-approx}. To simplify notation, we write 
    $g \coloneqq \sum_{mc} \widehat{v}_{mc} \vb_{mc} - V_{\mathrm{eff}}[\rho_\gamma]$. First, we note by \cref{lem:DF-negative}, that
    $-DF_{\beta,\ep_{\mathrm{F}}} 
    \coloneqq -DF_{\beta,\ep_{\mathrm{F}}} (H_0 + v)$
    is positive definite and thus has a positive square root 
    $(-DF_{\beta,\ep_{\mathrm{F}}})^{1/2}$ 
    with inverse 
    $(-DF_{\beta,\ep_{\mathrm{F}}})^{- 1/2}$. 
    Then, by \cref{eq:der-E-der-v}, we have 
    \begin{align}
        &\left\|
            (-DF_{\beta,\ep_{\mathrm{F}}})^{-1/2} 
            (-DF_{\beta,\ep_{\mathrm{F}}})^{1/2}
            \int \phi \otimes \phi \, g
        \right\|^2 \nonumber \\
        &\qquad\leq \left\| (-DF_{\beta,\ep_{\mathrm{F}}})^{-1} \right\| 
        \Braket{
            -DF_{\beta,\ep_{\mathrm{F}}}
            \int \phi \otimes \phi \, g,
            \int \phi \otimes \phi \, g
        }  \nonumber \\
        &\qquad \leq 
        \left\| 
            (-DF_{\beta,\ep_{\mathrm{F}}})^{-1} 
        \right\|   
        \min_{g_{\vb} \in \mathrm{span} \, \vb} 
        \Braket{
            -DF_{\beta,\ep_{\mathrm{F}}}
            \int \phi \otimes \phi \, g,
            \int \phi \otimes \phi \, (g-g_{\vb})
        }  \nonumber \\
        &\qquad\leq 
        \left\| 
            (-DF_{\beta,\ep_{\mathrm{F}}})^{-1} 
        \right\|  
        \big\| 
            -DF_{\beta,\ep_{\mathrm{F}}} 
        \big\|  
        \left\|
            \int \phi \otimes \phi \, g
        \right\|
        \min_{g_{\vb} \in \mathrm{span}\, \vb}\left\|
            \int \phi \otimes \phi \, (g - g_{\vb})
        \right\|
    \end{align}
    which concludes the proof of \cref{eq:ML-approx}.
    
    The analogous results in the grand-canonical ensemble can be shown in the exact same way (but are simpler).  
\end{proof}

\begin{proof}[Proof of Theorem~\ref{locality}]
    We consider $\mathcal N_\ell \subset \mathbb R^3$ such that 
    $\mathbb R^3 = \bigcup_{\ell = 1}^M \mathcal N_\ell$ 
    with $(\mathcal N_\ell)_\ell$ pairwise disjoint and  
    $\mathrm{dist}(\bm r_m, \mathcal N_\ell) \geq \frac{1}{2}r_{\ell m}$ for all $\ell, m$. 
    Then, we may define
    \begin{align}
        \rho_\ell(\bm r, \widehat{v};x) 
        &= 2\sum_{k} \phi_{\ell}(x)^T F(H_0 + v)_{\ell k} \phi_{k}(x), 
        \qquad \text{and}
        \label{eq:locality-rho} \\
        E_\ell(\bm r, \widehat{v}) 
        &= \mathrm{tr} \Big[ 
            (H_0-\mu) F(H_0 + v)
            + \beta^{-1} S\big( F(H_0 + v) \big)
        \Big]_{\ell\ell} + \int_{\mathcal N_\ell} \ep_\mathrm{xc}( \rho(\bm r, \widehat{v}) ). 
        \label{eq:locality-E} 
    \end{align}
    Therefore, we have \cref{eq:locality-decomp} with 
    $E_{\mathrm{el}}[\rho] = \int \big[
        \rho V^\mathrm{nuc} + \tfrac{1}{2} \rho v_{\rm C} \rho
    \big]$. In the zero Fermi-temperature case, we may neglect the entropy contribution altogether, whereas, at finite Fermi-temperature, we have
    $xF(x) + \beta^{-1}S(F(x)) = \mu F(x) + \beta^{-1}\log\big( 1 - F(x) \big)$, and thus
    \begin{align}
        E_\ell(\bm r, \widehat{v}) = \mathrm{tr} \Big[ 
            G(H_0 + v)
            - v F(H_0 + v)
        \Big]_{\ell\ell} + \int_{\mathcal N_\ell} \ep_\mathrm{xc}( \rho(\bm r, \widehat{v}) ) 
    \end{align}
    where $G(x) \coloneqq \beta^{-1} \log\big( 1-  F(x) \big)$. In the zero Fermi-temperature limit, one recovers \cref{eq:locality-E}.
    
    Therefore, in order to prove the locality estimates, we need to show that derivatives of the functions $x \mapsto F(x)$ and $x \mapsto G(x)$ evaluated at $H_0 + v$ are exponentially localised. To do this, we simply apply \cref{lem:locality+bodyorder}. 
    
    In previous works \cite{ChenOrtnerThomas2019:locality,Thomas2020:scTB, ChenOrtner16,ChenLuOrtner18}, the Combes--Thomas estimate (\cref{lem:CT}) is applied to obtain the exponential localisation of the local observables $O(\Ham)_{\ell\ell}$ where $x \mapsto O(x)$ is analytic in a neighbourhood of $\sigma(\Ham)$. Here, we extend the analysis to the off-diagonal entries $O(\Ham)_{\ell k}$. 
    
    To simplify notation in the following, we write
    $
        \frac
            {\partial}
            {\partial (\bm r_m, \widehat{v}_m)}
        \coloneqq \big( 
        \frac
            {\partial}
            {\partial \bm r_m}, 
        \frac
            {\partial}
            {\partial \widehat{v}_m}
        \big)^T.
    $
    Since, by \cref{lem:Hamiltonian} we have 
    $\left|
        \frac
            {\partial \Ham_{\ell k,ab}}
            {\partial (\bm r_m, \widehat{v}_m)}
    \right| 
    \lesssim ( 1 + \|\widehat{v}\|_{\ell^\infty} ) e^{-\eta_\Ham \mathsf{d}_{\ell k}}$ 
    where $\mathsf{d}_{\ell k} \coloneqq r_{\ell m} + r_{mk}$, we apply \cref{cor:DjO} to conclude
    \begin{align}
        \left|\frac{\partial O(\Ham)_{\ell k,ab}}{\partial (\bm r_m, \widehat{v}_m)}\right|
        &= 
        \left|\Big[
            DO(\Ham)
            \frac
                {\partial \Ham}
                {\partial (\bm r_m, \widehat{v}_{m})} 
        \Big]_{\ell k, ab}\right|
        \lesssim 
        \big(1 + \|\widehat{v}\|_{\ell^\infty}\big)
        e^{-\frac{1}{2}\min\{\eta_{\mathrm{ct}},\eta_\Ham\} \, \mathsf{d}_{\ell k}},
        \label{eq:local-sketch}
    \end{align}
    where $\eta_{\mathrm{ct}}$ is the constant from \cref{lem:locality+bodyorder}.

    For the remainder, to simplify the notation, we will write 
    $\big| O(\Ham)_{\ell k}\big|\leq c_0 e^{-\eta_0 r_{\ell k}}$ and 
    $\big|
        \frac
            {\partial O(\Ham)_{\ell k}}
            {\partial (\bm r_m, \widehat{v}_m)}
    \big| 
    \leq c_1 e^{-\eta_1[r_{\ell m} + r_{mk}]}$ 
    for 
    $O \in \{ x \mapsto x, F, G\}$ 
    and $c_0,c_1,\eta_0,\eta_1 > 0$. The proofs for higher derivatives (assuming $\nu$ as in \cref{eq:phi-assumption,eq:v-basis-assumption} is sufficiently large) are analogous: we conclude by using the exponential localisation of the Hamiltonian and \cref{cor:DjO}, together with
    \begin{align}
        \frac
            {\partial^2 O(\Ham)}
            {\partial (\bm r_{m_1}, \widehat{v}_{m_1}) \partial (\bm r_{m_2}, \widehat{v}_{m_2}) }
        &= DO(\Ham)
        \frac
            {\partial^2 \Ham}
            {\partial (\bm r_{m_1}, \widehat{v}_{m_1}) \partial (\bm r_{m_2}, \widehat{v}_{m_2}) } \nonumber\\
        &\qquad+ D^2O(\Ham) \Big[ 
            \frac
            {\partial \Ham}
            {\partial (\bm r_{m_1}, \widehat{v}_{m_1})}, 
            \frac
            {\partial \Ham}
            {\partial (\bm r_{m_1}, \widehat{v}_{m_1})}
        \Big].
    \end{align}
    
    Returning to the electron density \cref{eq:locality-rho}, we therefore obtain
    \begin{align}
        \left| 
            \frac
                {\partial \rho_\ell(\bm r, \widehat{v};x)}
                {\partial (\bm r_m, \widehat{v}_m)}
        \right|
        &= \left| 
        2\sum_{k,ab} \phi_{\ell a}(x) \phi_{k b}(x) 
        \frac
            {\partial F(H_0 + v)_{\ell k, ab}}
            {\partial (\bm r_m, \widehat{v}_m)}
        \right| \nonumber \\
        &\leq 2c_\phi^2 c_{1} \sum_{k} e^{-\eta_\phi [|x - \bm r_\ell| + |x - \bm r_k|]} e^{-\eta_1 [r_{\ell m} + r_{km}]} \nonumber \\
        &\lesssim e^{-[ \eta_\phi |x - \bm r_\ell| + \eta_1 r_{\ell m} + \frac{1}{2} \min\{ \eta_\phi, \eta_1 \} |x - \bm r_m|]}.
    \end{align} 
 
    Therefore, derivatives of the site energies have analogous locality properties. In the $\beta = \infty$ case, we have
    \begin{align}
        \frac
            {\partial E_\ell(\bm r, \widehat{v})}
            {\partial (\bm r_m, \widehat{v}_m)}
        &= \sum_{k,ab} \Big[
        \frac
            {\partial [H_0]_{\ell k, ab}}
            {\partial (\bm r_m, \widehat{v}_m)} 
        F(H_0 + v)_{\ell k,ab} 
        + (H_0 - \mu)_{\ell k, ab} 
        \frac
            {\partial F(H_0 + v)_{\ell k,ab}}
            {\partial (\bm r_m, \widehat{v}_m)} \Big]
        \nonumber \\
        &\qquad + 
        \int_{\mathcal N_\ell} 
        \ep_{\mathrm{xc}}^\prime\big( \rho(\bm r, \widehat{v}) \big) 
        \frac
            {\partial \rho(\bm r, \widehat{v})}
            {\partial (\bm r_m, \widehat{v}_m)},
    \end{align}
    whereas, for $\beta < \infty$, we have
    \begin{align}
        \frac
            {\partial E_\ell(\bm r, \widehat{v})}
            {\partial (\bm r_m, \widehat{v}_m)}
        &= \tr 
        \frac
            {\partial G(H_0 + v)_{\ell \ell, aa}}
            {\partial (\bm r_m, \widehat{v}_m)} 
        - \sum_{k,ab} \Big[
        \frac
            {\partial v_{\ell k, ab}}
            {\partial (\bm r_m, \widehat{v}_m)} 
        F(H_0 + v)_{\ell k, ab}
        + v_{\ell k, ab}
        \frac
            {\partial F(H_0 + v)_{\ell k,ab}}
            {\partial (\bm r_m, \widehat{v}_m)} 
        \Big]
        \nonumber \\
        &\qquad + 
        \int_{\mathcal N_\ell} 
        \ep_{\mathrm{xc}}^\prime\big( \rho(\bm r, \widehat{v}) \big) 
        \frac
            {\partial \rho(\bm r, \widehat{v})}
            {\partial (\bm r_m, \widehat{v}_m)}.
    \end{align}
    In both cases, we therefore have 
    \begin{align}
        \bigg|
        \frac
            {\partial E_\ell(\bm r, \widehat{v})}
            {\partial (\bm r_m, \widehat{v}_m)}
        \bigg| 
        &\lesssim e^{-\eta_1 r_{\ell m}} 
        + \sum_k e^{-\eta_1 [r_{\ell m} + r_{mk}]} e^{-\eta_0 r_{\ell k}}
        + \int_{\mathcal N_\ell} 
        e^{-\frac{1}{2} \min\{ \eta_\phi, \eta_1 \} |x - \bm r_m|} \mathrm{d}x\nonumber\\
        &\lesssim e^{-\frac{1}{2} \min\{ \eta_\phi, \eta_1 \} \mathrm{dist}(\bm r_m, \mathcal N_\ell)}
        \leq e^{-\frac{1}{4} \min\{ \eta_\phi, \eta_1 \} r_{\ell m}},
    \end{align}
    which concludes the proof. Here, we have used the assumption  $|\ep_{\mathrm{xc}}^\prime(\rho)| \lesssim |\rho|^{1/3}$ and 
    $|\rho(\bm r, \widehat{v};x)| \leq 
    N_{\mathrm{b}}^2 c_\phi^2 c_1 
    \sum_{\ell k} 
        e^{-\eta_\phi [ |x - \bm r_\ell| + |x - \bm r_k| ] } e^{-\eta_0 r_{\ell k}}
    \lesssim 1$.
\end{proof}

\begin{remark}
    As mentioned in Remark~\ref{rem:locality-N}, the body-ordered approximations as defined in \cref{eq:decomp-N} inherit the same locality properties as in Theorem~\ref{locality}. To show this, one may follow the same proof with $\big| O_N(\Ham)_{\ell k}\big|\leq c_0 e^{-\eta_0 r_{\ell k}}$ and 
    $\big|
        \frac
            {\partial O_N(\Ham)_{\ell k}}
            {\partial (\bm r_m, \widehat{v}_m)}
    \big| 
    \leq c_1 e^{-\eta_1[r_{\ell m} + r_{mk}]}$ 
    for 
    $O_N \in \{F_N, G_N\}$ 
    and some $c_0,c_1,\eta_0,\eta_1 > 0$ (which follows directly from Lemma~\ref{lem:locality+bodyorder}). 
\end{remark}

\begin{proof}[Proof of Theorem~\ref{body-order}]
    Since $v_{\ell k, ab} = \sum_{mc} \widehat{v}_{mc} \int \phi_{\ell a} \, \vb_{mc} \, \phi_{kb}$, on writing $\Ham \coloneqq H_0 + v$, we have
        \begin{align}
            \Ham_{\ell k,ab} &= 
            \int \nabla\phi_{Z_\ell a} \cdot \nabla\phi_{Z_k b}(\,\cdot\, - \bm r_{\ell k}) +
            \int \phi_{Z_\ell a} \phi_{Z_k b}(\,\cdot\, - \bm r_{\ell k}) 
            \Big[ 
                \widehat{v}_{\ell} \cdot  \vb + \widehat{v}_{k} \cdot \vb(\,\cdot\, - \bm r_{\ell k}) 
            \Big] \nonumber \\
            &\qquad + \sum_{m\not\in \{\ell,k\}}  \int \phi_{Z_\ell a}(\,\cdot\,+\bm r_{\ell m}) \phi_{Z_kb}(\,\cdot\, + \bm r_{km}) \, \widehat{v}_{m} \cdot\vb,
            \label{eq:Ham-3-body}
        \end{align}
        a quantity of body-order at most $3$. In particular, we may write $\Ham_{\ell k} = \sum_m \Ham_{\ell k m}$ where $\Ham_{\ell k m}$ has body-order at most $3$ (in the combined variables $\{(\bm r_m, \widehat{v}_m)\}_{m}$). 
        Therefore, polynomials of the Hamiltonian also have finite body-order:
        \begin{align}
            [\Ham^N]_{\ell k} 
            &= \sum_{\ell_{1}, \dots, \ell_{N-1}} \Ham_{\ell \ell_1} \Ham_{\ell_1 \ell_2} \cdots \Ham_{\ell_{N-1} k} 
            = \sum_{\above{\ell_{1}, \dots, \ell_{N-1}}{m_1,\dots,m_{N}}} \Ham_{\ell \ell_1 m_1} \Ham_{\ell_1 \ell_2 m_2} \cdots \Ham_{\ell_{N-1} k m_N},
        \end{align}
        a quantity of body-order at most $2N+1$. That is, there exists $\bm U_N^{(1)}$ as in \cref{eq:U} such that $F_N(H_0 + v) = \bm U_N^{(1)}$. See \cite{ThomasChenOrtner2022:body-order}, for an explicit definition of the $U_{nN}^{(1)}$. 
        In particular, we have \cref{eq:rho-phi-U-phi}, and thus the nonlinearity in \cref{eq:E-tr+ep} is given by 
        \[
            \ep_\ell(\bm U) \coloneqq \int_{\mathcal N_\ell} \ep_{\mathrm{xc}}\big( \bm U \colon [\phi(x) \otimes \phi(x)] \big).
        \]
        Since both $F_N(\Ham)$ and $G_N(\Ham)$ have body-order at most $2N+1$, the quantities 
        \begin{gather}
            \tr\big[ 
                (H_0 - \mu) F_{N}(H_0 + v) 
            \big]_{\ell \ell}, 
            \qquad \text{and} \qquad 
            \tr \big[ G_N(H_0 + v) - vF_N(H_0 + v) \big]_{\ell\ell}
        \end{gather}
        are both of body-order at most $2N+1$.         
    \end{proof}

\begin{proof}[Proof of Lemma~\ref{ML-minimiser-N}]
    The proof follows the exact same argument as in \cref{ML-minimiser}, where, in the finite Fermi-temperature case, we require $G_N^\prime = F_N$.
\end{proof}

\begin{proof}[Proof of Theorem~\ref{minimisers}]
\textit{Preliminaries.} Define $T_N\colon \ell^\infty(I) \to \ell^\infty(I)$ where the index set is defined as $I \coloneqq \{ (m,c) \}$ by 
\begin{align}
    \label{eq:TN}
    T_N(\widehat{v})_{mc} \coloneqq \Braket{ - D F_N\big( H_0 + v \big) 
    \Big[ v - \int \phi \otimes \phi \, V_{\mathrm{eff}}[\rho_N]  \Big], \int \phi \otimes \phi \, \vb_{mc} }
\end{align}   
where $v \coloneqq \sum_{mc} \widehat{v}_{mc} \int \phi \otimes \phi \, \vb_{mc}$ and $\rho_N(x) \coloneqq \phi(x)^T F_N(H_0 + v) \phi(x)$. Moreover, recall $V_{\mathrm{eff}}[\rho] \coloneqq V^{\mathrm{nuc}} + v_{\rm C} \rho + \ep^\prime_{\mathrm{xc}}(\rho)$.
 
We apply the inverse function theorem on $T_N$ about $\widehat{v^\star}$: if there exist $\delta, L, \ep_N, c_{\mathrm{stab},N} > 0$ such that 
\begin{enumerate}[label=(\textit{\roman*})]
    \item $T_N \colon \ell^\infty(I) \to \ell^\infty(I)$ continuous, continuously differentiable on $B_{\delta}(\widehat{v^\star})$,
    
    \item Lipschitz: $\| DT_N(\widehat{v_1}) - DT_N(\widehat{v_2}) \| \leq L\|\widehat{v_1} - \widehat{v_2}\|$ for all $\widehat{v_1}, \widehat{v_2} \in B_\delta(\widehat{v^\star})$,
    
    \item Consistency: $\| T_N(v^\star)\| \leq \ep_N$,   
    
    \item Stability: $DT_N(\widehat{v^\star})$ is non-singular with $\|DT_N(\widehat{v^\star})^{-1}\| \leq c_{\mathrm{stab},N}$,
        
    \item  Newton–Kantorovich condition: $2 c_{\mathrm{stab},N}^2 \ep_N L < 1$,
    \item $2 c_{\mathrm{stab},N} \ep_N < \delta$,
\end{enumerate}
then the Newton iteration starting at $\widehat{v^\star}$ is well-defined and converges quadratically to $\widehat{v^\star_N}$, a unique solution to $T_N = 0$ on $\overline{B_{\|\cdot\|_{\ell^\infty} }(\widehat{v^\star};2 c_{\mathrm{stab},N} \ep_N)}$ \cite{Zhengda1993:NewtonIteration,Kantorovich1948}. 

Define $T\colon \ell^\infty(I) \to \ell^\infty(I)$ by
\begin{align}
    \label{eq:T}
    T(\widehat{v})_{mc} \coloneqq \Braket{ - D F\big( H_0 + v \big) 
    \Big[ v - \int \phi \otimes \phi \, V_{\mathrm{eff}}[\rho]  \Big], \int \phi \otimes \phi \, \vb_{mc} }, 
\end{align} 
where $\rho(x) \coloneqq \phi(x)^T F(H_0 + v) \phi(x)$. In order to simplify the presentation, we will first show that \textit{(i)--(vi)} result from the following: there exist $c_0, c_1, c_2, c_3, c_4 > 0$ such that
\begin{align}
    \label{eq:T-smooth}
    T\colon \ell^\infty(I) \to \ell^\infty(I) \text{ is smooth on } B_\delta(\widehat{v^\star}) \text{ with } \delta \coloneqq c_0 M^{-\frac{1}{2}},
\end{align}
and, for all $\widehat{v} \in B_\delta(\widehat{v^\star})$, we have
\begin{align}
    \label{eq:a}
    \|DT_N(\widehat{v})\|_{\ell^\infty \to \ell^\infty} &\leq c_1 M, \\
    \label{eq:b}
    \|D^2T_N(\widehat{v})\|_{\ell^\infty\times\ell^\infty \to \ell^\infty} &\leq c_2 M, \\
    \label{eq:c}
    \|(T-T_N)(\widehat{v})\|_{\ell^\infty} &\leq c_3 M e^{-\theta N}, \\
    \label{eq:d}
    \|D(T-T_N)(\widehat{v})\|_{\ell^\infty \to \ell^\infty} &\leq c_4 M e^{-\theta N}.
\end{align}
The proof of \cref{eq:T-smooth,eq:a,eq:b,eq:c,eq:d} will follow the conclusion of the proof of \cref{minimisers}.

\textit{(i)} Since $F_N$ is a polynomial, and $\widehat{v} \mapsto H_0 + v$ is smooth, $T_N$ is smooth on $\ell^\infty(I)$.

\textit{(ii)} For fixed $\widehat{v_0}, \widehat{v_1} \in B_{\|\cdot\|_{\ell^\infty}}(\widehat{v^\star}; \delta)$, we define $\widehat{v_t} \coloneqq t\widehat{v_1} + (1 - t) \widehat{v_0}$, and use \cref{eq:b} to conclude
\begin{align}
    %
    %
    %
    %
    \|DT_N(\widehat{v_0}) - DT_N( \widehat{v_1}) \|_{\ell^\infty \to \ell^\infty} 
    %
    %
    &\leq \int_0^1 
    \left\| 
        D^2T_N(\widehat{v_t})
    \right\|_{\ell^\infty \times \ell^\infty \to \ell^\infty}
    \mathrm{d}t \, \| \widehat{v_0} - \widehat{v_1} \|_{\ell^\infty} \nonumber\\
    &\leq c_2 M \| \widehat{v_0} - \widehat{v_1} \|_{\ell^\infty}
    \eqqcolon L \| \widehat{v_0} - \widehat{v_1} \|_{\ell^\infty}. 
\end{align}

\textit{(iii)} Consistency: by \cref{eq:c}, we have 
$\|T_N(\widehat{v^\star})\|_{\ell^\infty} 
= \|(T-T_N)(\widehat{v^\star})\|_{\ell^\infty} 
\leq c_3 M e^{-\theta N} \eqqcolon \ep_N$. 

\textit{(iv)} Stability: We have $\| DT(\widehat{v^\star})^{-1}\| \leq c_{\mathrm{stab}}$. Therefore, for $c_4 M e^{-\theta N} \leq \tfrac{1}{2 c_{\mathrm{stab}}}$, we apply \cref{eq:d} to conclude 
\begin{align}
    \| D(T-T_N)(\widehat{v^\star})\|_{\ell^\infty \to \ell^\infty} 
    &\leq c_4 M e^{-\theta N}
    \leq\frac{1}{2}
    \| DT(\widehat{v^\star})^{-1}\|_{\ell^\infty \to \ell^\infty}^{-1}.
    %
\end{align}
Therefore, for such $N$, $DT_N(\widehat{v^\star})$ is invertible with 
\[
    \|DT_N(\widehat{v^\star})^{-1}\|_{\ell^\infty\to\ell^\infty} \leq 2 \| DT(\widehat{v^\star})^{-1}\|_{\ell^\infty \to \ell^\infty} \leq 2 c_{\mathrm{stab}} 
    \eqqcolon c_{\mathrm{stab},N}.
\]

\textit{(v), (vi)} With the choices of $\delta, L, \ep_N, c_{\mathrm{stab},N}>0$ as above, we require
\begin{align}
    2 c_{\mathrm{stab},N}^2 \ep_N L &=
    8 c_2 c_3 c_{\mathrm{stab}}^2 M^2 e^{-\theta N} \leq 1
    \\
    2 c_{\mathrm{stab},N} \ep_N &= 
    4c_3 c_{\mathrm{stab}} Me^{-\theta N} < \delta = c_0 M^{-\frac{1}{2}}.
\end{align}

\textit{Conclusion:} Therefore, there exists $\overline{C} = \overline{C}(c_{\mathrm{stab}}, c_0, c_1, c_2, c_3, c_4) > 0$ such that if
\[
    \theta N > 2 \log M  + \overline{C},
\]
then there exists $\widehat{v^\star_N}$ such that $T_N(\widehat{v^\star_N}) = 0$ and 
\[
    \|\widehat{v^\star_N} - \widehat{v^\star}\|_{\ell^\infty} \leq 2 c_{\mathrm{stab},N} \ep_N 
    \leq 4 c_3 c_{\mathrm{stab}} M e^{-\theta N}. 
\]

\textit{Density:} We have $|\rho_N(\widehat{v}^\star_N;x) - \rho(\widehat{v}^\star;x)| \lesssim e^{-\theta N} + |\rho(\widehat{v}^\star_N;x) - \rho(\widehat{v}^\star;x)|$ and, using the same argument as in \cref{lem:locality+bodyorder}, we conclude
\begin{align}
    |\rho(\widehat{v}^\star;x) - \rho(\widehat{v}^\star_N;x)| 
    &\lesssim \sum_{\ell k} e^{-\eta_\phi [|x - \bm r_\ell| + |x - \bm r_k|]}
    \big|\big[ F(H_0 + v^\star) - F(H_0 + v^\star_N) \big]_{\ell k}\big| \nonumber \\
    &\lesssim \sum_{\ell k \ell^\prime k^\prime} \|F\|_{\mathscr C} e^{-\eta_\phi [|x - \bm r_\ell| + |x - \bm r_k|]} e^{-\eta_{\mathrm{ct},N} r_{\ell\ell^\prime}} |[v^\star_N - v^\star]_{\ell^\prime k^\prime}| e^{-\eta_{\mathrm{ct}} r_{k^\prime k}} \nonumber \\
    &\lesssim e^{-\eta_\phi \min_\ell|x - \bm r_\ell|} \| \widehat{v}^\star - \widehat{v}^\star_N \|_{\ell^\infty} 
    \lesssim M e^{-\theta N} \label{eq:density-pf}
\end{align}
where $\eta_{\mathrm{ct}}, \eta_{\mathrm{ct},N}$ are the constants from \cref{lem:CT} when applied to $H_0 + v^\star$ and $H_0 + v^\star_N$, respectively, and we have applied \cref{eq:V-bound}, below.

\textit{Energy:} By applying \cref{lem:locality+bodyorder} and similar arguments as in \cref{eq:density-pf}, we have
\begin{align}
    \big|\big[ O_N(H_0 + v^\star_N) - O(H_0 + v^\star) \big]_{\ell k}\big| 
    &\lesssim 
    e^{-\theta N} e^{-\eta_{\mathrm{ct}} r_{\ell k}} 
    + \big|\big[ O(H_0 + v^\star_N) - O(H_0 + v^\star) \big]_{\ell k}\big| \nonumber \\
    &\lesssim e^{-\theta N} e^{-\eta_{\mathrm{ct}} r_{\ell k}} 
    + \| \widehat{v}^\star_N - \widehat{v}^\star \|_{\ell^\infty} 
    e^{-\eta r_{\ell k}} \nonumber \\
    &\lesssim M e^{-\theta N} e^{-\eta r_{\ell k}} \label{eq:ONN-O}
\end{align}
where $\eta \coloneqq \frac{1}{4} \min\{\eta_{\mathrm{ct},N}, \eta_{\mathrm{ct}}, \eta_\phi\}$. In particular, one may show that 
\begin{align}
    \big|\mathcal G_N(\widehat{v}^\star_N) - \mathcal G(\widehat{v}^\star)\big|
    &\lesssim M^2 e^{-\theta N} 
    + \int 
    \big| 
        \ep_\mathrm{xc}^\prime\big(\xi(x)\big) 
    \big|  
    \big| 
        \rho_N(\widehat{v}^\star_N;x) - \rho(\widehat{v}^\star;x) 
    \big| \mathrm{d}x \nonumber \\
    &\lesssim M^2 e^{-\theta N}
    + \int e^{-\eta_\phi \min_\ell |x - \bm r_\ell|} \mathrm{d}x \, Me^{-\theta N}
    \lesssim M^2 e^{-\theta N}
\end{align}
where $\xi(x) \in \big[ \rho_N(\widehat{v}^\star_N;x), \rho(\widehat{v}^\star;x) \big]$ and, in the final line, we have applied \cref{eq:density-pf}.
\end{proof}

\subsection{Proof of (\ref{eq:T-smooth})--(\ref{eq:d})}

 For $\widehat{v} \in \ell^\infty(I)$, we define 
    $V(x) \coloneqq \sum_{mc} \widehat{v}_{mc} \vb_{mc}(x)$ and 
    $v = \int \phi \otimes \phi \, V$, 
    $\rho(x) \coloneqq \phi(x)^T F(\Ham) \phi(x)$,  
    and $\rho_N(x) \coloneqq \phi(x)^T F_N(\Ham) \phi(x)$ 
    where $\Ham \coloneqq H_0 + v$.

\begin{proof}[Proof of (\ref{eq:T-smooth})]
We first note that 
\begin{align}
    \mathrm{dist}\big( 
        \sigma(H_0 + v_1),
        \sigma(H_0 + v_2)
    \big)
    &\leq \|v_1 - v_2\|_{\ell^2 \to \ell^2} \nonumber\\ 
    &\leq \sup_{\|w\|_{\ell^2} = 1} \sqrt{\sum_{\ell a} \Big[
        \sum_{kb,mc} \int \phi_{\ell a} \phi_{kb} [\widehat{v_1} - \widehat{v_2}]_{mc}\vb_{mc} w_{kb}
    \Big]^2}\nonumber\\
    &\leq C_p \sqrt{M} \|\widehat{v_1} - \widehat{v_2}\|_{\ell^p} \label{eq:dist-spec}
\end{align}
where
\begin{align*}
    &C_p^2 \coloneqq \sup_{\|w\|_{\ell^2} = 1} \sup_\ell \Big[ \sum_{k,ab} 
        \int \phi_{\ell a}(x) \phi_{kb}(x) w_{kb} \|\vb(x)\|_{\ell^q} \mathrm{d}x 
    \Big]^2
\end{align*}
and $1\leq p,q,\leq \infty$ with $\frac{1}{p} + \frac{1}{q} = 1$.

Recall that $\mathsf g(v) \coloneqq \max \big(\sigma( H_0 + v ) \cap (-\infty, \mu]\big) - \min\big(\sigma( H_0 + v ) \cap [\mu,\infty)\big)$. 
By \cref{eq:dist-spec}, we have $\mathsf{g}(v) \geq \frac{1}{2}\mathsf{g}(v^\star)$ for all 
$\widehat{v} \in B_{\|\cdot\|_{\ell^\infty}}(\widehat{v^\star}; \delta)$ where $\delta \coloneqq c_0 M^{-\frac{1}{2}}$ and $c_0 \coloneqq \frac{\mathsf{g}(v^\star)}{4 C_\infty}$. 
Therefore, on $B_{\|\cdot\|_{\ell^\infty}}(\widehat{v^\star}; \delta)$, the mapping $\widehat{v} \mapsto F(H_0 + v)$ is smooth and thus $\widehat{v} \mapsto T(\widehat{v})$ is smooth.
\end{proof}

In order to prove \cref{eq:a,eq:b,eq:c,eq:d}, we require the following basic facts:
\begin{align}
    | v_{\ell k, ab} | &\lesssim \|\widehat{v}\|_{\ell^\infty(I)} e^{-\frac{1}{2} \eta_{\phi} r_{\ell k}}\label{eq:V-bound} 
\end{align}
\begin{proof} 
$\big| 
    v_{\ell k,ab} 
\big| 
\leq \sum_{mc} c_\phi^2 c_\vb |\widehat{v}_{mc}| \int e^{-\eta_{\phi} [|x - \bm r_{\ell}| + |x - \bm r_{k}|] - \eta_\vb |x - \bm r_m|} 
\lesssim \|\widehat{v}\|_{\ell^\infty} e^{-\frac{1}{2} \eta_\phi r_{\ell k}}$. Here, the final inequality follows from \cref{lem:elementary}.\end{proof}
 
For all $x \in \mathbb R^3$, and $\eta >0$, we have
\begin{align}
    \int \frac
        {e^{-\eta[ |\bm r_\ell - y| + |y - \bm r_k|]}}
        {|x - y|} 
    \mathrm{d}y
    &\lesssim 
    \frac
        {e^{-\frac{1}{2}\eta \, r_{\ell k}}}
        {|\bm r_\ell - x| + |x - \bm r_k|}
    \label{eq:phi*|.|-1} 
    %
    %
    %
    %
\end{align}
\begin{proof}For fixed $r > 0$, the left hand side of \cref{eq:phi*|.|-1} may be bounded above by
\begin{align}
    &e^{-\frac{1}{2} \eta \, r_{\ell k}} \bigg[ 
        \int_{B_r(x)}
            \frac
            {e^{-\frac{1}{2}\eta [|y - \bm r_\ell| + |y - \bm r_k|]}}
            {|x - y|} 
        \mathrm{d}y +
        \frac{1}{r} \int_{\mathbb R^3 \setminus B_r(x)} e^{-\frac{1}{2}\eta [|y - \bm r_\ell| + |y - \bm r_k|]} \mathrm{d}y \bigg] \nonumber \\
        &\qquad\lesssim e^{-\frac{1}{2}\eta \, r_{\ell k} }\bigg[ r^2
        e^{-\frac{1}{2}\eta [\dist( \bm r_\ell, B_{r}(x) ) + \dist( \bm r_k, B_r(x) ) ]}
        + r^{-1} \bigg].
    \nonumber
\end{align}
We conclude by choosing $r \coloneqq \frac{1}{4}[|\bm r_\ell - x| + |x - \bm r_k|]$.\end{proof}

Finally, we have
\begin{align}
    \big| V_{\mathrm{eff}}[\rho](x) \big| + e^{\theta N} \big| V_{\mathrm{eff}}[\rho](x) -  V_{\mathrm{eff}}[\rho_N](x)\big| &\lesssim 
     \sum_{m} \frac{1}{|x - \bm r_m|}
    \label{eq:Veff[rho]} 
\end{align}
\begin{proof}First, we note that $|\rho(x)| 
\lesssim \sum_{\ell k} e^{-\eta_\phi [|x - \bm r_\ell| + |x - \bm r_k|]} e^{-\eta_{\mathrm{ct}} r_{\ell k}} 
\lesssim e^{-\eta_\phi \min_\ell |x - \bm r_\ell|}$. Therefore, applying \cref{eq:phi*|.|-1}, we have
\begin{align}
    \int \frac{|\rho(y)|}{|x-y|} \mathrm{d}y 
    &\lesssim \int 
    \frac
        {e^{-\eta_\phi \min_\ell |y- \bm r_\ell|} }
        {|x-y|} 
    \mathrm{d}y
    \lesssim \sum_m \int 
    \frac
        {e^{-\eta_\phi |y- \bm r_m|} }
        {|x-y|} 
    \mathrm{d}y 
    \lesssim \sum_{m} \frac{1}{|x - \bm r_m|}.
\end{align}
The $V^{\mathrm{nuc}}$ term in the effective potential also satisfies this bound. Moreover, we have $\big|\ep_\mathrm{xc}\big(\rho(x)\big)\big| \lesssim |\rho(x)|^{1/3} \lesssim 1$. Therefore, we obtain the first bound in \cref{eq:Veff[rho]}. The second follows in the same way by using \cref{lem:locality+bodyorder} and noting  
$|\rho(x)- \rho_N(x)| 
\lesssim e^{-\theta N} e^{-\eta_\phi \min_\ell |x - \bm r_\ell|}$.
\end{proof}

\begin{proof}[Proof of \oldcref{eq:c}]
    From \cref{eq:TN,eq:T}, we have 
    \begin{align}
    T(\widehat{v}) - T_N(\widehat{v}) 
    &= \Braket{ (DF_N - D F)\big( \Ham \big) 
    \int \phi \otimes \phi \, (V - V_{\mathrm{eff}}[\rho]), \int \phi \otimes \phi \, \vb } \nonumber
    \\
    &\qquad +  \Braket{ -DF_N\big( \Ham \big) \int \phi \otimes \phi \, \Big( V_{\mathrm{eff}}[\rho_N] - V_{\mathrm{eff}}[\rho] \Big), \int \phi \otimes \phi \, \vb }.
    \label{eq:T-TN-proof}
\end{align}
We simply apply \cref{cor:DjO}, together with \cref{eq:V-bound,eq:Veff[rho],eq:phi*|.|-1}, to conclude
\begin{align}
    &\big|\big[ T(\widehat{v}) - T_N(\widehat{v}) \big]_{mc}\big| \nonumber\\
    &\qquad\lesssim 
    e^{-\theta N}
    \sum_{\ell k} 
    \left( 
        \|\widehat{v}\|_{\ell^\infty} + \sum_m \frac{1}{r_{\ell m} + r_{mk}} 
    \right)
    e^{-\frac{1}{4}\min\{\eta_\mathrm{ct}, \eta_{\phi}\} r_{\ell k}} 
    e^{-\frac{1}{2}\min\{\eta_\phi, \eta_\vb\} [r_{\ell m} + r_{mk}] } \nonumber\\
    &\qquad\lesssim Me^{-\theta N}
    \label{eq:T-TN-proof1}
\end{align}
which concludes the proof of \cref{eq:c}.
\end{proof}

Now we move on to consider the derivatives of $T_N$ and $T$: first note that 
    \begin{align}
        \frac
            {\partial \rho(x)}
            {\partial \widehat{v}_{mc}}
        &= \Braket{
            DF(\Ham) \phi(x) \otimes \phi(x),
            \int \phi \otimes \phi \, \vb_{mc}
        }.
    \end{align} 
Therefore, taking derivatives of \cref{eq:TN}, we have
\begin{align}
    DT(\widehat{v}) \widehat{w}
    &\coloneqq 
    \Braket{ - D^2 F( \Ham ) 
    \Big[ \int \phi \otimes \phi \,(V -  V_{\mathrm{eff}}[\rho]) , w \Big], \int \phi \otimes \phi \, \vb} \nonumber\\
    &\quad+ \Braket{ - D F( \Ham ) w , \int \phi \otimes \phi \, \vb} \nonumber\\
    &\quad+ \int \Braket{ DF(\Ham)\phi \otimes \phi ,
    \int\phi\otimes \phi \,\vb}  \delta V_{\mathrm{eff}}[\rho] \Braket{ DF(\Ham) \phi \otimes \phi, w } \label{eq:dTN} 
\end{align}
where $\delta V_{\mathrm{eff}}[\rho] f \coloneqq v_{\rm C} f + \ep_{\mathrm{xc}}^{\prime\prime}(\rho) f$. A similar formula for $DT_N$ results by replacing $F$ with $F_N$ and $\rho$ with $\rho_N$.

\begin{proof}[Proof of \oldcref{eq:a}]
    First, applying \cref{cor:DjO}, togther with \cref{eq:phi-assumption}, we obtain
    \begin{align}
        &\Braket{ DF(\Ham) \phi(x) \otimes \phi(x), \int \phi \otimes \phi \,\vb_{mc}} \nonumber\\
        &\qquad\lesssim \sum_{\ell k} c_\phi^2 e^{-\frac{1}{2}\min\{\eta_{\mathrm{ct}}, \eta_{\phi}\} [
            |x - \bm r_\ell| + |x - \bm r_{k}|
        ]} 
        e^{-\frac{1}{2}\min\{\eta_\phi, \eta_{\vb}\} [r_{\ell m} + r_{mk}]} \nonumber \\
        &\qquad\lesssim e^{-\frac{1}{2}
            \min\{\eta_{\mathrm{ct}}, \eta_{\phi}, \eta_{\vb}\} 
            |x - \bm r_m|
        }. 
        \label{eq:DFphi(x)}
    \end{align}
    Similarly, we have $\big| \Braket{ DF(\Ham) \phi(x) \otimes \phi(x), w } \big| \lesssim \|\widehat{w}\|_{\ell^\infty} e^{-\frac{1}{4}\min\{\eta_\mathrm{ct},\eta_\phi\} \min\limits_{1\leq\ell\leq M} |x - \bm r_{\ell}|}$.
    
    Therefore, again applying \cref{cor:DjO}, together with \cref{eq:V-bound,eq:Veff[rho],eq:phi*|.|-1}, we obtain
    \begin{align}
        \big| \big[ DT_N(\widehat{v}) \widehat{w} \big]_{mc} \big|
        &\lesssim \sum_{\ell k} \Big( 
            \|\widehat{v}\|_{\ell^\infty} + \sum_p \frac{1}{r_{\ell p} + r_{pk}} 
            + 1
        \Big) \|\widehat{w}\|_{\ell^\infty} e^{-\frac{1}{4}\min\{\eta_{\mathrm{ct}},\eta_{\phi},\eta_\vb\} [ 
        r_{m\ell} + r_{\ell k} + r_{km}] } \nonumber\\
        &\qquad+ \|\widehat{w}\|_{\ell^\infty} \iint \frac{e^{-\frac{1}{2}\min\{\eta_{\mathrm{ct}}, \eta_\phi, \eta_\vb\} |x - \bm r_m|} e^{-\frac{1}{4}\min\{\eta_{\mathrm{ct}}, \eta_{\phi} \} \min_\ell |y-\bm r_\ell| } }{|x-y|} \mathrm{d}x\mathrm{d}y \nonumber\\
        &\lesssim  \|\widehat{w}\|_{\ell^\infty} \Big[ M 
        + \int \frac{ e^{-\frac{1}{4}\min\{\eta_{\mathrm{ct}}, \eta_{\phi} \} \min_\ell |y-\bm r_\ell| } }{|y-\bm r_m|} \mathrm{d}y \Big] 
        \lesssim M\|\widehat{w}\|_{\ell^\infty},
    \end{align}
    which concludes the proof.
\end{proof} 

\begin{proof}[Proof of \oldcref{eq:d}]
    Simplifying the notation by omitting $H_0 + v$ in the notation for $F$ and $F_N$, we have
    \begin{align}
    &[DT(\widehat{v})-DT_N(\widehat{v})] \widehat{w} 
    \coloneqq 
    \Braket{ (D^2F_N - D^2 F) 
    \Big[ \int \phi \otimes \phi \,(V -  V_{\mathrm{eff}}[\rho]) , w \Big], \nabla_I v} \nonumber\\
    &\quad+ \Braket{ -D^2F_N 
    \Big[ \int \phi \otimes \phi \,(V_{\mathrm{eff}}[\rho_N] - V_{\mathrm{eff}}[\rho]) , w \Big], \nabla_I v} \nonumber\\
    &\quad+ \Braket{ (DF_N- D F) w , \nabla_I v } \nonumber\\
    &\quad+ \int \Braket{ (DF-DF_N) \phi \otimes \phi, \nabla_I v} \delta V_{\mathrm{eff}}[\rho] \Braket{ DF \phi \otimes \phi, w } \nonumber \\
    &\quad+ \int \Braket{ DF_N \phi \otimes \phi, \nabla_I v} 
    \big(\ep_{\mathrm{xc}}^{\prime\prime}(\rho) - \ep_{\mathrm{xc}}^{\prime\prime}(\rho_N) \big) 
    \Braket{ DF \phi \otimes \phi, w } \nonumber \\
    &\quad+ \int \Braket{ DF_N \phi \otimes \phi, \nabla_I v} \delta V_{\mathrm{eff}}[\rho_N] \Braket{ (DF-DF_N) \phi \otimes \phi, w }. 
    \label{eq:dTN-} 
\end{align}
In particular, by following the exact same arguments to that of the proof of \cref{eq:a}, we obtain the desired estimate. More specifically, each term in \cref{eq:dTN-} may be bounded above by similar terms from \cref{eq:dTN}, but with additional factors of $e^{-\theta N}$ coming from the difference between $F$ and $F_N$ (and their derivatives) and $\rho$ and $\rho_N$.
\end{proof}

Moreover, we have
\begin{align}
    D^2T(\widehat{v})[\widehat{w_1},\widehat{w_2}]
    &=
    \Braket{ - D^3 F \Big[\int \phi \otimes \phi \, (V - V_{\mathrm{eff}}[\rho]), w_1, w_2 \Big] , \nabla_I v} \nonumber\\
    &\quad +2\Braket{ 
    - D^2 F 
    \big[ 
        w_1, 
        w_2 
    \big], 
    \nabla_I v
    } \nonumber\\
    &\quad +
    \int 
        \Braket{
            D^2F [ 
            \phi\otimes\phi, w_1
        ] , \nabla_I v
        } \delta V_{\mathrm{eff}}[\rho] \braket{
                DF \phi\otimes \phi, w_2
            }
         \nonumber \\
    &\quad +
    \int \Braket{
        D^2F [ 
            \phi\otimes\phi, w_2
        ], \nabla_I v
    }  \delta V_{\mathrm{eff}}[\rho] \braket{
                DF \phi\otimes \phi, w_1
            }
        \nonumber \\
    &\quad +
    \int \Braket{DF \phi \otimes \phi, \nabla_I v
    } \delta V_{\mathrm{eff}}[\rho] \braket{
        D^2F[w_1,w_2],\phi\otimes\phi
    } \nonumber \\
    &\quad +
    \int \Braket{DF \phi \otimes \phi , \nabla_I v
    }  \delta^2 V_{\mathrm{eff}}[\rho] \Big[ \braket{
        DF \phi\otimes\phi, w_1
    }, 
    \braket{
        DF \phi\otimes\phi, w_2
    }
    \Big] 
\end{align}
where $DF, D^2F, D^3F$ are all evaluated at $H_0 + v$ and $\delta^2 V_{\mathrm{eff}}[\rho] [f,g] \coloneqq \ep_{\mathrm{xc}}^{\prime\prime\prime}(\rho) f g$. Similarly, we have the same expression for $T_N$ when $F$ is replaced with $F_N$ and $\rho$ with $\rho_N$. 
\begin{proof}[Proof of \oldcref{eq:b}]
    Similarly to \cref{eq:DFphi(x)}, we apply \cref{cor:DjO} to obtain
    \begin{align}
        \Big| \Braket{
            D^2F[\phi\otimes\phi(x), w], ( \nabla_I v )_{mc}
        } \Big|
        &\lesssim \|\widehat{w}\|_{\ell^\infty}
        \sum_{\ell k} 
        e^{-\frac{1}{4} \min\{\eta_{\mathrm{ct}},\eta_{\phi},\eta_{\vb}\} [|x - \bm r_\ell| + r_{\ell m} + r_{mk} + |\bm r_k - x|]} \nonumber
        \\
        &\lesssim \|\widehat{w}\|_{\ell^\infty} e^{-\frac{1}{4} \min\{\eta_{\mathrm{ct}},\eta_{\phi},\eta_{\vb}\}|x - \bm r_m|} \label{eq:DFphi(x)-2}\\
        \Big| \Braket{
            D^2F[w_1, w_2], \phi(x)\otimes\phi(x)
        } \Big|
        &\lesssim 
        \|\widehat{w}_1\|_{\ell^\infty}
        \|\widehat{w}_2\|_{\ell^\infty} 
        \sum_{\ell k} e^{-\frac{1}{4}\min\{\eta_{\mathrm{ct}}, \eta_\phi\} r_{\ell k} }
        e^{-\eta_{\phi} [|x - \bm r_\ell| + |x - \bm r_{k}|]} \nonumber \\
        &\lesssim \|\widehat{w}_1\|_{\ell^\infty}
        \|\widehat{w}_2\|_{\ell^\infty} 
        e^{-\eta_\phi \min_\ell |x - \bm r_\ell|}.
        \label{eq:DFphi(x)-3}
    \end{align}
    
    Again, we apply \cref{cor:DjO}, together with \cref{eq:V-bound,eq:phi*|.|-1,eq:Veff[rho]} and \cref{eq:DFphi(x)}, \cref{eq:DFphi(x)-2}, \cref{eq:DFphi(x)-3}, we obtain
    \begin{align}
        \left| D^2T(\widehat{v})[\widehat{w}_1,\widehat{w}_2]_{mc} \right| 
        &\lesssim \sum_{\ell k} 
        \Big( 
            \|\widehat{v}\|_{\ell^\infty} + \sum_{p} \frac{1}{r_{\ell p} + r_{pk}}
            +1
        \Big)
        \|\widehat{w}_1 \|_{\ell^\infty} 
        \|\widehat{w}_2 \|_{\ell^\infty}
        e^{-\frac{1}{2}\min\{\eta_{\phi},\eta_{\vb}\} [r_{\ell m} + r_{mk}]} \nonumber\\
        &\quad + \|\widehat{w}_1\|_{\ell^\infty} 
        \|\widehat{w}_2\|_{\ell^\infty}
        \bigg\{ 
        \iint 
        \frac
            {e^{-\frac{1}{4} \eta_{\mathrm{ct},\phi,\vb} |x - \bm r_m|} 
            e^{-\eta_\phi \min_{\ell} |y-\bm r_\ell|}}
            {|x - y|} 
        \mathrm{d}x \mathrm{d}y \nonumber\\
        &\qquad\quad + 
        \int 
            e^{-\frac{1}{4} \eta_{\mathrm{ct},\phi,\vb} |x - \bm r_m| }
            e^{-\eta_\phi \min_{\ell} |x-\bm r_\ell|} 
        \big[ 
            |\ep_{\mathrm{xc}}^{\prime\prime}\big(\rho(x)\big)| + |\ep_{\mathrm{xc}}^{\prime\prime\prime}\big(\rho(x)\big)|
        \big]
        \mathrm{d}x \bigg\} \nonumber\\
        &\lesssim M \|\widehat{w}_1\|_{\ell^\infty} \|\widehat{w}_2\|_{\ell^\infty},\nonumber
    \end{align}
    which concludes the proof.
\end{proof}

\appendix

\section{Notation} 
\label{sec:notation}
Here we summarise the key notation: 
\begin{itemize}
    \item $\bm r = \{\bm r_\ell\}_{\ell =1}^M \subset \mathbb R^d$ : atomic positions,
    
    \item $\bm{r}_{\ell k} \coloneqq \bm{r}_k - \bm{r}_\ell$ and $r_{\ell k} \coloneqq |\bm{r}_{\ell k}|$ : relative atomic positions, 
   
    \item $Z = \{Z_\ell\}_{\ell = 1}^M$ : atomic species 
    
    \item $\delta_{ij}$ : Kronecker delta ($\delta_{ij} = 0$ for $i\not=j$ and $\delta_{ii} = 1$), 
    
    \item $\mathrm{Id}_n$ : $n \times n$ identity matrix, 
    
    \item $|\,\cdot\,|$ : absolute value on $\mathbb R^d$ or $\mathbb C$,
    
    \item $|\,\cdot\,|$ : Frobenius matrix norm on $\mathbb R^{n\times n}$,

    \item $a \cdot b = \sum_i a_i b_i$ : dot product of real vectors,
    
    \item $A:B = \sum_{ij} A_{ij} B_{ij}$ : Frobenius inner product of real matrices,
    
    \item $A^\mathrm{T}$ : transpose of the matrix $A$,
    
    \item $\mathrm{Tr}$ : trace of an operator,
    
    \item $f \sim g$ as $x \to x_0 \in \mathbb R\cup\{\pm\infty\}$ or $\mathbb C \cup \{\infty\}$ : there exists an open neighbourhood $N$ of $x_0$ and positive constants $c_1,c_2 > 0$ such that $c_1 g(x) \leq f(x) \leq c_2 g(x)$ for all $x \in N$,
    
    \item $C$ : generic positive constant that may change from one line to the next, independent of system size $M$,
    
    \item $f \lesssim g$ : $f \leq C g$ for some generic positive constant, independent of system size $M$,
    
    \item $\mathbb N_0 = \{ 0, 1, 2, \dots \}$ : Natural numbers including zero,
    
    \item $\delta$ : Dirac delta, distribution satisfying $\Braket{\delta, f} = f(0)$,

    \item $\|f\|_{L^\infty(X)} \coloneqq \sup_{x\in X} |f(x)|$ : sup-norm of $f$ on $X$,
    
    \item $\mathrm{dist}(z, A) \coloneqq \inf_{a\in A} |z - a|$ : distance between $z\in\mathbb C$ and the set $A\subset \mathbb C$,
    
    \item $a + b S \coloneqq \{a + bs \colon s \in S\}$ : Minkowski addition,

    \item $\dot\bigcup_i A_i$ : union of pairwise disjoint sets $A_i$, 
    
    \item $[\psi]_\ell$ : the $\ell^\text{th}$ entry of the vector $\psi$,
    
    \item $\|\psi\|_{\ell^2}\coloneqq \left(
        \sum_{k} |[\psi]_k|^2
    \right)^{1/2}$ : $\ell^2$-norm of $\psi$,
    
    \item $\mathrm{tr}\,A \coloneqq \sum_\ell A_{\ell\ell}$ : trace of matrix $A$,
    
    \item $\|A\|_\mathrm{max} \coloneqq \max_{\ell, k} |A_{\ell k}|$ : max-norm of the matrix $A$,
    
    \item $\sigma(T)$ : the spectrum of the operator $T$,
    
    \item $\sigma_\mathrm{disc}(T) \subset \sigma(T)$ : isolated eigenvalues of finite multiplicity,
    
    \item $\sigma_\mathrm{ess}(T) \coloneqq \sigma(T) \setminus \sigma_\mathrm{disc}(T)$ : essential spectrum,
    
    \item $\|T\|_{X \to Y} \coloneqq \sup_{x \in X, \|x\|_X = 1} \|Tx\|_Y$ : operator norm of $T \colon X \to Y$,
    
    \item $\nabla v$ : Jacobian of $v \colon \mathbb R^\Lambda \to \mathbb R^\Lambda$,
    
    \item $[a, b] \coloneqq \{ (1-t) a + t b \colon t \in [0,1] \}$ : closed interval between $a,b \in \mathbb R^d$ or $a,b \in \mathbb C$,
    
    \item $\int_a^b \coloneqq \int_{[a,b]}$ : integral over the interval $[a,b]$ for $a,b \in \mathbb C$,
    
    \item $\mathrm{len}(\mathscr C)$ : length of the simple closed contour $\mathscr C$,

    \item $\mathrm{supp} \, \nu$ : support of the measure $\nu$, set of all $x$ for which every open neighbourhood of $x$ has non-zero measure,
    
    \item $\mathrm{conv}\, A \coloneqq \{ t a + (1-t) b \colon a,b \in A, t \in [0,1] \}$ : convex hull of $A$,

    \item $S_n = \{ \sigma\colon\{1,\dots,n\} \to \{1,\dots,n\} \colon \sigma \text{ bijective}\}$ : symmetric group of order $n$.
\end{itemize}

\section{Machine Learned Parameterisation \texorpdfstring{$E_\ell$}{site energy} and \texorpdfstring{$\rho_\ell$}{electron density}}
\label{sec:appendix-ml-param}

In this appendix, we describe the parameterization of $E_{N,\ell}$ and $\rho_{N,\ell}$ as body-ordered functions of $\bm{u}_k$ in more detail. The construction is a natural variation of the MACE architecture~\cite{MACE2022}, and related to the equivariant ACE model~\cite{Drautz:2020}. 
An in-depth study of our proposed architecture and its generalizations goes beyond the scope of this work; our intention is only to demonstrate the significant potential of a self-consistent ML interatomic potential model. 

In the following, atoms are indexed by $i$ and $j$, rather than $\ell$ and $k$ to avoid clashes with standard notation in equivariant networks.

\subsection{Construction of Body-Ordered Atom-centered Features}

For atom $i$, let the neighbourhood of atom $i$, denoted $\mathcal{N}(i)$, be the set atoms which are within a fixed cutoff distance $r_{\mathrm{cut}}$ from atom $i$. The environment around atom $i$ is encoded by constructing a set of learnable features $\bm{h}_i$ as functions of all neighbouring atoms.

These features are denoted by $h_{i,kLM}$, where $k = 1,...,K$ indexes independent channels, and $(L,M)$ is an angular momentum tuple. The features are equivariant with respect to rotations of the structure: if the underlying set of atomic positions $\{\bm{r}_i\}$ is rotated according to $\{\bm{r}_i\} \to \{R\bm{r}_i\}$ for some rotation $R$, then the components of $\bm{h}_i$ transform according to the Wigner matrix corresponding to $R$:
\begin{align*}
    \bm{h}_{LM}^i \rightarrow \sum_{\mu^\prime} D(R)^L_{MM^\prime}\bm{h}_{LM^\prime}^i.
\end{align*}
The features $h_{i,kLM}$ are constructed as follows: firstly, the chemical element of each atom, $z_i$, is encoded by mapping each distinct element to a vector of length $K$, via a set of weights $W$:
\begin{align*}
    a_{i,k} = \sum_{z} W_{kz} \delta_{zz_i}.
\end{align*}
The electric potential descriptors $\widehat{v}_{i,nlm}$ of atom $i$ are then introduced to create initial features. This is done by first linearly mixing the radial channels to create a vector of length $K$, followed by multiplying with the species vector $a_{i,k}$:
\begin{align*}
    \tilde{v}_{i,klm} &= \sum_n W_{lkn} \widehat{v}_{i,nlm} + \delta_{l,0} c_k\\
    h^{0}_{i,klm} &= \sum_{k'} W_{lkk'} ( a_{i,k'} \tilde{v}_{i,k'lm} ).
\end{align*}

Following this, we describe the identity and relative position of each neighbour $j$ of atom $i$ by calculating the \textit{one-particle basis} for the pair of atoms:
\begin{align*}
    \phi_{ij,k\eta l_3 m_3} = \sum_{l_1l_2m_1m_2} C^{l_3m_3}_{\eta l_1m_1 l_2m_2} R_{k \eta l_1 l_2 l_3}(r_{ij}) Y_{l_1}^{m_1}(\hat{\bm{r}}_{ij}) h^{0}_{j,kl_2m_2}
\end{align*}
where $\bm r_{ij} \coloneqq \bm r_j - \bm r_i = r_{ij} \hat{\bm r}_{ij}$ with $|\hat{\bm r}_{ij}| = 1$ and $r_{ij}\geq 0$, $Y_{l}^{m}$ is a real spherical harmonic and each $R_{\eta l_1m_1 l_2m_2}(r)$ is a learnable function of the distance $r$. This operation is inherited directly from the MACE MLIP framework, and combines the length and direction of the vector $\bm{r}_{ij}$ with equivariant features $h^{0}_{j,kl_2m_2}$ on the neighbouring atom, while preserving the equivariance of the output. The index $\eta$ appears since there may be more than one way to combine a given pair of vectors with angular momentum $(l_1, l_2)$ to get an object with angular momentum $l_3$. Further discussion can be found in \cite{MACE2022, Drautz:2020, KovacsMACEeval}.

The edge descriptors $\phi_{ij,k\eta lm}$ are then combined to form many body, atom centred descriptors.  Firstly, atomic features are created by summing over the neighbours of each atom, and applying a linear map to mix between independent channels:
\begin{align*}
    A_{i,klm} = \sum_{\tilde{k},\eta} W_{k\eta \tilde{k} l} \sum_{j \in \mathcal{N}(i)} \phi_{ij,\tilde{k}\eta lm} + \sum_{k'} W_{lkk'} h^{0}_{i,k' lm}.
\end{align*}
The initial features of atom $i$ are also added into $A_{i,klm}$, since otherwise no information about atom $i$'s electric potential would be present in $A_{i,klm}$. 

Secondly, products are taken between atomic features $A_{i,klm}$ and themselves to form many body descriptors. Specifically, products of up to $\nu$ copies of $A_{i,klm}$ are formed, while controlling the behaviour of the output under rotations:
\begin{align*}
    B^\nu_{i,\eta_\nu kLM} = \sum_{\bm{lm}} \mathcal{C}^{LM}_{\eta_\nu \bm{lm}} \prod_{\xi=1}^\nu A_{i,kl_\xi m_{\xi}}
\end{align*}
where $\mathcal{C}^{LM}_{\eta_\nu \bm{lm}}$ is the `generalised' Clebsch-Gordan coefficient, which combines a set of $\nu$ copies of $A_{i,klm}$ and generates an output which transforms in the same way as the $Y_L^M$ spherical harmonic~\cite{MACE2022,Drautz:2020}. The number of terms in the product is the number of neighbouring atoms which the features $B^\nu$ simultaneously depend on in a non-trivial way. Therefore, $\nu$ controls the \textit{body-order} of the features, with $B^\nu$ having body-order $\nu + 1$, with the $1$ coming from the the central atom.

The index $\eta_\nu$ appears because for a given set of $\{(l_\xi, m_\xi)\}_\xi$, there can be multiple ways to form a product with a given behaviour under rotations. As with the one-particle basis, this many body product has been discussed and analysed previously \cite{MACE2022, KovacsMACEeval}.

The final features describing the geometry around each atom are a linear mapping of these many body objects, summing over product order $\nu$:
\begin{align*}
    h_{i,kLM} = \sum_{\tilde{k}} W_{kL,\tilde{k}} \ \sum_\nu \sum_{\eta_\nu} \tilde{W}^\nu_{z_i \eta_\nu \tilde{k} L} B^\nu_{i,\eta_\nu \tilde{k}LM}
\end{align*}
where $W$ and $\tilde{W}$ are again learnable weights. In a MACE model, the above process is repeated, with the features $h_{i,kLM}$ replacing the chemical elements $a_{i,k}$, to iteratively construct a richer description of the geometry. In this study we do not iterate, which is equivalent to using a single ``layer'' MACE model. The motivation for using only single layer is that when using multiple layers, the features on atom $i$ can depend on more distant atoms since each layer distributes information locally. In this case, we would rather that information about distant atoms is communicated only through the electric potential.

\subsection{Parameterising \texorpdfstring{$E$}{the energy}}

Given features $h_{i,kLM}$, the energy $E_i$ is predicted by applying a on-layer multi-layer perceptron (MLP) to the invariant parts of $h_{i,kLM}$. Explicitly:
\begin{align*}
    E_i(\bm{r},\hat{v}_i) = \mathcal{R}(\{h_{i,k00}\}_k) = \sum_k W_k \ \psi \left( \sum_k' \tilde{W}_{kk'} h_{i,k00}\right).
\end{align*}
Where $\psi$ is an activation function, which in this case was a sigmoid linear unit (SiLU): $\psi(x) = \frac{x}{1 + e^{-x}}$. If $\nu$ is fixed during feature construction, $E_i$ will have body-order $\nu + 1$.

\subsection{Parameterising \texorpdfstring{$\rho$}{the density}}

The electron density is expanded in an atom centred basis, and the model must predict the coefficients of this expansion. Specifically, let $\rho_{i,lm}(\bm{r}, \hat{v})$ denote the $(l,m)$ component of the density expansion on atom $i$. This is parameterised as a linear map of the atomic features:
\begin{align*}
    \rho_{i,lm}(\bm{r}, \hat{v}) = \sum_k W_{lk} h_{i,klm},
\end{align*} 
which has body-order $\nu + 1$.

\subsection{Hyperparameters}

For the demonstrations in this study, all models used an embedding size of $K=128$, a neighbourhood cutoff of 4\AA, and atomic features $h_{i,kLM}$ were constructed for $L\leq 2$. 

\section{Charge Density Partitioning in FHI-aims}
\label{sec:appendix-fhi-aims}

In FHI-aims, the electron density is partitioned onto atom centered contributions via partitioning functions $p_\ell(\bm{r})$ which satisfy $\sum_\ell p_{\ell}(\bm{r})=1$. This allows one to coarse grain the electron density $n(\bm{r})$ into atomic multipole moments:
\begin{align*}
    \rho_{\ell,\lambda\mu} = \sqrt{\frac{4\pi}{2\lambda+1}}\int |\bm{r}-\bm{r}_\ell|^{\lambda} Y_{\lambda\mu}(\bm{r} - \bm{r}_\ell) \cdot p_{\ell}(\bm{r}) n(\bm{r})  \mathrm{d}\bm{r}.
\end{align*}
FHI-aims uses these atomic multipole moments to represent the contribution to the Hartree potential from atom $\ell$ at points far from atom $\ell$. The functions $p_\ell$ are determined by an atom-centered weight function $g_\ell(\bm{r})$ whereby $p_\ell = g_\ell / (\sum_\ell g_\ell)$. In this work, the default weighting function of FHI-aims was used which is described in \cite{STRATMANN1996213}.

\bibliography{refs.bib}
\bibliographystyle{siam}

\immediate\closeout\tempfile

\ifDRAFT{
\begin{align}
    \label{eq:}
    \tag{\textcolor{magenta}{missing ref.}}
\end{align}
\section*{Comments}
\input{lists}}

\end{document}